\documentclass[sigconf, nonacm]{acmart}

\newcommand\vldbdoi{XX.XX/XXX.XX}

\newcommand\vldbavailabilityurl{URL_TO_YOUR_ARTIFACTS}

\usepackage{amsmath}
\usepackage{algorithm}
\usepackage[noend]{algpseudocode} 
\usepackage{tabularx}
\usepackage{subcaption}

\definecolor{nischalcolor}{RGB}{220,68,5}
\newcommand{\nischal}[1]{\textcolor{nischalcolor}{#1}}
\newcommand{\ara}[1]{\textcolor{red}{#1}}

\makeatletter
\renewcommand\paragraph{%
  \@startsection{paragraph}{4}{\z@}%
    {1.25ex \@plus1ex \@minus.1ex}
    {0em}
    {\normalfont\normalsize\bfseries}
}
\makeatother

\newcommand{\squishlisttwo}{
   \begin{list}{$\bullet$}
       { \setlength{\itemsep}{0pt}      \setlength{\parsep}{3pt}
       \setlength{\topsep}{3pt}       \setlength{\partopsep}{0pt}
      \setlength{\leftmargin}{1.5em} \setlength{\labelwidth}{1em}
       \setlength{\labelsep}{0.5em} } }

\newcommand{\squishlistthree}{
   \begin{list}{$\bullet$}
   { \setlength{\itemsep}{0pt}    \setlength{\parsep}{0pt}
   \setlength{\topsep}{0pt}     \setlength{\partopsep}{0pt}
   \setlength{\leftmargin}{2em} \setlength{\labelwidth}{1.5em}
   \setlength{\labelsep}{0.5em} } }
\newcommand{\squishend}{
\end{list}  }

\usepackage{amsthm}
\usepackage{mdframed}

\newtheoremstyle{problemstyle}
  {1em}{1em}{\itshape}{}{\bfseries}{.}{.5em}{}

\theoremstyle{problemstyle}
\newtheorem{innerproblem}{Problem}

\newenvironment{problem}
  {\vspace{0.8em}
   \begin{mdframed}[linewidth=0.8pt,roundcorner=4pt,
                    backgroundcolor=white,
                    innerleftmargin=5pt,innerrightmargin=5pt,
                    innertopmargin=3pt,innerbottommargin=3pt]
   \begin{innerproblem}}
  {\end{innerproblem}\end{mdframed}\vspace{0.8em}} 

\begin{document}
\title{Querying with Conflicts of Interest}

\author{Nischal Aryal}
\affiliation{%
  \institution{Oregon State University}
}
\email{aryaln@oregonstate.edu}

\author{Arash Termehchy}
\affiliation{%
  \institution{Oregon State University}
}
\email{termehca@oregonstate.edu}

\author{Marianne Winslett}
\affiliation{%
  \institution{University of Illinois}
}
\email{winslett@illinois.edu}

\begin{abstract}
Conflicts of interest often arise between data sources and their users regarding how the users' information needs should be interpreted by the data source. For example, an online product search might be biased towards presenting certain products higher than in its list of results to improve its revenue, which may not follow the user's desired ranking expressed in their query. 
The research community has proposed schemes for data systems to implement to ensure unbiased results. However, data systems and services usually have little or no incentive to implement these measures, e.g., these biases often increase their profits.
In this paper, we propose a novel formal framework for querying in settings where the data source has incentives to return biased answers intentionally due to the conflict of interest between the user and the data source. 
We propose efficient algorithms to detect whether it is possible for users to extract relevant information from biased data sources. We propose methods to detect biased information in the results of a query efficiently.
We also propose algorithms to reformulate input queries to increase the amount of relevant information in the returned results over biased data sources.
Using experiments on real-world datasets, we show that our algorithms are efficient and return relevant information over large data.
\end{abstract}

\maketitle



\section{Introduction}

\label{section:introduction}
Due to financial, social, and political factors, the incentives of data source owners (data sources for short) and their users are often not perfectly aligned.
%
%
For example, data sources that provide information about products may benefit financially from convincing users to buy certain products. In fact, Amazon, Google, and Apple have been accused of returning results that are biased in this manner \cite{Mattioli2019, Fussell2019, Mickle2019, Nicas2019}.
As an example, Apple allegedly reworked its ranking algorithm so that Apple Music was ranked first and Spotify 23rd in the App Store when a user searched for music apps \cite{Nicas2019}. Similarly, a search for podcast apps returned Apple Podcasts in the first place, followed by 12 irrelevant Apple apps, and then finally podcast apps from other publishers. A search for ``TV'' in 2018-9 would return Apple apps at the top, and Netflix ranked in the 100s \cite{Nicas2019}. As experts commented at the time: “I find it hard to believe that organically there are certain Apple apps that rank better than higher-reviewed, more downloaded competitors,'' and ``there’s just a ton to be gained commercially by dominating'' search results \cite{Nicas2019}. 

This conflict of interest is inherent at some data sources, such as shopping websites, which charge sellers a `referral fee' of considerable fraction of the price of items sold on its site~\cite{AmazonReferralFees}, so that it profits more when users buy higher-priced versions of a given product. Cost-conscious users who re-sort the results by price may have to wade through many undesirable results before reaching the first good match.


\begin{example}
\label{example:headsets}
On the day of this writing, a search for `headset' on a well-known shopping website ({\bf SW} for short) returns headphones priced \$24.99 to \$39.95 in its top `Results' section,  and all items on the remainder of the page are indeed headphones. When re-sorted by price, the top hit on the first page is a charging cord, followed by earbuds, microphones, cables, and one \$1.98 headset. Only on the seventh page of results do headphones start to outnumber other accessories. 
\end{example}

Conflicts of interest may also arise from non-financial considerations. 
To maximize user engagement, some social media sites may promote extreme content, which is not necessarily in anyone else's best interest.
More generally, data sources may return information that is biased toward or against controversial topics and causes, political groups, or demographic groups, e.g., not returning or giving a low rank to information related to these \cite{10.1145/3533379,10.1145/3533380}. Perhaps the best known recent example is the removal of all topics related to climate change from US government websites \cite{Zraik2025}.
Data sources that belong to foreign adversaries can also promote content that serves the objectives of the data source owner.

In the aforementioned settings, data sources interpret the information in user queries to take actions that do not exactly meet users' preferences and objectives and, in fact, might largely deviate from what users deem as desirable.
Current proposals to combat undesirable data source actions usually require the data source to follow and implement certain protocols \cite{10.1145/3488717}. 
For example, researchers have proposed methods to remove bias and ensure fairness for data sources to implement \cite{10.1145/3533379,10.1145/3533380,10.1145/3132847.3132938,10.1145/3366424.3380048,tabibian20a,mathioudakis2020affirmative}.
However, data sources often do not have any incentive to implement these proposals because these are usually against their objectives and business models. 
Some might also argue that these restrictions limit the freedom to disseminate information or conduct business and may not be ethical or have long-term adverse impacts on our society.

In such a situation, the data source and users do share some interests: the data source would like to satisfy the information needs behind users' queries well enough to keep them engaged. 
Users can leverage this partial common interest by modifying their queries so that the data source interpretations of the submitted queries are those desired by users.

\begin{example}
\label{example:bias-query}
Consider again the user in Example~\ref{example:headsets} 
who looks for affordable headphones on SW. 
If the results are biased towards relatively more expensive products, she can add a price cap to her query.
For example, when we reissued the SW headset query with a \$19 maximum price, the {\it second} result returned cost \$21, but the others on the first page cost less than \$19. 
Thus, SW does not take price caps literally, but they do influence the results. 
\end{example}

But two can play at this game. 
The data source may be aware of the fact that users know about its undesirable actions, e.g., users have observed undesirable biases in their previous experiences of using the data source.
Thus, it may reason that users aim at guiding its interpretations and results towards the users' desired results by modifying their original queries.
Since it is not in the data source's best interest to accept the modified query as precisely the user's true information need, the data source may try to recover the original intent behind the modified query.
For instance, it might also use its prior belief about users' intents, e.g., gained from external sources or previous interactions with users, to interpret the true intent behind the modified query.
Consequently, users may reason about the data source reasoning process and try to modify their queries more drastically to influence the data source interpretation. 
The user and data source reason and make decisions based on possible reasoning and strategies of each other iteratively to find the best method to formulate and interpret queries.

\begin{example}
\label{example:bias-reason}
In Example~\ref{example:bias-query}, given that the user has added a degree of bias in the modified query, the data source may reason that the user has overstated their preference for headphones of a particular price, just to offset the data source's bias.
Hence, it might avoid placing such headphones in the top positions, as we saw when SW still ranked a \$21 headset second even after we limited the price to \$19 in Example \ref{example:headsets}.
%
The user might counteract this by further reducing her submitted maximum price for a headset.
\end{example}

In some of the aforementioned settings, a brute-force approach for the user to get her desired information without allowing the data source to return undesirable results is to submit a query that retrieves most or all of the data and post-process that to extract her desired information. 
In extreme, in Example~\ref{example:bias-reason}, the user who searches the data source might submit a query that returns all products. 
This is not usually practical, because of the enormous amount of data managed by these data sources.
Furthermore, data sources might not return such a large dataset as it is against their objective, e.g., they might ask users to resubmit a query that returns a relatively small subset of the data.
In addition, data sources usually limit the rate of retrieved information, making it difficult for users to download a very large dataset using multiple queries.
Moreover, users often lack the computational resources, time, and skills to store and post-process a large dataset. 

There are four important problems in this monopolistic setting. First, we would like to know under what conditions the aforementioned recursive reasoning of the user and the data source will converge to a stable state in which the user finds a query that gives her useful information. 
The user might be able to leverage the fact that the data sources would like to keep the user engaged, i.e., partial common interest, to convince the data source to return her desired results to some degree. 
Second, we would like to inform users about the untrustworthy information in the results returned by the data source, e.g., whether the returned top-k results are in fact the top-$k$ results for the query on the entire data. 
It is challenging as the data source may not show all relevant information to users so they can check the validity of the information.
Third, given a user query, we would like to know how much reliable information the user can get from the data source.
Finally, given a query, we would like to find another query that can convince the data source to return more relevant and reliable information than the original query.

In this paper, we propose a framework that models strategic information sharing between a user and a data source when there are conflicts of interest between them.
The framework models the user and the data source as agents with different objectives and uses the current body of research in game theory on how agents with different goals establish informative communication \cite{10.2307/1913390,CHAKRABORTY200770}.
Using this framework, we propose efficient algorithms that solve the aforementioned problems efficiently.
Our contributions are as follows:
\squishlisttwo
    \item We propose a formal framework that models the communication of the user and the data source with conflicts of interest using queries that return ranked lists of results (Section~\ref{sec:framework}). Our framework defines the stable states (fixed points) in the recursive reasoning of the users, e.g., client or middleware systems, and data source with conflict of interest for answering a query.
    
    \item If the conflict of interest between the user and the data source is too high, the data source may ignore the user query and will return its own preferred ranking regardless of the preference expressed in the input query. We propose efficient algorithms that check whether the communication has stable states in which the input query can influence the decision of the data source about returning the answers (Section~\ref{sec:possibility-of-informative-interaction}).
    
    \item We propose an efficient algorithm that detects untrustworthy results in the answer to a query  (Section~\ref{sec:possible-informative-equilibria}).
    
    \item We prove that the problem of finding the query that returns the maximum amount of information for an original user intent is NP-hard.
    Hence, we propose efficient algorithms to solve this problem for some general classes of utility functions (Section~\ref{sec:maximum}).
    
    \item We conduct extensive empirical studies using a wide variety of real-world datasets. Our empirical studies indicate that our algorithms scale to large data (Section~\ref{sec:experiments}). 
\end{list}

\section{Framework}
\label{sec:framework}

\subsection{Data Model \& Query Language}
\label{sec:data-model-query-language}

A schema $\mathcal{R}$ is a set of relation symbols $R_i$, $1\leq i \leq n$, where each $R_i$ has a fixed arity $a_i$.  
Each relation symbol $R_i$ is associated with a set of attribute symbols $(A_{i,j}$, $1 \leq j \leq a_i$).
To simplify our exposition, we define the domain of possible values for each attribute $\mathbf{dom}\bigl(A_{i,j}\bigr)$ in each relation as a subset of or equal to a single set $\mathbf{dom}$.
Each tuple of the relation symbol $R_i$ is then an element of the Cartesian product
$
  \prod_{j=1}^{z_i}\mathbf{dom}\bigl(A_{i,j}\bigr)
$. A relation of $R_i$ is a finite set of such tuples. 
An instance of $\mathcal{R}$ is a set of relations for its relation symbols.



Let $\mathcal{Q}$ denote a query language.  
An $a$-ary query $q \in \mathcal{Q}$, $a > 0$, in schema $\mathcal{R}$ is a function that maps each instance $I$ of $\mathcal{R}$ to a totally ordered list where each member of the list belongs to $\bf{dom}^{z}$. 
We denote the result of the query $q$ over $I$ by $q(I)$.
Let $elm(l)$ denote the set of tuples in a totally ordered list $l$. For $e,e' \in elm(q(I))$, we use $e \prec e'$ ($e \sim e'$) to show that $e$ precedes $e'$ ($e$ and $e'$ share the same position) in $q(I)$. 
For brevity, by abuse of notation, we show $e \in elm(q(I))$ as $e \in q(I)$.
A pairwise order $e \prec e'$ is in $q(I)$ if the order is present in the result of $q(I)$. For brevity, we show $e \prec e'$ is in $q(I)$ as $e \prec e' \in q(I)$.

Users can express queries in our framework using query languages, such as SQL, that return a set or a ranked list of results, e.g., using ranking constructs such as the {\it Order By Case} clause~\cite{casemysql, casesqlserver, oracle,illyas2008topksurvey,illyas2008topksurvey,10.5555/645926.671875}.
They can also use proposed constructs to express preference relationships in query languages \cite{KIELING2002990,10.5555/876881.879661,10.1145/2723372.2723736,illyas2008topksurvey}. 



\begin{table}[h]
  \vspace{-3mm}
  \centering
  \caption{\texttt{Headphones}}
  \vspace{-3mm}
  \label{tab:headphones}
  \small
  \setlength{\tabcolsep}{3pt}
  \renewcommand{\arraystretch}{0.95}
  \begin{tabular}{r l l r r c}
    \toprule
    \textbf{id} & \textbf{model} & \textbf{brand} & \textbf{rating} & \textbf{price} & \textbf{best seller} \\
    \midrule
    $e_1$ & Tune 510BT & JBL & 4.5 & 25.20 & Y \\
    $e_2$ & Q20i & Soundcore & 4.6 & 39.99 & N \\
    $e_3$ & A10 Pro & Sisism & 4.8 & 21.99 & N \\
    $e_4$ & Sports & Haoyuyan & 3.9 & 299.99 & N \\
    \bottomrule
  \end{tabular}
  \vspace{-3mm}
\end{table}

\subsection{Intents, Queries, and Results}
\label{sec:actions}
\paragraph{User's Intents and Queries. }
We call the user's true information need, which is expressed in $\mathcal{Q}$, their {\it intent}. The data source does {\it not} know the user's intent. 
The user picks their intents from a prior distribution $\pi$, which is known to the data source. 
Given intent $\tau$, the user submits the query $q \in \mathcal{Q}$ to the data source.
The query $q$ may {\it not} be equivalent to $\tau$, e.g., to hide the user's intent $\tau$ to offset the data source bias in its returned results for $q$.

\begin{example}
\label{example:user-intent}
Consider a user who queries SW to search for a headphone. A user's intent $\tau$ might be to view headphones ordered by customer rating: \texttt{SELECT DISTINCT brand, model, rating, price
FROM Products
WHERE type = `headphones'
ORDER BY rating DESC.} 
To express this in SW's real-world interface, the user would first type `headphones' into the search bar, then ask for them to be sorted by rating ($\textbf{q}_{\textbf{rating}}$). As of this writing, when we submit $\textbf{q}_{\textbf{rating}}$ to SW, the first page of interpretation $\beta$ results includes 16 models representing a variety of brands and price points, as partially shown in Table 1. The query panel on the results page says that it is showing only products with ratings of 4 and up, though the Haoyuyan brand is rated 3.9.
\end{example}

\paragraph{Data Source Interpretation. }
The data source interprets the query $q \in \mathcal{Q}$ submitted by the user to obtain its own desired query $\beta$.
If there is a conflict of interest between the user and the data source, the interpreted query $\beta$ may {\it not} be equivalent to $q$, e.g., if the data source is biased towards ranking certain tuples high.


\subsection{Strategies}
\label{sec:strategies}
\paragraph{User strategy. }
The user strategy $P^r(q | \tau)$ is a (stochastic) mapping from $\mathcal{Q}$ to $\mathcal{Q}$.
Given the intent $\tau$, 
$P^r$ returns queries that the user submits to the data source.


\begin{example}Continuing with Example~\ref{example:user-intent}, to offset SW's bias, we submitted a different query
$\textbf{q'}_{\textbf{rating}}$ that asked for non-major brands (non-JBL)  sorted by rating. This resulted in 4.8 and 4.9-rated products from Skullcandy being listed 12th and 14th on the first page,
even though all the other products on that page were only rated 4.7. Similarly, the first item on the second page was a 4.8 Skullcandy headset that was not listed on the previous page. However, when we searched for `headphones' with SW's default search order, no Skullcandy products appeared in the first 10 pages of results. This inspires us to imagine a situation where a data source is biased against products from a brand $B$ and systematically demotes them in its interpretation of the \texttt{ORDER BY clause}, e.g., \texttt{ORDER BY CASE WHEN brand = `B' THEN rating - .2 ELSE rating END DESC}. In such a situation, if the user knows about this bias and is quite fond of products from $B$, a better submitted query for her intent is $\textbf{q''}_{\textbf{rating}}$ \texttt{ORDER BY CASE WHEN brand = `B' THEN .2 + rating ELSE rating END DESC}. This lists all $B$ products first, in rating order, followed by the other brands' models  in descending order of rating.  In the SW interface, the user could approximate this by first using the search bar to query for ``headphones'', sorting by rating, and then checking the box to view Skullcandy headphones first. When we did this, the first page of results were all Skullcandy products, though they were not actually presented in rating order.
\end{example}


\noindent{\bf Data source strategy. }
The {\it data source strategy} $P^s(\beta | q)$ is a (stochastic) mapping from $\mathcal{Q}$ to $\mathcal{Q}$.
The data source uses $P^s$ to interpret the intent behind the submitted query $q$.
For example, if the data source is biased, given $q$, it computes a biased query $\beta$ and returns the result of $\beta$ on its database instance.

\subsection{Interaction Setting}
\label{sec:utility-functions}
\paragraph{Utility Functions. }
The {\it utility functions} of the user and the data source measure the desirability of the interpretation of the intent of the user for the user and the data source, respectively.
We denote the utility functions of the user and the data source by $U^r(\tau,\beta)$ and $U^s(\tau,\beta)$, respectively, which map the query pairs in $\mathcal{Q} \times \mathcal{Q}$ to real values.
If there is some conflict of interest between the user and the data source, $U^r$ and $U^s$ take their maximum values for different values of $\beta$ for the same intent $\tau$: $U^r$ takes its maximum when $\beta$ and $\tau$ are equivalent, and $U^s$ will be at its maximum for some $\beta'$ that is not equivalent to $\tau$ and returns results that satisfy the biases of the data source. The exact instantiation of $U^r$ and $U^rs$ varies depending on the application and the domain of the data.




\begin{example}
\label{example:utilities-query}
Consider the intent \(\tau \) from Example~\ref{example:user-intent} but \textit{no access to the underlying tuples in the Headphone instance}. Assume only two possible data source interpretations:
$\beta_1:\;$ \texttt{ORDER BY CASE WHEN brand=`Skullcandy' THEN rating - .2 ELSE rating END} and $\beta_2:\;$ \texttt{ORDER BY rating DESC}. Interpretation \(\beta_2\) better matches the user's intent than \(\beta_1\).
 To output a real-valued score, suppose the utility function measures absolute errors in ranking positions (Spearman Footrule distance~\cite{spearman1961l1distance}).
Concretely, assume access to the domain of {\tt rating} based on a known schema \(\mathcal{R}\), where the true ranks under \(\tau\) of the data source's four headphones are \([1,2,3,4]\) and under \(\beta_1\) are \([4,3,2,1]\). Then $
U^r(\tau,\beta_1)
=-\bigl(|1-4|+|2-3|+|3-2|+|4-1|\bigr)
=-8,
$ Similarly, $U^r(\tau,\beta_2)=-2
$, so interpretation \(\beta_2\) provides higher user utility.
\end{example}


\paragraph{Interaction Setting. }
An {\it interaction setting} (interaction for short) is the tuple $(\mathcal{R}, \tau, U^r, U^s)$ where $\mathcal{R}$ is the schema of the database, $\tau$ is the user intent and $U^r$ and $U^s$ are the utility functions of the user and the data source, respectively.
$U^r$ and $U^s$ are the same for all user intents on $\mathcal{R}$. 

\paragraph{Common knowledge. }
We assume that the user knows the schema, but {\it not} the database instance maintained in the data source.
Both data source and use know all utility functions.
Users may learn the utility function of the data source by querying the data source and using their past interactions with the data source.
\subsection{Equilibria of Interaction}
\label{sec:belief-and-euilibrium}

\paragraph{Stable states. }
As explained in Section~\ref{section:introduction}, the user and the data source use their available information to reason about the strategies of the other party to maximize their utility function.
We would like to identify whether this reasoning will converge to any eventual stable state in which the user and data source settle on a pair of fixed strategies where none can increase their utility by deviating from their strategies. 
Knowing these states will enable us to analyze the strategies selected for the user and the data source and design algorithms to find desired strategies for users in these settings. 

\begin{example}
\label{example:credible-communication-stable}
Consider the following scenario. The \texttt{Headphones} instance $I$ has only two tuples
$e_1$ (JBL, rating 4.5, \$25.20) and $e_2$ (Soundcore, rating 4.6, \$39.99).
The user has one of two intents:
price-conscious $\tau(I), e_1 \prec e_2$ or rating-conscious $\tau'(I): e_2 \prec e_1$.
The data source chooses between two interpretations,
$\beta(I): e_2 \prec e_1$ and $\beta'(I): e_1 \prec e_2$.
Suppose the user gets $U^r=2$ if the interpretation matches her intent and $U^r=0$ otherwise. The source earns an extra commission $c>0$ when the expensive headphone $e_2$ is ranked first, and suffers a loss $L>0$
when the user is dissatisfied with the results. The source utility is commission minus dissatisfaction loss.
This yields the payoff matrix:
\[
\begin{array}{c|cc}
      & \beta\ (e_2\prec e_1) & \beta'\ (e_1\prec e_2) \\
\hline
\tau (e_1 \prec e_2\    & (0,\;c-L) & (2,\;0) \\
\tau'\ (e_2 \prec e_1) & (2,\;c)   & (0,\;-L)
\end{array}
\]
where each cell shows $(U^r,U^s)$. If $c<L$, then the source prefers $\beta'$ under $\tau$ (since $0>c-L$) and prefers $\beta$
under $\tau'$ (since $c>-L$).
Hence, there is a stable state. The user strategy is for intent $\tau$ to submit a price query
(e.g., \texttt{ORDER BY price ASC}) and to submit a rating query
(e.g., \texttt{ORDER BY rating DESC}) for $\tau'$. Based on the utility function, the source will follow these queries by choosing
$\beta'$ after the former and $\beta$ after the latter. However, if instead the commission is very large $c\ge L$, then the source weakly prefers $\beta$ even under $\tau$ (since $c-L\ge 0$),
and strictly prefers $\beta$ under $\tau'$. Thus, the source will return $\beta$; this may prompt a new cycle of reasoning.
\end{example}


\paragraph{Bayesian equilibrium. }  
Since the data source does not know the user's intent, we follow the approach used in the game theory literature and use the concept of {\it Bayesian equilibrium} to characterize the stable state of the user and data source reasoning \cite{10.2307/1913390}. The {\it belief of the data source} $\Phi(\tau | q)$ is a stochastic mapping from $\mathcal{Q}$ to $\mathcal{Q}$.
The belief of the data source maps every query submitted by the user $q$ to a probability distribution over the set of possible intents. 
The {\it expected utility of the user} for the query $q$ is 
\begin{equation*}
\mathbb{E}[U^r(\tau,\beta) \mid q] = \sum_{\beta \in \mathcal{Q}} U^r(\tau, \beta) P^s(\beta \mid q).    
\end{equation*}


\noindent
The {\it expected utility of the data source} for the query $q$ is
\begin{equation*}
\mathbb{E}[U^s(\tau, \beta') \mid q] = \sum_{\tau \in \mathcal{Q}} U^s(\tau, \beta') \Phi(\tau \mid q). 
\end{equation*}
where $\Phi(\tau \mid q)$ $=$ $\frac{ P^r(q \mid \tau) \pi(\tau)} {\sum_{\gamma \in Q} P^r(q \mid \gamma) \pi(\gamma)}$.



\begin{definition}
\label{definition:equilibrium}
An equilibrium for interaction $(\mathcal{R}, \tau, U^r, U^s)$ is a pair of user and data source strategies $(P^r(q | \tau), P^s(\beta | q))$ such that 
\begin{itemize}
    \item 
    if $P^r(q \mid \tau) > 0$, $q = \arg\max_{q'} \mathbb{E}[U^r(\tau, \beta) \mid q']$. 
    \item 
    if $P^s(\beta \mid q) > 0$, $\beta = \arg\max_{\beta'} \mathbb{E}[U^s(\tau,\beta') \mid q]$. 
\end{itemize}
\end{definition}




\section{Possibility of Influence}
\label{sec:possibility-of-informative-interaction}
A data source may be so biased towards returning certain order of tuples for a given query such that no user's strategy can modify its decision, i.e., no matter what the input query expressed the user's desired order between tuples, the data source will return tuples based on its own preferred order.
For example, if a shopping website is significantly biased toward returning products with higher prices, no matter what the user's submitted query, e.g., returning products based on their quality ratings, the website will return a fixed order of products based on their prices.
In this section, we investigate these settings and propose efficient algorithms to detect them.

\subsection{Influential Interactions}
\label{sec:uninformative-informative}
An equilibrium $(P^r,P^s)$ is {\bf influential} in the interaction $(\mathcal{R}, \tau, U^r, U^s)$ if and only if there are at least two queries $q,q' \in \mathcal{Q}$, $P^r(q|\tau) > 0$ and $P^r(q'|\tau) > 0$, such that $P^s(\beta|q)$ $\neq$ $P^s(\beta|q')$.
Otherwise, the equilibrium is non-influential.
If an equilibrium is non-influential, the data source selects an interpretation that maximizes its expected utility function based on prior $\pi$ over the user's intent in the equilibrium.
That is, the data source finds it optimal to return answers purely according to its prior belief.


From the definition of equilibrium, it follows that every interaction has a non-influential equilibrium in which the user will send the same query for all their intents.
An interaction is {\bf influential} if it has at least one influential equilibrium and is non-influential otherwise.
For instance, in Example~\ref{example:credible-communication-stable} the stable state is an influential equilibrium, whereas the other is an uninfluential equilibrium.
Given an interaction setting, before extracting reliable information from data source results or finding the desired strategy for the user, we should check whether the interaction is influential.

\begin{problem} 
Given the interaction setting $(\mathcal{R}, \tau, U^r, U^s)$, determine whether the interaction is influential.
\end{problem}

Example~\ref{example:credible-communication-stable} illustrates a setting where the objectives of the user and the data are extremely opposed, therefore, no influential interaction is possible. 
The following result formalizes this intuition.
To simplify our exposition, we assume that intents, submitted queries, and interpretation on schema $\mathcal{R}$ do not return any duplicate tuple over any instance of $\mathcal{R}$. 
Our results extend to the cases where they return duplicate tuples.
Two intents on schema $\mathcal{R}$ are \textbf{set-equivalent} if they return the same set of answers over all instances of $\mathcal{R}$.

\begin{theorem}
\label{theorem:noncred-babbling}
The interaction $(\mathcal{R}, \tau, U^r,U^s)$ is influential if and only if there are set-equivalent intents $\tau \neq \tau'$ and interpretations $\beta \neq \beta'$ where the following conditions hold for both user and data source utility functions $U^t \in \{U^r, U^s\}$: 
$$ U^t(\tau,\beta) \ge U^t(\tau,\beta') \quad \text{and} \quad U^t(\tau',\beta') \ge U^t(\tau',\beta). $$
or
$$ U^t(\tau,\beta') \ge U^t(\tau,\beta) \quad \text{and} \quad U^t(\tau',\beta) \ge U^t(\tau',\beta'). $$
\end{theorem}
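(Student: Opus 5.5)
We prove the two directions separately. The argument follows the standard cheap-talk characterization of \cite{10.2307/1913390,CHAKRABORTY200770}: an equilibrium is influential exactly when the messages sent in equilibrium partition the intents into at least two classes that receive different responses. Incentive compatibility then has to hold for both players at the same pair of responses.

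\emph{Sufficiency.} Suppose there are set-equivalent intents $\tau\neq\tau'$ and interpretations $\beta\neq\beta'$ for which the first system of inequalities holds for both $U^r$ and $U^s$. The second system is symmetric under swapping $\beta$ and $\beta'$.

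We build a two-message equilibrium directly, with two distinct queries $q,q'$.
\begin{itemize}
\item \emph{User.} Intent $\tau$ (and every intent the source pools with it) sends $q$. Intent $\tau'$ (and its pool) sends $q'$.
\item \emph{Source.} It answers $q\mapsto\beta$ and $q'\mapsto\beta'$.
\item \emph{Off-path queries.} Any other query is mapped to $\beta$ or $\beta'$, via beliefs concentrated on $\tau$ or $\tau'$.
\end{itemize}

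We check each player's incentives.
\begin{itemize}
\item \emph{Source.} After $q$, its posterior puts mass on $\tau$. Since $U^s(\tau,\beta)\ge U^s(\tau,\beta')$, the answer $\beta$ is a best response among $\{\beta,\beta'\}$. Symmetrically, $\beta'$ is a best response after $q'$.
\item \emph{User.} $U^r(\tau,\beta)\ge U^r(\tau,\beta')$ and $U^r(\tau',\beta')\ge U^r(\tau',\beta)$, so neither intent gains by imitating the other's query.
\end{itemize}
Because $P^s(\cdot\mid q)\neq P^s(\cdot\mid q')$, this equilibrium is influential.

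Set-equivalence is used here so that $\beta$ and $\beta'$ are meaningful answers for both intents: they range over orderings of the same answer set, so the utility comparisons are well defined. The remaining intents have to be assigned to one of the two messages according to their preferred response, which is the usual construction of a partition.

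\emph{Necessity.} Let $(P^r,P^s)$ be an influential equilibrium. Then there are two on-path queries $q,q'$ with $P^s(\cdot\mid q)\neq P^s(\cdot\mid q')$. Pick $\beta$ in the support after $q$ and $\beta'$ in the support after $q'$ with $\beta\neq\beta'$, and pick intents $\tau,\tau'$ that send $q$ and $q'$ respectively. Take $\tau\neq\tau'$; if a single intent mixes over both queries, that intent is indifferent, so the inequalities hold weakly for $U^r$.

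\begin{itemize}
\item \emph{User inequalities.} By user optimality, $\tau$ weakly prefers the lottery induced by $q$ to that induced by $q'$, and $\tau'$ weakly prefers the reverse. Reducing these lottery comparisons to the pure comparisons of $\beta$ versus $\beta'$ needs an extremal choice. We would pick $\beta,\beta'$ as the supports on which $\tau$ and $\tau'$ disagree most, or we first argue that we may restrict to pure-strategy source responses, since ties can be broken without changing user incentives.
\item \emph{Source inequalities.} By source optimality, $\beta$ maximizes the posterior-expected $U^s$ given $q$. Since the posterior is a mixture over intents sending $q$, some $\tau$ in its support satisfies $U^s(\tau,\beta)\ge U^s(\tau,\beta')$, and likewise for $\tau'$ and $\beta'$.
\end{itemize}
Set-equivalence of $\tau,\tau'$ should come from the requirement that both intents be answered by interpretations over a common answer set. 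Otherwise one of them would strictly dislike the other's response, and the equilibrium would separate them trivially in a way incompatible with the source's interpretation space.

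\emph{Main obstacle.} The hardest step is the necessity direction. The same pair $(\tau,\tau')$ must witness the inequalities simultaneously for $U^r$ and $U^s$, while the equilibrium conditions only give averaged, mixture-level inequalities. Our first attempt is an extremal argument. Among all on-path intent/interpretation pairs, choose $\tau,\tau'$ whose source-optimal responses differ and which lie at the boundary between the two message pools. For such boundary intents the averaged user and source inequalities collapse to pointwise ones, because the other intents in each pool lie on one side of them. If that fails, we would restrict to pure-strategy equilibria and argue that any influential mixed equilibrium induces an influential pure one.
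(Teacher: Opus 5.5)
Your proposal has a genuine gap, and it is the one you flag yourself: the necessity direction is never closed. From a general influential equilibrium you only get mixture-level inequalities. The user compares lotteries over $\mathrm{supp}\,P^s(\cdot\mid q)$ and $\mathrm{supp}\,P^s(\cdot\mid q')$, and the source compares posterior averages over all intents pooled on $q$. The theorem, however, needs a single pair $(\tau,\tau')$ and a single pair $(\beta,\beta')$ that witness the pointwise inequalities for $U^r$ and $U^s$ at the same time. Your two fixes are announced but not carried out. The ``boundary intents'' argument presupposes an order on intents and a single-crossing structure that the theorem's general setting does not provide. The claim that an influential mixed equilibrium induces an influential pure one is asserted, not shown: replacing a mixed source response by a pure one changes the lotteries the user compares, so incentive compatibility need not survive. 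Likewise, the sentence deriving set-equivalence of $\tau,\tau'$ from a ``common answer set'' is not an argument.

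Your sufficiency direction has a related hole, which comes from allowing pools. If other intents also send $q$, the posterior after $q$ is a mixture, and $U^s(\tau,\beta)\ge U^s(\tau,\beta')$ no longer makes $\beta$ a best response. Checking optimality only ``among $\{\beta,\beta'\}$'' also ignores the rest of the interpretation space. And ``assign the remaining intents by their preferred response'' is exactly the step that can break the source's best response.

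The paper's proof goes through because it works entirely in the induced $2\times 2$ game. There are only the two fixed, set-equivalent intents $\tau,\tau'$, two queries $q,q'$, two interpretations $\beta,\beta'$, and pure separating strategies $\tau\mapsto q\mapsto\beta$, $\tau'\mapsto q'\mapsto\beta'$.
\begin{itemize}
\item Bayes' rule then gives $\Pr(\tau\mid q)=\Pr(\tau'\mid q')=1$.
\item The source's best-response conditions become literally $U^s(\tau,\beta)\ge U^s(\tau,\beta')$ and $U^s(\tau',\beta')\ge U^s(\tau',\beta)$.
\item The user's no-deviation conditions become literally the two $U^r$ inequalities.
\item Sufficiency is the same computation read in the other direction, and the second system is the label swap $\beta\leftrightarrow\beta'$.
\item In necessity, the paper takes the influential equilibrium in this separating pure form without loss of generality. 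So no averaging appears, and set-equivalence is part of the fixed setup rather than something derived.
\end{itemize}
If you drop the pools, restrict the source's choice to $\{\beta,\beta'\}$, and take the separating pure equilibrium as the witness, your argument becomes the paper's. The paper does not handle pooled or mixed equilibria either, so the extensions you sketch would go beyond what it establishes. As written, though, they are not a proof.
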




The proofs of our theoretical results are in Section~\ref{sec:proofs}.
Theorem~\ref{theorem:noncred-babbling} provides necessary and sufficient conditions on user and data source's utility functions that enable influential interaction.  
We can use this result to check whether the interaction is influential by checking the intents and interpretations in 
$\mathcal{Q}$.
However, it will be too inefficient for expressive intent and query languages.

\subsection{Detecting Influential Interactions Efficiently}
\label{sec:efficiently-detecting-uninformative-setting}
In this section, we investigate settings for which we can efficiently detect whether an influential equilibrium exists. We first place some mild and natural conditions on utility functions. 
Given these conditions, we provide efficient algorithms to detect whether influential equilibria are possible.

\paragraph{Bias Function.} The utility of the data source may vary between tuples. Consider a shopping website that sells products from various brands, including the website's owner. The data source may prefer tuples that contain information about its own products due to higher profitability. 
We formalize this heterogeneity using {\it bias functions}.
Given relation schema $\mathcal{R}$, the bias function $b(e)$ maps each tuple $e$ into the domain of $\mathcal{R}$ to 
real values ($\mathbb{R}$).
If the data source is not biased to tuple $e$, we have $b(e)$ $=$ $0$.
Since the data source is biased towards (or against) some tuples, $b(e) \neq 0$ for some tuples $e$. 
The exact instantiation of the bias function varies according to the application. 
For instance, one may define a bias function as a mapping of one or multiple attributes of tuples, e.g., brand.

\paragraph{Additive Utility.}  
Given an intent $\tau$ and a data source interpretation $\beta$ over an instance $I$ of schema $\mathcal{R}$, we assume that  the utility function for the data source is defined as:
\begin{equation}
\label{equation:utility_ranks}
U^s(\tau,\beta) \;= 
\sum_{e \in \tau(I)} 
u^s\big(\text{r}(e, \tau(I)),\, \text{r}(e, \beta(I)), b(e)\big)
\end{equation}
Here, $\text{r}(e, q(I))$ is a function that returns the position of a tuple $e \in q(I)$ and $u^s(\cdot, \cdot, \cdot)$ is a function that maps pairs of positions and a bias value to a real value. 
We make the same additivity assumption for the user's utility where the bias function maps every tuple to zero, as the user is not biased to any particular tuple.
Due to the biases of the data source against or toward some tuples, there might be tuples in $\beta(I)$ that do not belong to $\tau(I)$.
If the data source does not return some tuples in $\tau(I)$, we set $\text{r}(e, \beta(I))$ to a fixed value greater than the number of tuples in $\tau(I)$.
For the rest of the paper, we assume that all utility functions are in the form of Equation~\ref{equation:utility_ranks}.
Since users usually do not know the exact instance $I$, they cannot accurately compute the value of $U^s$ for $I$. 
However, as we will show, they can reason about the utilities given the schema of $I$.

\begin{example}
\label{ex:quadratic-bias}
Consider the interaction from Example~\ref{example:utilities-query} with concrete utility functions.
The user gets maximum utility when the result perfectly matches their intent. We model this by function:
\[
u^r\big(\mathrm r(e,\tau(I)),\,\mathrm r(e,\beta(I))\big) \;=\; -\big(\mathrm r(e,\beta(I)) - \mathrm r(e,\tau(I))\big)^2
\]
This function is maximized at 0 (when interpretation rank equals intent rank), explaining why the user prefers the interpretation $\beta_2$. To explain the data source's preference for the interpretation $\beta_1$, we define bias function $b$ based on the brand attribute:
$$
b(e) = 
\begin{cases} 
2 & \text{if } e.\text{brand} = \text{'Skullcandy'} \\
0 & \text{otherwise}
\end{cases}
$$
Therefore, the source's utility for a Skullcandy tuple $e$ becomes $u^s_e(\cdot) \;=\; -\Big(\mathrm r(e,\beta(I)) - \big(\mathrm r(e,\tau(I)) + 2\big)\Big)^2$. Thus the source's ideal output rank for Skullcandy tuples is shifted two positions lower (numerically larger) than the intent rank, so it prefers $\beta_1$ over $\beta_2$.
\end{example}



\noindent
\textbf{Partial Common Interest.} 
Although the data source may be biased towards returning some tuples more than other tuples, the results returned by the data source should not be completely independent of the user intent.
In particular, the returned results should have connections to the user intent when the biases are equal. 
More precisely, 
if the position of the tuple $e$ in intent $\tau$ is higher than its position in intent $\tau'$, the data source should place $e$ in the answer to $\tau$ in the same or higher position than the position of $e$ in $\tau'$.


\begin{definition}
\label{definition:supermodular}
Given intent $\tau$ and interpretations $\beta$ and $\beta'$ on an instance $I$ of the schema $\mathcal{R}$ and the data source utility function $U^s$, $U^s$ is \textbf{supermodular} if for all tuples $e$ where 
$\mathrm r(\beta(I),e)$ $<$ $\mathrm r(\beta'(I),e)$, we have  
$u^s(\mathrm r(\tau(I),e),\mathrm r(\beta(I),e),b(e))$ $ - $ $u^s(\mathrm r(\tau(I),e),\mathrm r(\beta'(I),e),b(e))$ is non-increasing in $\mathrm r(\tau,I,e)$. 
\end{definition}

In the definition of supermodularity, we compare the positions of a tuple in different intents, as modifying the position of a tuple in an intent creates a new intent. 
A data source with a supermodular utility function may {\it not} follow the user's preferred relative ranking between different tuples in an intent: it may still be more biased to returning some tuples in positions higher than other tuples.  
Intuitively, the user prefers results that match their preferences the most: the more the user prefers the tuple $e$, the higher position the user would like to see it in the returned result.
Hence, it is natural to assume that $U^r$ is also supermodular, defined similarly to Definition~\ref{definition:supermodular}.
For the rest of the paper, we assume utility functions of the data source and user are supermodular.

\begin{example}
    Consider the supermodular utility functions in Example~\ref{ex:quadratic-bias}. For each tuple, the data source's optimal position in interpretation is $\mathrm r(e,\beta(I))=\mathrm r(e,\tau(I))+2$, while the user prefers interpretations with $\mathrm r(e,\beta(I))=\mathrm r(e,\tau(I))$. Therefore, there is some commonality of interest for each tuple's position since both agents’ optimal interpretation positions are increasing in $\mathrm r(e,\tau(I))$, but also some conflict since the data source’s optimal interpretation position is always $+2$ lower.
\end{example}

\paragraph{Symmetric Bias.}
The data source may be equally biased towards or against the set of tuples of an intent. 
For example, the intent may ask for a ranked list of products produced by a certain provider based on their quality where the data source is biased against products made by that provider. 
In this case, the data source may try to discourage the user by returning and placing products from other providers, which do not belong to the answers of the intent over the current database instance, in high positions of the returned result. 
We show that in these cases the interaction is influential.
\begin{proposition}
\label{prop:symmetry}
Given the interaction setting $(\mathcal{R},\tau,U^r,U^s)$, if $b(e)$ is equal for all tuples $e$, the interaction has influential equilibria.
\end{proposition}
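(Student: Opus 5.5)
The plan is to reduce the statement to Theorem~\ref{theorem:noncred-babbling}: it suffices to exhibit set-equivalent intents $\tau \neq \tau'$ and interpretations $\beta \neq \beta'$ such that, for both $U^t \in \{U^r,U^s\}$, $U^t(\tau,\beta) \ge U^t(\tau,\beta')$ and $U^t(\tau',\beta') \ge U^t(\tau',\beta)$. So the work is to build such a pair and check the four inequalities using additivity (Equation~\ref{equation:utility_ranks}) and supermodularity (Definition~\ref{definition:supermodular}).

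\textbf{Construction.} Fix an intent $\tau$ and an instance $I$ with at least two answer tuples $e_1,e_2$. Let $\tau'$ be $\tau$ with the positions of $e_1$ and $e_2$ swapped (say $e_1 \prec e_2$ in $\tau(I)$ and $e_2 \prec e_1$ in $\tau'(I)$), all other positions unchanged. Since $\tau'$ only reorders, $\tau$ and $\tau'$ are set-equivalent. Let $\beta = \tau$ and $\beta' = \tau'$ as interpretations, which are distinct.

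\textbf{Verification.} By additivity, $U^t(\tau,\beta) - U^t(\tau,\beta')$ reduces to the terms for $e_1$ and $e_2$, because every other tuple has the same intent position and the same interpretation position in both. Because $b$ is constant, say $b(e)=c$, both tuples use the same one-variable-shifted function $g(x,y) = u^s(x,y,c)$; for the user, $b \equiv 0$ anyway. Write $p<p'$ for the two positions. The comparison for $\tau$ is
\[
[g(p,p)-g(p,p')] + [g(p',p')-g(p',p)] ,
\]
and the comparison for $\tau'$ is the negative of that, after relabelling. Supermodularity says $g(x,p)-g(x,p')$ is non-increasing in $x$, so $g(p,p)-g(p,p') \ge g(p',p)-g(p',p')$. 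Hence the $\tau$-difference is $\ge 0$. The $\tau'$-difference equals the same expression by symmetry of the swap, since $e_1,e_2$ share the same $g$, so it is also $\ge 0$. This gives the first alternative of Theorem~\ref{theorem:noncred-babbling} for both utilities, and influence follows.

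\textbf{Main obstacle.} The crux is the step where constant bias lets the swapped tuples share the same function $g$. With unequal biases, the two bracketed terms would involve different functions, and supermodularity alone could not sign their sum. The step I would double-check most carefully is the direction convention in Definition~\ref{definition:supermodular}, since smaller rank means a higher position, to make sure "non-increasing" yields the sign claimed rather than the reverse.

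If the sign comes out reversed, then the second alternative of Theorem~\ref{theorem:noncred-babbling} applies instead, with $\beta$ and $\beta'$ exchanged. The argument is symmetric under this swap, so either way the theorem's condition holds.
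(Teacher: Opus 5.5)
Your proof takes a genuinely different route from the paper's. The paper proves Proposition~\ref{prop:symmetry} in one line by citing Theorem~1 of \cite{CHAKRABORTY200770} on comparative cheap talk. There, tuples play the role of issues and intent ranks the role of states. A common bias $b(e)\equiv c$ is the symmetric-preference case, in which comparative statements such as ``$e_1$ ranks above $e_2$'' stay credible however large $c$ is.

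You instead stay inside the paper. You reduce to the pairwise criterion of Theorem~\ref{theorem:noncred-babbling} and exhibit an explicit witness: swap two tuples and take $\beta=\tau$, $\beta'=\tau'$. The computation is right, and the sign comes out as you want, so the fallback to the second alternative is unnecessary. With $p<p'$, Definition~\ref{definition:supermodular} gives $g(p,p)-g(p,p')\ge g(p',p)-g(p',p')$, which is the first alternative for both $U^r$ and $U^s$. The common bias is used exactly where it is needed, to make $e_1$ and $e_2$ share one per-tuple function; for the quadratic $g$ the $c$-terms cancel and the difference is $2(p'-p)^2$.

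Your route buys an elementary, self-contained argument with no assumption on the prior $\pi$. The citation buys the receiver side of the argument, which yours does not supply.

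Concretely, your witness only compares $\beta=\tau$ with $\beta'=\tau'$. It never shows that $\tau$ is the data source's best response to a posterior concentrated on $\tau$, as Definition~\ref{definition:equilibrium} requires. The sufficiency half of Theorem~\ref{theorem:noncred-babbling} skips this by restricting to a $2\times 2$ game, so your proof is exactly as strong as that half. This is not a formality. Take quadratic loss and a constant bias with $|c|\ge k-\tfrac32$. Your pairwise inequalities still hold, yet Corollary~\ref{coro:quad-bias-saturation} says the source's optimal interpretation is the same after every query, so no equilibrium is influential. In this saturated regime the paper's proposition and corollary contradict each other, and the pairwise criterion cannot detect it.

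To make your argument stand on its own, take $\beta^\ast\in\arg\max U^s(\tau,\cdot)$ and let $\beta^\ast_{\mathrm{swap}}$ exchange the positions of $e_1,e_2$. Then:
\begin{itemize}
\item your swap inequality applied to $U^s$ shows $\beta^\ast$ can be chosen with $e_1$ weakly above $e_2$;
\item the common bias makes $\beta^\ast_{\mathrm{swap}}$ optimal for $\tau'$;
\item supermodularity of $U^r$ then gives the user's incentive constraints.
\end{itemize}
One requirement remains: $\beta^\ast$ must place $e_1$ \emph{strictly} above $e_2$, so that the two responses actually differ. That is where a no-saturation assumption has to enter, such as the unbounded actions of \cite{CHAKRABORTY200770}.
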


\begin{theorem}
\label{prop:convex-saturation-tuples}
Given interaction setting ($\mathcal{R}$,$\tau$, $U^r$,$U^s$), let $U^s$ be 
\[
U^s(\tau,\beta) \;=
- \sum_{e \in \tau(I) } L\!\big(\mathrm r(e, \tau(I)) - (\mathrm r(e, \beta(I)) + b(e))\big)
\]
over instance $I$ of $\mathcal{R}$, where $L:\mathbb R\to\mathbb R_+$ is a convex function. 
Assume that $U^r$ has the same structure.
If there is an interpretation $\beta'$ such that for every $e$, we have $\beta'(I)$ $=$ $ \arg\min_{\beta(I)} L\!\big(\mathrm r(e, \tau(I))-(\mathrm r(e, \beta(I))+b(e))\big)$, the interaction does {\it not} have any influential equilibrium.
\end{theorem}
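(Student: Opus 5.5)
We will derive the statement from the characterization in Theorem~\ref{theorem:noncred-babbling}. By that theorem, the interaction is influential if and only if there exist set-equivalent intents $\tau\neq\tau'$ and interpretations $\beta\neq\beta'$ such that both $U^r$ and $U^s$ order the pair $\{\beta,\beta'\}$ the same way on $\tau$ and oppositely on $\tau'$. Concretely, both utilities weakly prefer $\beta$ at $\tau$ and $\beta'$ at $\tau'$, or the reverse. It therefore suffices to show that, under the hypothesis of the theorem, no such quadruple $(\tau,\tau',\beta,\beta')$ satisfies the condition simultaneously for $U^s$ and $U^r$.

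\textbf{Step 1: the source's preference is independent of the intent.} The hypothesis provides an interpretation $\beta^\ast$ that minimizes every summand $L\big(\mathrm r(e,\tau(I))-(\mathrm r(e,\beta^\ast(I))+b(e))\big)$ at once. Since $L\ge 0$ is convex and every term is minimized pointwise, $\beta^\ast$ is a global maximizer of $U^s(\tau,\cdot)$. We will argue that this pointwise saturation pins down $\beta^\ast$ regardless of how $\tau$ is permuted within its set-equivalence class. The reason is that $\beta^\ast$ is chosen tuple-by-tuple from the bias $b(e)$, and the convex loss is saturated (flat) around it. We expect to read the hypothesis in this way: the source's optimal interpretation is the same fixed ranking for all set-equivalent intents. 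Then, for any $\tau,\tau'$, the source strictly prefers $\beta^\ast$ to any other interpretation. In any putative influential equilibrium, every on-path query would induce $\beta^\ast$, so $P^s(\cdot\mid q)=P^s(\cdot\mid q')$ and the equilibrium is non-influential. This gives the conclusion directly from the definition of influential equilibrium, which provides a useful sanity check alongside the route through Theorem~\ref{theorem:noncred-babbling}.

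\textbf{Step 2: rule out the Theorem~\ref{theorem:noncred-babbling} condition.} Alternatively, suppose $\beta,\beta'$ satisfy the condition. For $U^s$ the condition requires a preference reversal between $\tau$ and $\tau'$. Using Step 1, the ordering of $U^s(\cdot,\beta)$ versus $U^s(\cdot,\beta')$ is governed by the distance from $\beta^\ast$. Since $\beta^\ast$ does not move with $\tau$, a reversal can only occur through ties. We then need to check that ties for $U^s$ cannot coexist with the required weak orderings for $U^r$. This step uses the same structure for $U^r$ with $b\equiv 0$, together with supermodularity, to show that a weak reversal for $U^r$ forces a strict reversal for one of the agents. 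That strict reversal would contradict the constancy of $\beta^\ast$.

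\textbf{Main obstacle.} The delicate point is Step 1: justifying that the pointwise minimizer $\beta^\ast$ is genuinely invariant across set-equivalent intents and is unique up to ties. Otherwise the source could still respond to the query through tie-breaking among several optimal interpretations, and weak inequalities in Theorem~\ref{theorem:noncred-babbling} could hold trivially. We would first try to handle this by observing that, in Bayesian equilibrium, the source's best response against any belief $\Phi(\cdot\mid q)$ is $\beta^\ast$, because it is optimal for every $\tau$ in the support. Any mixing over other optimal interpretations is therefore payoff-irrelevant to the source. We would then restrict attention to a canonical selection so that $P^s$ is constant across queries.
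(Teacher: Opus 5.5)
\textbf{Verdict: the argument that actually proves the statement is the one you offer only as a sanity check, and it is the paper's own proof; your primary route through Theorem~\ref{theorem:noncred-babbling} fails.}

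\textbf{Step 2 rests on a false claim.} The working argument is in Step 1 and your obstacle paragraph. You and the paper read the hypothesis the same way: $\beta^\ast$ minimizes each tuple's loss for every possible intent rank $r\in[z]$. Then $\beta^\ast$ maximizes $U^s(\tau,\cdot)$ for every $\tau$, hence maximizes $\mathbb{E}[U^s(\tau,\cdot)\mid q]$ under every posterior, so the source's response does not depend on $q$. The paper never goes through Theorem~\ref{theorem:noncred-babbling}.

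Step 2 claims that, because $\beta^\ast$ does not move with $\tau$, $U^s$ can order $\beta,\beta'$ differently at $\tau$ and $\tau'$ only through ties. But knowing that $\beta^\ast$ is the top element for every intent says nothing about how $U^s$ ranks two \emph{other} interpretations. Here is a counterexample:
\begin{itemize}
\item Take quadratic $L$ and two tuples (so $z=2$) with $b(e_1)=2$ and $b(e_2)=5/2$.
\item Take intents $\tau: e_1\prec e_2$ and $\tau': e_2\prec e_1$.
\item For each tuple and each intent rank $r\in\{1,2\}$, position $1$ is the strict minimizer. So $\beta^\ast$ ties both tuples at position $1$, and the hypothesis holds.
\item With $\beta=\tau$ and $\beta'=\tau'$, we get $U^s(\tau,\beta)-U^s(\tau,\beta')=1>0$ and $U^s(\tau',\beta')-U^s(\tau',\beta)=3>0$. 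The corresponding differences for $U^r$ are $2$ and $2$.
\end{itemize}
So the first condition of Theorem~\ref{theorem:noncred-babbling} holds strictly for both agents, even though the hypothesis is satisfied. Your target, that no such quadruple exists, is therefore false, and no supermodularity argument can establish it. The example also shows that the \emph{if} direction of that theorem is valid only in the $2\times 2$ game its proof considers. In the full game the source deviates to $\beta^\ast$ after either query: $-6.25>-10.25$ after $q$, and $-7.25>-10.25$ after $q'$.

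\textbf{Your tie-breaking fix proves a weaker claim.} The obstacle you flag is real. The theorem says \emph{no} equilibrium is influential. If the per-tuple argmin is not a singleton, the source may condition its choice among optimal interpretations on the query. Fixing a canonical selection exhibits only one non-influential equilibrium. In the extreme case $L\equiv 0$ (convex and nonnegative), every interpretation is optimal for both agents, every profile is an equilibrium, and influential ones exist.

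The paper's proof has the same hole. It infers from the existence of a query-independent optimal interpretation that the response cannot depend on the query.

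The clean repair is to assume $\beta^\ast$ is the \emph{unique} per-tuple minimizer for every intent rank. Then for any $\beta\neq\beta^\ast$, some tuple incurs strictly larger loss at every intent rank. So $\beta^\ast$ is the unique best response to every posterior, and $P^s(\cdot\mid q)$ is the point mass on $\beta^\ast$ for all $q$. Non-influence then follows directly from the definition.
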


Intuitively, Theorem~\ref{prop:convex-saturation-tuples} shows that if the data source's bias is sufficiently large such that there is an interpretation that provides the maximum utility for the data source regardless of the user's desired  intent, no influential equilibrium is possible, i.e., users cannot convince the data source to return other results.
If $L$ is differentiable, we can check the condition in Theorem~\ref{prop:convex-saturation-tuples} efficiently. 

The subsequent result follows directly from Theorem~\ref{prop:convex-saturation-tuples} and recognizes some non-influential interactions efficiently for a large group of utility functions, e.g., quadratic utility functions  where $L(.)$ is $(\text{r}(e, \tau(I)) - (\text{r}(e, \beta(I)) + b(e))\big)^2$.
It is particularly useful when the user is interested in the top-$k$ returned results.

\begin{corollary}
\label{coro:quad-bias-saturation}
Given the interaction $(\mathcal{R},\tau, U^r,U^s)$, let $U^s$ be 
\[
U^s(\tau,\beta) \;=
- \sum_{e \in \tau(I)} L\!\big(\mathrm r(e, \tau(I)) - (\mathrm r(e, \beta(I)) + b(e))\big)
\]
for instance $I$ of $\mathcal{R}$, where $L (v): \mathbb R\to\mathbb R_+$ is convex, non-decreasing in $|v|$, and minimized at $v=0$. 
Assume that $U^r$ has the same structure. 
If for every $b(e)$, $|b(e)|\ge k-\tfrac{3}{2}$, the interaction does not have an influential equilibrium where $k$ is the number of tuples in $\tau(I)$.
\end{corollary}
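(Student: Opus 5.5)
The plan is to derive the corollary from Theorem~\ref{prop:convex-saturation-tuples}. It suffices to exhibit a single interpretation $\beta'$ that simultaneously attains, for every tuple $e \in \tau(I)$, the per-tuple minimum of $L\big(\mathrm r(e,\tau(I))-(\mathrm r(e,\beta(I))+b(e))\big)$, with the minimum taken over the positions available to an interpretation. Once such a $\beta'$ is found, independent of which intent $\tau$ (i.e.\ which ordering of the $k$ tuples) the user holds, Theorem~\ref{prop:convex-saturation-tuples} gives that no influential equilibrium exists.

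\textbf{Step 1: reduce to a one-dimensional problem per tuple.} The available positions form a finite set, namely $\{1,\dots,k\}$ together with the fixed ``not returned'' position $k' > k$ fixed in the additive-utility setup. The intent rank $t=\mathrm r(e,\tau(I))$ ranges over $\{1,\dots,k\}$. Write $v = t - p - b(e)$, where $p$ is the chosen position. Since $L$ is non-decreasing in $|v|$, the per-tuple optimum is the feasible $p$ closest to $t-b(e)$.

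\textbf{Step 2: show the optimum sits at a boundary position.} Take $b(e) \ge k-\tfrac32$. Then $t-b(e) \le k-(k-\tfrac32)=\tfrac32$. The target is therefore at most $\tfrac32$, so position $1$ is at least as close as position $2$, and strictly closer than any other position. Hence $p=1$ is optimal whatever the value of $t$, using the monotonicity of $L$ in $|v|$ (ties at exactly $\tfrac32$ are harmless because both choices are then optimal). Symmetrically, take $b(e)\le -(k-\tfrac32)$. Then $t-b(e)\ge 1+k-\tfrac32 = k-\tfrac12$, so position $k$ is closest among $\{1,\dots,k\}$, or the dropped position is closest if that is allowed. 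In either case the optimal position is independent of $t$. The threshold $k-\tfrac32$ is exactly what makes the half-integer tie-break work, and I will verify this carefully.

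\textbf{Step 3: assemble $\beta'$; this is the main obstacle.} I need the tuple-wise optima to be realized by one interpretation, but several positively biased tuples all want position $1$. I will handle this in one of two ways. The first is to read ``$\beta'(I)=\arg\min$'' in Theorem~\ref{prop:convex-saturation-tuples} as a per-tuple saturation condition; then the independence from $t$ shown in Step 2 is exactly what that theorem's proof uses, namely that the source's best response does not depend on the intent. The second, if a genuine ranking is needed, is to use ties, since the framework allows $e\sim e'$ sharing a position: place all positively biased tuples tied at position $1$, and the negatively biased ones at the bottom or dropped. Either way $\beta'$ depends only on the signs of $b$ and not on $\tau$. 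The source's best response is then constant across queries, so $P^s(\cdot\mid q)$ cannot differ between queries, and by the definition of influential equilibrium the interaction is non-influential.
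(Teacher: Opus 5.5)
Your route is the same as the paper's. The paper's proof also shows that each tuple's optimal rank is a boundary rank independent of $\mathrm r(e,\tau(I))$, then concludes ``by additivity'' that the source's response is query-independent. Your target $t-b(e)$ is the one consistent with the displayed $U^s$; the paper writes $X_e=\mathrm r(e,\tau(I))+b(e)$, which only swaps which boundary is reached. The paper takes the feasible ranks to be $\{1,\dots,k\}$ and treats each tuple's rank as a free choice, so it never confronts your Step~3 issue. However, the places where you assert that things work out are genuine gaps, and the paper's proof shares the first one.

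\textbf{Uniqueness of the best response.} The decisive gap is the final inference. From ``some optimal interpretation does not depend on $q$'' you get non-influence only if it is the \emph{unique} best response. An influential equilibrium merely needs the source to pick \emph{different} optimal interpretations after different on-path queries. So ties are not harmless, and the statement actually fails at equality. Take $k=2$, $b(e_1)=b(e_2)=\tfrac12=k-\tfrac32$, $L(v)=v^2$. When the source knows the intent is $e_1\prec e_2$, it gets $-\tfrac12$ both from $e_1\prec e_2$ and from $e_1\sim e_2$, and $-\tfrac52$ from $e_2\prec e_1$; the other intent is symmetric. Hence letting each intent submit itself and the source echo it is an equilibrium (the user gets $0$ versus $-2$ by deviating), and it is influential. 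With $L$ merely non-decreasing in $|v|$ (e.g.\ $L\equiv 0$) this happens for every $b$. You need $L$ uniquely minimized at $0$ (by convexity, equivalently strictly increasing in $|v|$) and $|b(e)|>k-\tfrac32$ strictly. Then each tuple's boundary rank is strictly optimal for every $t$, hence under every posterior.

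\textbf{The drop position in Step~2.} If the drop position $k'$ is feasible, the nearer of $k$ and $k'$ to $t-b(e)$ changes with $t$. For $k=3$, $k'=4$, $b(e)=-2$ the optimal positions are $3,4,4$ for $t=1,2,3$. So ``in either case'' is false, and you must restrict to ranks $\{1,\dots,k\}$ as the paper does.

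\textbf{Joint realizability in Step~3.} Tying works for tuples sent to rank $1$, but several tuples cannot all occupy position $k$ of a ranking, and then the conclusion is false even with strict inequalities. Take $k=2$, $b\equiv -1$, $L(v)=v^2$, with both tuples returned. The unique best response to the revealed intent $e_1\prec e_2$ is $e_1\prec e_2$ (loss $2$, versus $4$ for $e_2\prec e_1$ and $5$ for the tie). So truthful revelation is again an influential equilibrium. Your first reading of Step~3 (per-tuple saturation) is exactly the paper's tacit assumption. It is valid only if the source may assign ranks to tuples independently rather than output a genuine ranking.
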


Given the range of $b(.)$, we can check the condition of Corollary~\ref{coro:quad-bias-saturation} in constant time.

\section{Detecting Trustworthy Answers}
\label{sec:possible-informative-equilibria}
A data source may be partially biased towards returning a certain order of tuples for a given query. For instance, a shopping website may return some products from its own owners, some of which may be irrelevant to the user's query, while still returning some relevant products to keep the user engaged. In this section, we investigate these settings and present algorithms to detect reliable information from data sources' interpretation of users' queries.

\paragraph{Trustworthy Information. } We first define the most basic unit of information in our setting.

\begin{definition}
\label{def:trustworthy_tuple}
Given an intent $\tau$ and an interpretation $\beta$ over an instance $I$ of schema $\mathcal R$,
a returned tuple $e\in\beta(I)\cap\tau(I)$ is \textbf{untrustworthy} if there exists a tuple
$e'\in\tau(I)\setminus\{e\}$ such that $\mathrm r(e',\tau(I))<\mathrm r(e,\tau(I))$ and either
(i) $e'\notin \beta(I)$, or (ii) $\mathrm r(e',\beta(I))>\mathrm r(e,\beta(I))$.
Otherwise, $e$ is \textbf{trustworthy}.
\end{definition}

Intuitively, a tuple is untrustworthy if the data source's interpretation misranks the position of a tuple in the intent. As discussed in Section~\ref{sec:efficiently-detecting-uninformative-setting}, the data source may misrank tuples due to bias and may also return an incomplete result, e.g., only the top-$k$ tuples that maximize its utility. Specifically, a returned tuple $e$ is untrustworthy if it may have displaced some tuple $e'$ that the
intent ranks above it, i.e., $\mathrm r(e',\tau(I))<\mathrm r(e,\tau(I))$. This can happen either because the data source returns $e'$ but
ranks it below $e$ in $\beta(I)$ (a \emph{promotion} of $e$), or because the data source omits $e'$
from the returned list altogether (a \emph{demotion by omission})


\begin{problem}
Given the interaction setting $(\mathcal R,\tau,U^r,U^s)$ and an interpretation result $\beta(I)$, detect all untrustworthy tuples in $\beta(I)$.
\end{problem}


\paragraph{Possible Intent Ranks. }
As discussed in Section~\ref{sec:belief-and-euilibrium}, the data source does not know the user's intent $\tau$, so it maintains a prior over the user’s intent. Specifically, it maintains a prior over the \emph{ranks} of tuples in $\tau(I)$. Concretely, for each tuple $e\in\tau(I)$, we assume the rank $\mathrm r(e,\tau(I))$ is a discrete random variable supported on $[z]=\{1,2,\dots,z\}$ and is uniformly distributed on $[z]$ under the prior.



\paragraph{Indifference Boundary. } As discussed in Section~\ref{sec:efficiently-detecting-uninformative-setting}, the supermodularity of the data source's utility function provides some incentive to place tuples that are in higher positions in the intent in relatively higher positions in the interpretation. Specifically, an interpretation that misranks the relative order of tuples in the intent incurs a larger utility loss for the data source when the intent ranks of these tuples are farther apart. However, the bias of the data source may nevertheless lead to interpretations that misrank these tuple that provide higher utility. Especially when two tuples are sufficiently close in the intent, as the data source's utility loss from misranking them is smaller and may be offset by the bias. To characterize when bias provides sufficient incentive for the data source to misrank tuples, we describe how the data source and user compare their utility from alternative interpretations.

\begin{definition}
\label{def:indifference_boundary}
Given intent $\tau$ and interpretations $\beta,\beta'$ on instance $I$ of the schema $\mathcal{R}$, the user is \textbf{indifferent} between $\beta$
and $\beta'$ if $U^r(\tau,\beta)=U^r(\tau,\beta')$. Given a posterior belief $\Phi$ over intents, the data source is \textbf{indifferent}
between $\beta$ and $\beta'$ at $\Phi$ if $\mathbb E_{\tau\sim\Phi}[U^s(\tau,\beta)]=\mathbb E_{\tau\sim\Phi}[U^s(\tau,\beta')]$.
\end{definition}

Intuitively, if the user receives the same utility for any two distinct interpretations, the user is indifferent between them. However, when the user is not indifferent, meaning they strictly prefer one interpretation over the other, they will strategically select the query that convinces the data source to return results based on the preferred interpretation. Therefore, the condition where the user is indifferent effectively defines the decision boundary in the space of possible intents, or equivalently, possible positions of tuples in the intent, based on which the user submits different queries. The data source’s indifference condition is defined analogously, except that the data source evaluates utilities under its posterior belief $\Phi$ because it does not observe the true intent (Section~\ref{sec:belief-and-euilibrium}). 
Next, we provide the structure of indifference conditions for quadratic utility functions.



\begin{proposition}
\label{proposition:affine_boundary}
Given the interaction setting ($\mathcal{R}$,$\tau$, $U^r$,$U^s$)
suppose the user and data source have the following utility functions for each tuple $e$:
\[
u^r(\mathrm{r}(e,\tau(I)), \mathrm{r}(e,\beta(I))) = -(\mathrm{r}(e,\tau(I)) - \mathrm{r}(e,\beta(I)))^2,
\]
\[
u^s_e(\mathrm{r}(e,\tau(I)), \mathrm{r}(e,\beta(I))) = -(\mathrm{r}(e,\tau(I)) - (\mathrm{r}(e,\beta(I)) + b(e)))^2.
\]
over instance $I$ of $\mathcal{R}$. The user is indifferent between interpretations $\beta,\beta'$, iff $\sum_{e\in E}\gamma_e\,\mathrm r(e,\tau)=\alpha^{r}$, where the normal vector is $\gamma_e=2(\mathrm r(e,\beta(I))-\mathrm r(e,\beta'(I)))$. Let $\Phi$ be a belief over intents and $\Phi_e:=\mathbb E_{\tau\sim\Phi}[\mathrm r(e,\tau(I))]$. The data source is indifferent
at belief $\Phi$ iff
$
\sum_{e\in E}\gamma_e\,\Phi_e=\alpha^{s},
\alpha^{s}:=\alpha^{r}+2\sum_{e\in E} b(e)\big(\mathrm r(e,\beta(I))-\mathrm r(e,\beta'(I))\big).
$
\end{proposition}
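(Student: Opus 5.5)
The plan is a direct algebraic expansion of the two quadratic utilities. In each difference the terms quadratic in the intent ranks cancel, which leaves an affine function of the vector $(\mathrm r(e,\tau(I)))_{e\in E}$. Throughout, write $t_e:=\mathrm r(e,\tau(I))$, $x_e:=\mathrm r(e,\beta(I))$, $x'_e:=\mathrm r(e,\beta'(I))$ and $b_e:=b(e)$. Let $E$ be the common tuple set $\tau(I)$ over which the additive utilities of Equation~\ref{equation:utility_ranks} sum.

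\textbf{Step 1: fix the index set.} I first make explicit that $E$ does not depend on which intent in the support of $\Phi$ is realized. The prior is over the ranks of a fixed set of tuples, i.e.\ over set-equivalent intents, so $E$ is fixed. If some $e\in E$ is omitted by $\beta$ or $\beta'$, its rank $x_e$ or $x'_e$ is the fixed constant from Section~\ref{sec:efficiently-detecting-uninformative-setting}. Hence every $x_e$ and $x'_e$ is a constant independent of $\tau$. This is the only point that needs care; once it is settled, the rest is a routine calculation.

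\textbf{Step 2: the user's indifference condition.} I expand $-(t_e-x_e)^2+(t_e-x'_e)^2$ term by term. The $t_e^2$ terms cancel, giving
\[
U^r(\tau,\beta)-U^r(\tau,\beta')=\sum_{e\in E}2(x_e-x'_e)\,t_e-\sum_{e\in E}\big(x_e^2-x_e'^2\big).
\]
Therefore the user is indifferent if and only if $\sum_e\gamma_e t_e=\alpha^r$, where $\gamma_e=2(x_e-x'_e)$ and $\alpha^r:=\sum_e(x_e^2-x_e'^2)$. This also fixes the constant $\alpha^r$ that the statement leaves implicit.

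\textbf{Step 3: the data source's difference.} I apply the same expansion to $U^s$, with $x_e$ replaced by $x_e+b_e$ and $x'_e$ by $x'_e+b_e$. The coefficient of $t_e$ is still $2\big((x_e+b_e)-(x'_e+b_e)\big)=\gamma_e$, because the bias cancels there. The constant term becomes
\[
\sum_e\big((x_e+b_e)^2-(x'_e+b_e)^2\big)=\alpha^r+2\sum_e b_e(x_e-x'_e)=\alpha^s.
\]
So for every fixed intent, $U^s(\tau,\beta)-U^s(\tau,\beta')=\sum_e\gamma_e t_e-\alpha^s$.

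\textbf{Step 4: take expectations under $\Phi$.} The difference in Step 3 is affine in $t$, and $\gamma_e$ and $\alpha^s$ are constants by Step 1. By linearity of expectation under $\Phi$,
\[
\mathbb E_{\tau\sim\Phi}[U^s(\tau,\beta)]-\mathbb E_{\tau\sim\Phi}[U^s(\tau,\beta')]=\sum_e\gamma_e\Phi_e-\alpha^s.
\]
By Definition~\ref{def:indifference_boundary}, the data source is indifferent at $\Phi$ if and only if $\sum_e\gamma_e\Phi_e=\alpha^s$. Both directions of each ``iff'' follow at once, since in each case we have an exact identity for the utility difference.
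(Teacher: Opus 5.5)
Your proposal is correct and follows the paper's proof essentially line by line. In both, you expand each quadratic difference so the $\mathrm r(e,\tau(I))^2$ terms cancel, read off $\gamma_e$ and $\alpha^r=\sum_e\bigl(\mathrm r(e,\beta(I))^2-\mathrm r(e,\beta'(I))^2\bigr)$, note that the bias drops out of the linear coefficient and only shifts the intercept to $\alpha^s$, and finish by linearity of expectation under $\Phi$. Your Step 1 (fixing $E$ and treating omitted-tuple ranks as $\tau$-independent constants) is left implicit in the paper, and it is exactly what justifies pulling $\gamma_e$ and $\alpha^s$ outside the expectation.
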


The proof follows by equating the utilities for $\beta$ and $\beta'$, then simplifying the resulting expression. Proposition~\ref{proposition:affine_boundary} says that the agent's indifference condition has an affine form. Specifically, a hyperplane in the space of possible intent ranks $\{\mathrm r(e,\tau)\}_{e\in \tau(I)}$ for the user, and a hyperplane in posterior-mean space $\{\Phi_e\}_{e\in \tau(I)}$ for the data source, with the same normal but different intercepts due to the bias. The results extend to any additive, supermodular utility functions where the difference $u(\mathrm{r}(e, \tau(I)), \mathrm{r}(e, \beta(I))) - u(\mathrm{r}(e, \tau(I)), \mathrm{r}(e, \beta'(I)))$ is linear in $\mathrm{r}(e, \tau(I))$.

\paragraph{Misrankings. }
Based on Definition~\ref{def:trustworthy_tuple}, a  returned tuple $e$ is untrustworthy exactly when there exists $e'$ that the intent ranks above $e$
(i.e., $\mathrm r(e',\tau(I))<\mathrm r(e,\tau(I))$) but the interpretation does not place $e'$ above $e$
(either $e'$ is omitted, or it is returned below $e$).
This is a \emph{pairwise} violation: it is witnessed by the pair $(e',e)$.
Thus, even if the data source performs a complex reranking (e.g a sequence of swaps), it suffices to analyze pairwise swaps as
canonical local witnesses that isolate a single potentially inverted pair.

\begin{lemma}
\label{lemma:two_tuple_shift}
In the setting of Proposition~\ref{proposition:affine_boundary}, suppose $\beta,\beta'$ differ only on $e\neq e'$ and satisfy
$\mathrm r(e,\beta(I))-\mathrm r(e,\beta'(I))=-(\mathrm r(e',\beta(I))-\mathrm r(e',\beta'(I)))\neq 0$.
Then the user's indifference condition depends on the difference between the tuples' ranks in the intent:
\begin{equation}
\label{equation:indifference-threshold}
\mathrm r(e,\tau(I))-\mathrm r(e',\tau(I))\;=\;\delta,
\qquad\text{where}\qquad
\delta=\frac{\alpha^{s}}{\gamma_{e}}-(b(e)-b({e'}))
\end{equation}
\end{lemma}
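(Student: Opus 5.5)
The plan is to specialize Proposition~\ref{proposition:affine_boundary} to the case where $\beta$ and $\beta'$ differ only on the pair $e,e'$. Write $d:=\mathrm r(e,\beta(I))-\mathrm r(e,\beta'(I))\neq 0$. The hypothesis says $\mathrm r(e',\beta(I))-\mathrm r(e',\beta'(I))=-d$, and every other tuple $f\notin\{e,e'\}$ has equal ranks in $\beta(I)$ and $\beta'(I)$.

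First, I compute the normal vector from Proposition~\ref{proposition:affine_boundary}. We have $\gamma_e=2d$ and $\gamma_{e'}=-2d=-\gamma_e$. For every other tuple $f$, $\gamma_f=0$. So the user's indifference hyperplane $\sum_{f}\gamma_f\,\mathrm r(f,\tau(I))=\alpha^r$ collapses to $\gamma_e\bigl(\mathrm r(e,\tau(I))-\mathrm r(e',\tau(I))\bigr)=\alpha^r$. Since $\gamma_e\neq 0$, this is equivalent to $\mathrm r(e,\tau(I))-\mathrm r(e',\tau(I))=\alpha^r/\gamma_e$. This already shows that the condition depends only on the difference of the two intent ranks.

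To be safe, I would re-derive the identity directly by expanding
\[
-(\mathrm r(f,\tau)-\mathrm r(f,\beta))^2+(\mathrm r(f,\tau)-\mathrm r(f,\beta'))^2
\]
for $f\in\{e,e'\}$. This confirms the intercept $\alpha^r=\sum_f\bigl(\mathrm r(f,\beta(I))^2-\mathrm r(f,\beta'(I))^2\bigr)$, with only $e,e'$ contributing. The remaining terms cancel, so nothing else can enter the condition.

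Second, I rewrite $\alpha^r$ through $\alpha^s$ using the relation stated in Proposition~\ref{proposition:affine_boundary}. The sum $2\sum_f b(f)\bigl(\mathrm r(f,\beta(I))-\mathrm r(f,\beta'(I))\bigr)$ reduces to $2b(e)d-2b(e')d=\gamma_e\bigl(b(e)-b(e')\bigr)$. Hence $\alpha^r=\alpha^s-\gamma_e\bigl(b(e)-b(e')\bigr)$. Dividing by $\gamma_e$ gives
\[
\frac{\alpha^r}{\gamma_e}=\frac{\alpha^s}{\gamma_e}-\bigl(b(e)-b(e')\bigr)=\delta,
\]
which is exactly Equation~\eqref{equation:indifference-threshold}.

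The only delicate point is the bookkeeping of signs and of which intercept appears. The statement phrases $\delta$ via the data source's intercept $\alpha^s$ even though it describes the user's indifference. I expect the main care to go into verifying the sign convention of $\gamma_e$ and of the bias correction in $\alpha^s$, so that the bias term enters as $-(b(e)-b(e'))$ rather than with the opposite sign. The nondegeneracy assumption $d\neq 0$ is used only to divide by $\gamma_e$.
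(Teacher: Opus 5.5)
Your proposal is correct and follows the paper's proof step for step. Both set $\gamma_{e'}=-\gamma_e$ and $\gamma_f=0$ for all other tuples, reduce the user's hyperplane to $\mathrm r(e,\tau(I))-\mathrm r(e',\tau(I))=\alpha^r/\gamma_e$, and substitute $\alpha^r=\alpha^s-\gamma_e\bigl(b(e)-b(e')\bigr)$ from the intercept relation in Proposition~\ref{proposition:affine_boundary}; your extra direct re-expansion is a harmless sanity check, and your sign bookkeeping is right.
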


Intuitively, Lemma~\ref{lemma:two_tuple_shift} says that when the two interpretations $\beta$ and $\beta'$ swap the position of the two tuples, the user’s preference between the two interpretations depends only on the \textit{relative positions of the tuples in the intent} and the data source's bias. We refer to $\delta$ from Equation~\ref{equation:indifference-threshold} as the \textbf{indifference threshold}. In the following results, we use this indifference threshold to detect trustworthy information in the interaction.


\begin{proposition}
\label{prop:intents_change_interpretation}
In the setting of Lemma~\ref{lemma:two_tuple_shift}, let $\delta_0=\frac{\alpha^{s}}{\gamma_e}$ denote the indifference threshold when $b(e)=b(e')$
and let $\delta=\delta_0-(b(e)-b(e'))$ denote the bias-shifted indifference threshold. Then the data source interpretation is untrustworthy for $e \in \beta(I)$ if there exists a tuple $e'\in \tau(I)$ such that the tuples rank difference in the intent satisfies $r(e,\tau(I))-\mathrm r(e',\tau(I)) \in\big(\min\{\delta_0,\delta\},\ \max\{\delta_0,\delta\}\big].$
\end{proposition}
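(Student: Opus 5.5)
The argument compares two decision rules for the pair $(e,e')$ that are identical except for the bias. Write $d:=\mathrm r(e,\tau(I))-\mathrm r(e',\tau(I))$. Take $\beta$ to be the interpretation that places $e$ above $e'$ and $\beta'$ the one that swaps them, so that $\gamma_e=2(\mathrm r(e,\beta(I))-\mathrm r(e,\beta'(I)))<0$ and $\gamma_{e'}=-\gamma_e$.

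First, I apply Proposition~\ref{proposition:affine_boundary} to this pair. Since $\beta$ and $\beta'$ differ only on $e,e'$, all other terms cancel. The difference $U(\tau,\beta)-U(\tau,\beta')$, for either agent, then reduces to an affine function of $d$ with slope $\gamma_e$. For the data source the intercept is shifted by $2\sum b(\cdot)(\mathrm r(\cdot,\beta(I))-\mathrm r(\cdot,\beta'(I)))=\gamma_e(b(e)-b(e'))$. This is exactly what Lemma~\ref{lemma:two_tuple_shift} records.

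It follows that, with no bias difference, the agent prefers $\beta$ (ranking $e$ above $e'$) precisely on one side of $\delta_0$, say $d>\delta_0$ after fixing the sign of $\gamma_e$. With bias, the data source prefers $\beta$ precisely when $d>\delta$. I will state explicitly that the direction of the inequality is determined by the sign of $\gamma_e$, and that in both orientations the set of $d$ on which the two rules disagree is the half-open interval $(\min\{\delta_0,\delta\},\max\{\delta_0,\delta\}]$. The endpoint convention is fixed by resolving the indifference tie in the data source's favor; I expect this to need a one-line remark on tie-breaking.

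Next, I interpret the disagreement region. On $d\in(\min\{\delta_0,\delta\},\max\{\delta_0,\delta\}]$, the unbiased, common-interest ordering would respect the intent ordering between $e$ and $e'$, but the bias-shifted rule selects the opposite ordering. Consider the case $b(e)>b(e')$, so $\delta<\delta_0$, and suppose $e'$ is ranked above $e$ in the intent. Then the data source chooses $\beta$, which puts $e$ above $e'$ or drops $e'$. This gives $\mathrm r(e',\tau(I))<\mathrm r(e,\tau(I))$ together with either $e'\notin\beta(I)$ or $\mathrm r(e',\beta(I))>\mathrm r(e,\beta(I))$, which is precisely Definition~\ref{def:trustworthy_tuple}(i) or (ii). Hence $e$ is untrustworthy. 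The case $b(e)<b(e')$ is symmetric, with the roles of $e$ and $e'$ exchanged; there the interval condition witnesses a misranking of $e$ relative to $e'$ by the same argument.

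The main obstacle is the sign bookkeeping. The claim is that the interval being nonempty and containing $d$ forces the chosen interpretation to invert the intent order of the specific pair, rather than merely producing a different interpretation. I would handle it by a case split on $\mathrm{sign}(\gamma_e)$ and on $\mathrm{sign}(b(e)-b(e'))$, using the paper's rank convention (smaller rank means higher position) to translate "prefers $\beta$" into "places $e$ above $e'$". Once the orientation is fixed, the containment argument is just the comparison of two thresholds on the real line, and existence of one such $e'$ suffices for untrustworthiness.
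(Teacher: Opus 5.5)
Your threshold comparison is essentially the whole of the paper's proof. The paper shows that outside $(\min\{\delta_0,\delta\},\max\{\delta_0,\delta\}]$ the realized $d$ lies on the same side of both thresholds, so the induced interpretation can change only inside the interval. It never turns this into untrustworthiness of $e$, and the step you add to make that conversion is where the genuine gap lies.

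Do the sign bookkeeping explicitly for a position swap of $e$ and $e'$. Then $\alpha^r=\sum_x\big(\mathrm r(x,\beta(I))^2-\mathrm r(x,\beta'(I))^2\big)=0$. By Lemma~\ref{lemma:two_tuple_shift}, $\delta=\alpha^r/\gamma_e=0$ is the \emph{user's} threshold, and $\delta_0=\alpha^s/\gamma_e=b(e)-b(e')$ is the source's point-belief threshold. Since $\gamma_e<0$, each agent prefers $\beta$ (placing $e$ above $e'$) iff $d$ is \emph{at most} its threshold.

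\textbf{Case $b(e)>b(e')$.} The interval is $(0,b(e)-b(e')]$. This is what forces $d>0$, i.e.\ $e'$ above $e$ in the intent; you only supposed it. The source then picks $\beta$, and Definition~\ref{def:trustworthy_tuple}(i) or (ii) applies, so $e$ is untrustworthy.

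\textbf{Case $b(e)<b(e')$.} The interval is $(b(e)-b(e'),0]$. Here $e$ is above $e'$ in the intent, and the source picks $\beta'$, placing $e'$ above $e$. Exchanging the roles of $e$ and $e'$, as you propose, shows that $e'$ is untrustworthy (witnessed by $e$). The tuple $e$ is merely demoted, which Definition~\ref{def:trustworthy_tuple} does not count against $e$. So the conclusion ``$e$ is untrustworthy'' does not follow from this pair and is in general false. The statement only holds with the orientation $b(e)>b(e')$, or if read as flagging the pair rather than $e$.

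There are three further problems.

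\begin{enumerate}
\item Your rule ``the source prefers $\beta$ iff $d>\delta$'' reverses both the direction of the inequality and which agent owns which threshold. The statement's wording invites the second reading, but the algebra does not support it. The two slips cancel on the interval. As written, though, the rule would have the source put $e$ above $e'$ exactly when $e$ is far below $e'$ in the intent.

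\item Taking $\beta'$ to be a swap is a real restriction of the Lemma's setting, and it must be stated. For a general opposite shift, the user's threshold is $\mathrm r(e,\beta'(I))-\mathrm r(e',\beta(I))\neq 0$. For example, let $e,e'$ sit at positions $1,5$ in $\beta$ and $2,4$ in $\beta'$, with $b(e)-b(e')=2$. The interval is then $(-3,-1]$. It contains intents with $e$ above $e'$, where neither interpretation inverts the pair.

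\item Proposition~\ref{proposition:affine_boundary} puts the source's indifference in posterior-mean space, $\Phi_e-\Phi_{e'}=\delta_0$, not at the realized $d$. For the source's choice to be a function of $d$ at all, you must assume the submitted query reveals $d$, e.g.\ a separating equilibrium. The paper's proof also leaves this implicit.
\end{enumerate}
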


\paragraph{Baseline Algorithm. }
Proposition~\ref{prop:intents_change_interpretation} gives a baseline algorithm to detect untrustworthy information in the interaction. For each returned tuple $e\in \beta(I)$ and each tuple in the result of the intent on the schema domain $\mathcal{R}$ such that $e'\notin\beta(I)$ or $\mathrm r(e',\beta(I))>\mathrm r(e,\beta(I))$, the algorithm computes the unbiased and biased indifference thresholds
$\delta_0=\frac{\alpha^s}{\gamma_e}$ and $\delta=\delta_0-(b(e)-b(e'))$. If the relative positions of these tuples in the intent satisfy the condition in the proposition, then the $e\in\beta(I)$ is trustworthy. This algorithm is inefficient since for each tuple in the interpretation, it has to compute its positions in the intent based on the schema and scales with the size of the schema domain.

\begin{theorem}
\label{theorem:closed-form-trustworthy-information}
 Given quadratic utility functions for the user and the data source, interpretation result $\beta(I)$ and $\delta\in \{1,\dots,z-1\}$, a tuple $e\in \beta(I)$ with bias $b(e)$ is untrustworthy if there exists any tuple $e'$ such that $e'\notin \beta(I)$ or
$\mathrm r(e',\beta(I))>\mathrm r(e,\beta(I))$ and its bias $b(e')$ satisfies: 
\begin{equation}
\label{equation:untrustworthy-interval}
b({e'}) \in \big[\,b(e)-g(z,\delta),\ \ b(e)-\max\{g(z,\delta)-1,\ s(z,\delta)\}\,\big).
\end{equation}
where $g(z,\delta)=\frac{-\delta^3+3\delta^2 z+\delta z^2+\delta+z^2-z}{d(z,\delta)}$, $
s(z,\delta)=\frac{(z-\delta)(z-\delta+1)(2\delta+z-1)}{d(z,\delta)}$ and $d(z,\delta)=3\big(z^2-z+\delta(2z+1)-\delta^2\big)>0$ 
\end{theorem}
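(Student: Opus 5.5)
Our strategy is to turn the rank-space condition of Proposition~\ref{prop:intents_change_interpretation} into a condition on the bias of $e'$, using the uniform prior on intent ranks from the paragraph on Possible Intent Ranks. We fix a returned tuple $e\in\beta(I)$ and a candidate witness $e'$ that is either absent from $\beta(I)$ or ranked below $e$ in $\beta(I)$. We then take $\beta'$ to be the interpretation that swaps $e$ and $e'$, so Lemma~\ref{lemma:two_tuple_shift} applies.

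\textbf{Data source's indifference point.} Under the quadratic utilities, Proposition~\ref{proposition:affine_boundary} says the data source's indifference depends only on posterior means $\Phi_e,\Phi_{e'}$. The data source, however, compares the two interpretations over the set of intents the user pools into the same query. By Lemma~\ref{lemma:two_tuple_shift}, the user's indifference is the threshold $\mathrm r(e,\tau(I))-\mathrm r(e',\tau(I))=\delta$. So the pooled set is the region of intents where the rank difference lies on one side of $\delta$, with both ranks independent and uniform on $[z]$.

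\textbf{Truncated moments.} The key computation is the conditional first and second moments of the rank difference $D=\mathrm r(e,\tau)-\mathrm r(e',\tau)$ on $\{D\ge\delta\}$, and the same on the neighbouring cell. The number of pairs with $D=j$ is $z-j$ for $j\ge 0$. Summing $\sum_{j=\delta}^{z-1}(z-j)$, $\sum j(z-j)$ and $\sum j^2(z-j)$ gives cubic polynomials in $z$ and $\delta$. Their ratios should produce the denominator $d(z,\delta)=3(z^2-z+\delta(2z+1)-\delta^2)$, which is $3\times$ the count of a suitably shifted cell, with $g$ and $s$ as numerators. I would verify $d>0$ for $1\le\delta\le z-1$ directly: it is a concave quadratic in $\delta$, so it suffices that $d>0$ at both endpoints.

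\textbf{Equilibrium condition.} Write the equilibrium condition that at the boundary the data source is indifferent given the pooled posterior. By Proposition~\ref{proposition:affine_boundary}, this reads $\gamma_e(\Phi_e-\Phi_{e'})=\alpha^s$, with $\alpha^s$ shifted by $2\sum b\,(\cdot)$, i.e.\ linear in $b(e)-b(e')$. Solving for $b(e')$ in terms of $b(e)$ and the truncated moments gives one endpoint $b(e)-g(z,\delta)$. The other endpoint comes from the same equation with $\delta$ replaced by $\delta+1$ (the integer step gives the $g-1$ term). It is further capped by $s(z,\delta)$, the value at which the cell $\{D\ge\delta\}$ becomes empty or pooling collapses, i.e.\ where $\max\{\delta_0,\delta\}$ hits the support edge. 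Inverting the interval of Proposition~\ref{prop:intents_change_interpretation} under this monotone (affine in $b(e')$) map gives exactly \eqref{equation:untrustworthy-interval}, with closed/open ends matching $(\cdot,\cdot]$ after the sign flip. Then Proposition~\ref{prop:intents_change_interpretation} yields untrustworthiness.

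\textbf{Main obstacle.} I expect the hardest step to be the bookkeeping of the discrete truncated sums and identifying which cell determines which endpoint, in particular justifying the $\max\{g-1,s\}$. I would first check small cases ($z=2,3$) numerically against the formulas to fix the orientation of the inequalities, then do the general polynomial simplification.
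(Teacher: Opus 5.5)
There is a genuine gap. The equation you plan to solve cannot produce $g(z,\delta)$, and your reading of $s(z,\delta)$ misses the step that actually makes $e$ untrustworthy.

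\textbf{Where $g$ comes from.} You attach the equilibrium condition to the wrong agent. The theorem's $\delta$ is the integer cut of a two-message threshold partition. With $r=\mathrm r(e,\tau(I))$ and $r'=\mathrm r(e',\tau(I))$, the two cells are $\mathcal H(\delta)=\{(r,r')\in[z]^2: r'-r\ge\delta\}$ and $\mathcal H'(\delta)=[z]^2\setminus\mathcal H(\delta)$. The data source is never required to be indifferent; it just best-responds on each cell. With quadratic loss and real-valued interpretation ranks, that best response is $\mathrm r(x,\beta_X)=\Phi_x(X)-b(x)$, so only the cell means of each rank matter and your second moments are unnecessary.

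The self-consistency condition belongs to the \emph{user}. Her indifference between $\beta_{\mathcal H}$ and $\beta_{\mathcal H'}$ is a real threshold $t$ on $r'-r$. It equals one half of $(\mathrm r(e',\beta_{\mathcal H})+\mathrm r(e',\beta_{\mathcal H'}))-(\mathrm r(e,\beta_{\mathcal H})+\mathrm r(e,\beta_{\mathcal H'}))$, which evaluates to $t=\delta-g(z,\delta)+(b(e)-b(e'))$. The integer partition is reproduced iff $t\in(\delta-1,\delta]$, i.e.\ $b(e)-b(e')\in(g-1,g]$. That single condition gives both $g$ and $g-1$.

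Substituting $\delta+1$ instead would give $g(z,\delta+1)$, which is not $g(z,\delta)-1$. For example, with $z=3$ you get $g(3,1)=2/3$ but $g(3,2)=9/8$.

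The equation you actually write is the data source's indifference between $\beta$ and its $e$/$e'$ swap under a pooled posterior. For that swap, $\alpha^r=0$ and $\alpha^s=\gamma_e(b(e)-b(e'))$, so the equation collapses to $\Phi_e-\Phi_{e'}=b(e)-b(e')$. On the two cells this gives $-(z+2\delta-1)/3$ or $s(z,\delta)$, never $g$.

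\textbf{Where $s$ comes from.} The $s$ endpoint is not where a cell empties or pooling collapses; both cells are nonempty for every $\delta\le z-1$, whatever the bias. It is exactly the swap-indifference equation above, evaluated on the complement cell: $s(z,\delta)=\Phi_e(\mathcal H')-\Phi_{e'}(\mathcal H')$.

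The condition $b(e)-b(e')>s$ says the following. After the message $\mathcal H'$, which pools intents where $e'$ truly outranks $e$, the data source's best response still places $e$ ahead of $e'$. That misranking is the witness Definition~\ref{def:trustworthy_tuple} requires, and it is the step your plan is missing.

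Routing through Proposition~\ref{prop:intents_change_interpretation} cannot supply this step. That proposition involves only the real thresholds $\delta_0$ and $\delta$ (a different $\delta$ from the theorem's integer cut) and no posterior means. So no affine inversion of its interval produces $g$ or $s$.

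The correct ending is to intersect $(g-1,g]$ with $(s,\infty)$ and solve for $b(e')$, which gives the stated half-open interval. For the bookkeeping, note that $d(z,\delta)=6\,|\mathcal H'(\delta)|$. That is why $d$ is the denominator of the complement-cell means, and it also makes $d>0$ immediate.
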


Based on Theorem~\ref{theorem:closed-form-trustworthy-information} a tuple $e\in \beta(I)$ is untrustworthy whenever there exists a tuple $e'$ (either not in $\beta(I)$ or ranked strictly lower than $e$ in $\beta(I)$) whose bias satisfies the condition in Equation~\ref{equation:untrustworthy-interval} for some $\delta$. Specifically, the theorem provides two requirements for $b(e')$ values. First, $g(z,\delta)$ characterizes which bias gaps make $\delta$ a feasible indifference threshold. Second, $s(z,\delta)$ is the minimum bias difference between the two tuples $b(e)-b(e')$ such that the data source ranks $e$ ahead of $e'$ to maximize its utility.

\begin{algorithm}[t]
\caption{\textsc{Efficiently Detecting Trustworthy Answers}}
\label{alg:efficient-credibilility-filter}
\begin{algorithmic}[1]
\footnotesize
\State \textbf{Input:} interaction setting $(\mathcal R,\tau,U^r,U^s)$, interpretation result $\beta(I)$
\State \textbf{Output:} set $\mathcal{C}$ of \textbf{trustworthy} tuples in $\beta(I)$

\State $\mathcal{C}\gets \varnothing$

\State $\mathcal{S}\gets\varnothing$ 
\State Compute $[b_{\min},b_{\max}]$ based on $b(.)$ and $\mathcal{R}$
\For{$\delta=1$ \textbf{to} $z-1$}
    \State compute $g\gets g(z,\delta)$ and $s\gets s(z,\delta)$ \Comment{ Theorem~\ref{theorem:closed-form-trustworthy-information}}
    \State $b^-_\delta \gets \max\{g-1,\ s\}$,\quad $b^+_\delta \gets g$
    \If{$b^-_\delta < b^+_\delta$}
        \State add $(\delta,b^-_\delta,b^+_\delta)$ to $\mathcal{S}$ \Comment{store feasible $(\delta,b^-_\delta,b^+_\delta)$} 
    \EndIf
\EndFor
\ForAll{$e\in \beta(I)$}
    \State $\mathrm{trustworthy}\gets \textbf{true}$
    \ForAll{$(\delta,b^-_\delta,b^+_\delta)\in\mathcal{S}$}
        \Comment{ Interval for possible $b(e')$: $b(e')\in[a,c)$}
        \State $a \gets b(e)-b^+_\delta$;\quad $c \gets b(e)-b^-_\delta$
        \If{$\max\{a,b_{\min}\} < \min\{c,b_{\max}\}$}
            \State $\mathrm{trustworthy}\gets \textbf{false}$
            \State \textbf{break} \Comment{early exit}
        \EndIf
    \EndFor
    \If{$\mathrm{trustworthy}$}
        \State $\mathcal{C}\gets \mathcal{C}\cup\{e\}$
    \EndIf
\EndFor

\State \Return $\mathcal{C}$
\end{algorithmic}
\end{algorithm}

\paragraph{Efficient Algorithm. }
Algorithm~\ref{alg:efficient-credibilility-filter} implements the condition in Theorem~\ref{theorem:closed-form-trustworthy-information} using only the bias range $[b_{\min},b_{\max}]$ of the data source's bias function $b(.)$ based on the schema domain. 
It first computes $g(z,\delta)$ and $s(z,\delta)$ for all $\delta\in\{1,\dots,z-1\}$ and keeps only those $\delta$ for which the theorem’s interval is nonempty, i.e., $b^-_\delta=\max\{g(z,\delta)-1,\ s(z,\delta)\}< b^+_\delta=g(z,\delta).$
For a returned tuple $e \in \beta(I)$ and fixed $\delta$ the algorithm checks Equation~\ref{equation:untrustworthy-interval}. Since individual values $b(e')$ are unknown, the algorithm certifies $e \in \beta(I)$ as trustworthy whenever this interval has empty intersection with the feasible bias range $[b_{\min},b_{\max}]$ for every $\delta$ retained above. If the intersection is empty for all retained $\delta$, then no feasible bias value can satisfy the theorem’s condition, and $e$ is trustworthy. If the intersection is nonempty for some $\delta$, then $e$ is potentially untrustworthy.

\noindent
\textbf{Complexity.}
Computing $g(z,\delta),s(z,\delta)$ for all $\delta$ and constructing $\mathcal{S}$ takes $O(z)$ time. The per-tuple check evaluates at most $|\mathcal{S}|\le z-1$ overlap tests, each in $O(1)$ time. Thus the total running time is $O(z+k|\mathcal{S}|)\subseteq O(kz)$ for $k=|\mathrm{elm}(\beta(I))|$ tuples.

\section{Influential Strategies}
\label{sec:maximum}

The data source may steer results toward sponsored,
high-margin, or otherwise biased products. As a result, directly expressing an intent, e.g., searching
for a particular low-priced product, may yield a ranked list dominated by promoted alternatives. A
user may, therefore, submit a query that strategically adjusts the ranking of products revealed to the
data source to recover relevant results. However, the effectiveness of users' strategies may depend on their interpretation by the data source and the data source’s degree of bias. In this section, we investigate these settings and
propose algorithms for finding strategies that lead the interaction to an equilibrium and improve the user's utility.

\subsection{Detecting Influential Query}
\label{sec:detecting-credibility}
We first characterize what information about the user’s intent can be credibly communicated to the data source. Building on this, we present an algorithm for finding
a query that leads the interaction to an influential equilibrium.



\begin{problem}
Given the interaction setting $(\mathcal R,\tau,U^r,U^s)$ find a query that leads to an influential equilibrium.
\end{problem}

In Section~\ref{sec:possible-informative-equilibria}, we identified untrustworthy rankings in the data source's interpretation. A user aware of such misrankings may counteract this bias by adjusting the ranking of tuples in their queries. For instance, if a shopping website has an incentive to promote a product $e'$ (or demote $e$), the user's query may rank the product $e$ higher than $e'$ only when $e$ is ranked \emph{sufficiently higher} in the intent. We formalize this notion using relative rank constraints.

\paragraph{Relative rank constraints. } Given an intent $\tau$ over an instance $I$ of schema $\mathcal{R}$, tuples $e,e'\in\tau(I)$, and an
integer $\delta\ge 1$, a relative rank constraint is the inequality $\mathrm r(e',\tau(I))-\mathrm r(e,\tau(I)) \;\ge\; \delta.$ Intuitively, the inequality says that $e$ is at least $\delta$ positions above $e'$ in the intent.

\begin{theorem}
\label{thm:pairwise_delta_closed_form}
Consider the same setting as Theorem~\ref{theorem:closed-form-trustworthy-information}.
For an ordered pair $(e,e')\in\tau(I)$, let $b= b(e)-b(e')$, and let $g(z,\delta)$ be as in Theorem~\ref{theorem:closed-form-trustworthy-information}. Define $\delta^\star(e,e')$ to be the (largest) integer solution $\delta\in\{1,\dots,z-1\}$ to $g(z,\delta)-1 \;<\; b \;\le\; g(z,\delta),$
and set $\delta^\star(e,e')=\bot$ if no such integer solution exists.
If $\delta^\star(e,e')\neq\bot$, then the corresponding relative rank constraint is: $\mathrm r(e',\tau(I))-\mathrm r(e,\tau(I)) \;\ge\; \delta^\star(e,e')$
\end{theorem}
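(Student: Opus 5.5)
The plan is to read Theorem~\ref{thm:pairwise_delta_closed_form} as an inversion of Theorem~\ref{theorem:closed-form-trustworthy-information}. There, for a fixed indifference threshold $\delta$, the bias gap $b=b(e)-b(e')$ that makes $\delta$ a feasible threshold is pinned down by $g(z,\delta)$. Here we fix the pair $(e,e')$, and hence $b$, and solve for $\delta$.

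\textbf{Step 1: the threshold as a function of $b$.} Specialize Lemma~\ref{lemma:two_tuple_shift} to the swap of $e$ and $e'$ by one unit, so that $\gamma_e=\pm 2$. The data source's indifference is evaluated at its posterior, which takes the uniform prior on $[z]$ and conditions on the event that the two tuples are at least $\delta$ apart. Use Proposition~\ref{proposition:affine_boundary} to write the data source's indifference condition as an equation in the conditional mean rank difference $\mathbb E[\mathrm r(e',\tau)-\mathrm r(e,\tau)\mid \mathrm r(e',\tau)-\mathrm r(e,\tau)\ge\delta]$ shifted by $b$. Then check that this conditional expectation, computed by summing over pairs of ranks in $[z]$ at distance at least $\delta$, equals exactly the rational function $g(z,\delta)$ with denominator $d(z,\delta)$. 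This reproduces the quantity used in Theorem~\ref{theorem:closed-form-trustworthy-information}. The conclusion of this step is that $\delta$ is a self-consistent boundary precisely when the bias gap lies in the half-open unit window $(g(z,\delta)-1,\,g(z,\delta)]$. The left endpoint comes from requiring that the marginal intent at distance $\delta-1$ would not also be separated; the right endpoint comes from requiring that the data source weakly prefers following the query at distance $\delta$.

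\textbf{Step 2: $g$ is increasing in $\delta$.} Show that $g(z,\delta)$ is increasing on $\{1,\dots,z-1\}$. Conditioning on a larger minimum gap raises the conditional mean gap, so this can be argued directly from the expectation form without differentiating the rational expression. Consequently the windows $(g-1,g]$ are ordered, and for a given $b$ the set of valid $\delta$ is an integer interval, possibly empty. Taking the largest element gives $\delta^\star(e,e')$, and $\bot$ corresponds to $b$ lying outside every window.

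\textbf{Step 3: the equilibrium argument.} Construct the user strategy that sends a query ranking $e$ above $e'$ exactly when $\mathrm r(e',\tau)-\mathrm r(e,\tau)\ge\delta^\star$, and otherwise sends the query that ranks $e'$ above $e$. Verify the two conditions of Definition~\ref{definition:equilibrium}.
\begin{itemize}
\item The data source's posterior after the first query is the conditional distribution from Step 1. Since $b\le g(z,\delta^\star)$, following the query is a best response for the data source.
\item The user's indifference hyperplane from Lemma~\ref{lemma:two_tuple_shift} sits at $\delta^\star$, so users with intents on each side prefer the query prescribed for them. Supermodularity (Definition~\ref{definition:supermodular}) provides the single-crossing property that makes this partition monotone.
\end{itemize}
Maximality of $\delta^\star$ is what makes the condition a constraint that must hold, rather than merely one that may hold. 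Any smaller gap would place the conditional mean below $b+1$, and the data source would then deviate, which violates the lower window bound $g-1<b$. This shows that $\mathrm r(e',\tau(I))-\mathrm r(e,\tau(I))\ge\delta^\star$ is exactly the relative rank constraint that is credible.

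\textbf{Main obstacle.} I expect the hardest part to be the precise algebra identifying the conditional expectation with $g(z,\delta)$, together with the off-by-one convention that produces the half-open window. I would first compute the two sums $\sum_{k\ge\delta}(z-k)$ and $\sum_{k\ge\delta}k(z-k)$ in closed form and match their ratio to the numerator of $g$ divided by $d(z,\delta)$. If the constant terms do not match, I would adjust for the fact that the swap moves ranks by one, which is where the ``$-1$'' in the window comes from.
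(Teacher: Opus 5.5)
Your framing is right: you read the statement as the consistency condition behind Theorem~\ref{theorem:closed-form-trustworthy-information}, namely a two-message threshold equilibrium with window $(g-1,g]$. The gap is in Step~1, which misidentifies $g$, and everything downstream inherits the error. Write $r=\mathrm r(e,\tau)$ and $r'=\mathrm r(e',\tau)$. Let $\mathcal H(\delta)=\{(r,r')\in[z]^2: r'-r\ge\delta\}$ and let $\mathcal H'(\delta)$ be its complement.

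Under the uniform prior, $\mathbb E[r'-r\mid\mathcal H(\delta)]=(z+2\delta-1)/3$. For $z=2,\delta=1$ this equals $1$, whereas $g(2,1)=2/3$. No off-by-one adjustment repairs this, because $d(z,\delta)=6\,|\mathcal H'(\delta)|$. The denominator comes from the posterior after the \emph{other} message, which conditions on the complement $\mathcal H'(\delta)$ (not on ``$e'$ above $e$''), and you never compute that posterior.

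In the paper (condition \eqref{eq:consistency}), the data source simply best-responds to each message $X\in\{\mathcal H,\mathcal H'\}$ with $\mathrm r(x,\beta_X)=\Phi_x(X)-b(x)$. So there is no data-source inequality to impose; the binding constraint is the \emph{user's}. Since $\Phi_e(X)+\Phi_{e'}(X)=z+1$ on both regions, the relevant pair is $\beta_{\mathcal H},\beta_{\mathcal H'}$ (not a unit swap with $\gamma_e=\pm2$). This pair satisfies the hypothesis of Lemma~\ref{lemma:two_tuple_shift}, and a user prefers the $\mathcal H$-message iff $r'-r\ge t$, where
\[
t=\tfrac12\Bigl(\mathbb E[r'-r\mid\mathcal H]+\mathbb E[r'-r\mid\mathcal H']\Bigr)+b=\delta-g(z,\delta)+b,
\qquad
g(z,\delta)=\delta-\tfrac12\Bigl(\tfrac{z+2\delta-1}{3}-s(z,\delta)\Bigr).
\]
On integer gaps this threshold reproduces $\mathcal H(\delta)$ iff $t\in(\delta-1,\delta]$, i.e.\ $b\in(g-1,g]$. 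Both endpoints are incentive constraints of the marginal users at gaps $\delta$ and $\delta-1$.

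This is why Step~3 does not go through. Your claim that the user's indifference hyperplane ``sits at $\delta^\star$'' is exactly the statement $t\in(\delta^\star-1,\delta^\star]$ that must be proved, so the user-side check is circular. The data-source condition you impose (that $b\le g$ makes following a best response) is also not the right one. After $\mathcal H$, the source orders $e$ ahead of $e'$ iff $b>-(z+2\delta-1)/3$, which has nothing to do with $g$.

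Your maximality argument is false as well, because the windows overlap. For $z=10$, $g(10,1)=2/3$ and $g(10,2)=101/96$, so $b=1/2$ makes both $\delta=1$ and $\delta=2$ self-consistent threshold equilibria. Thus ``largest'' is a selection convention; it is not forced by a smaller $\delta$ violating $g-1<b$.

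Finally, Step~2's monotonicity argument rests on the same incorrect identification of $g$ with a conditional mean. Monotonicity is not needed for the statement (the paper only assumes it to justify binary search). If you want it, you must derive it from the midpoint formula above.
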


Theorem~\ref{thm:pairwise_delta_closed_form} provides a closed form solution to find the relative rank constraint between tuples. Informally, the theorem says that the data source's larger bias in favor of $e'$ requires the user to provide stronger evidence that $e$ truly outranks $e'$ in the intent, i.e., a larger $\delta^\star$. The following result follows from the theorem.

\begin{proposition}
\label{prop:one_delta_influential}
 If there exists a pair $e,e'$ such that $\delta^\star(e,e')\neq\bot$, then the interaction has an influential equilibrium.
\end{proposition}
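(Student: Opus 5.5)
The plan is to build an explicit two-query (partition) equilibrium from the pair $(e,e')$ with $\delta^\star:=\delta^\star(e,e')\neq\bot$, and then check the two conditions of Definition~\ref{definition:equilibrium}. Fix the ranks of all tuples other than $e,e'$, and let $\beta$ and $\beta'$ be the two interpretations that differ only by swapping $e$ and $e'$, as in Lemma~\ref{lemma:two_tuple_shift}. In $\beta$, $e$ is placed above $e'$; in $\beta'$ the order is reversed. Define the user strategy by a threshold on the intent rank gap $\Delta(\tau):=\mathrm r(e',\tau(I))-\mathrm r(e,\tau(I))$:
\begin{itemize}
\item if $\Delta(\tau)\ge\delta^\star$, the user submits a query $q$ that asserts the relative rank constraint of Theorem~\ref{thm:pairwise_delta_closed_form}, that is, $e$ ranked above $e'$;
\item otherwise, the user submits a different query $q'$.
\end{itemize}
The data source answers $q$ with $\beta$ and $q'$ with $\beta'$. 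Under the uniform prior on $[z]$, both message sets have positive prior mass, because $1\le\delta^\star\le z-1$. Since $P^s(\cdot\mid q)\neq P^s(\cdot\mid q')$, the pair is influential provided it is an equilibrium.

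\textbf{Data source optimality.} Compute the posterior $\Phi(\cdot\mid q)$, which is the uniform prior conditioned on $\Delta\ge\delta^\star$, and take its posterior mean ranks $\Phi_e,\Phi_{e'}$. By Proposition~\ref{proposition:affine_boundary} and Lemma~\ref{lemma:two_tuple_shift}, the data source weakly prefers $\beta$ exactly when the posterior-mean gap $\Phi_{e'}-\Phi_e$ lies on the correct side of the bias-shifted threshold. I would reuse the computation behind Theorem~\ref{theorem:closed-form-trustworthy-information}: there, $g(z,\delta)$ is precisely the bias gap at which the conditional-mean gap over $\{\Delta\ge\delta\}$ makes the source indifferent. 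So the inequality $b\le g(z,\delta^\star)$ says the source weakly prefers $\beta$ after $q$. The symmetric computation on $\{\Delta<\delta^\star\}$ gives a conditional mean gap below the threshold, so the source prefers $\beta'$ after $q'$; this is where the lower inequality $g(z,\delta^\star)-1<b$ is used. If the data source has other interpretations available, I would restrict attention to this swap pair. That restriction is justified by the pairwise-witness remark preceding Lemma~\ref{lemma:two_tuple_shift} together with supermodularity: optimal responses are monotone in posterior means, so only the $e,e'$ order can change between the two messages.

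\textbf{User optimality.} The user's utility difference between $\beta$ and $\beta'$ is affine in $\Delta(\tau)$ with normal $\gamma_e$ (Proposition~\ref{proposition:affine_boundary}), and user indifference occurs at $\Delta=0$, since the user has no bias. An intent with $\Delta\ge\delta^\star\ge1$ therefore strictly prefers $\beta$ and sends $q$. An intent with $\Delta<\delta^\star$ but $\Delta>0$ would also prefer $\beta$, so it wants to deviate to $q$. This is the main obstacle: the naive threshold is not incentive-compatible for the user. I would resolve it as in Crawford--Sobel, by using the largest-integer choice in the definition of $\delta^\star$ to place the partition boundary at the point where the user's indifference and the source's posterior indifference coincide. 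That means reinterpreting $\delta^\star$ as the boundary at which a type with gap $\delta^\star$ is indifferent once the bias shift $b$ is absorbed, which is what the identity $\delta=\delta_0-(b(e)-b(e'))$ in Lemma~\ref{lemma:two_tuple_shift} encodes.

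If exact user incentive-compatibility still fails for the intermediate types, the fallback is a weaker conclusion. Proposition~\ref{prop:symmetry} gives the equal-bias case. Otherwise, one can argue via Theorem~\ref{theorem:noncred-babbling}: take two set-equivalent intents $\tau,\tau'$ with $\Delta(\tau)\ge\delta^\star$ and $\Delta(\tau')=0$, so that $e$ and $e'$ are only swapped between them. The computation above shows that both $U^r$ and $U^s$ rank $\beta$ over $\beta'$ at $\tau$ and $\beta'$ over $\beta$ at $\tau'$. This verifies the first condition pair of the theorem, and hence that the interaction is influential.
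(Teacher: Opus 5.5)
Your proposal has a genuine gap in its core construction. You fix the source's two responses to an ordinal swap pair $\beta,\beta'$. For such a pair $\alpha^r=0$ in Proposition~\ref{proposition:affine_boundary}, so the user's indifference is pinned at $\Delta=0$. As you note yourself, every intent with $0<\Delta<\delta^\star$ then deviates to $q$. No choice of $\delta^\star$, largest or otherwise, moves that boundary, so the Crawford--Sobel repair you sketch cannot be carried out inside your setup.

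The source-side step is also wrong, because $g(z,\delta)$ is not the bias at which the source is indifferent on $\{\Delta\ge\delta\}$. For the swap pair, the source weakly prefers $\beta$ after $q$ iff $b\ge\Phi_e(\mathcal H)-\Phi_{e'}(\mathcal H)=-(z+2\delta-1)/3$. It weakly prefers $\beta'$ after $q'$ iff $b\le\Phi_e(\mathcal H')-\Phi_{e'}(\mathcal H')=s(z,\delta)$. Concretely, for $z=2$ we have $g(2,1)=2/3$, so $\delta^\star=1$ for every $b\in(-1/3,2/3]$, yet $s(2,1)=1/3$. For $b\in(1/3,2/3]$ the source therefore ranks $e$ above $e'$ after both messages. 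Your profile fails even when $\delta^\star=1$, where user incentive compatibility is not at issue. Separately, restricting the source to the swap pair is not licensed by the pairwise-witness remark: a best response must be optimal among all interpretations.

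The missing idea is that both inequalities defining $\delta^\star$ encode the \emph{user's} incentive constraint, not the source's. The paper obtains the proposition from Theorem~\ref{thm:pairwise_delta_closed_form}, i.e., from the consistency condition \eqref{eq:consistency} in the proof of Theorem~\ref{theorem:closed-form-trustworthy-information}. On each cell $X\in\{\mathcal H(\delta^\star),\mathcal H'(\delta^\star)\}$ the source plays its unconstrained quadratic best response, the real-valued scores $\mathrm r(x,\beta_X)=\Phi_x(X)-b(x)$. These are optimal by construction and differ across cells because the posterior means differ. Both cells are invariant under $(r,r')\mapsto(z+1-r',z+1-r)$, so $\Phi_e(X)+\Phi_{e'}(X)=z+1$ on each. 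Hence the user's indifference between $\beta_{\mathcal H}$ and $\beta_{\mathcal H'}$ is the threshold $\mathrm r(e',\tau(I))-\mathrm r(e,\tau(I))=t:=\delta^\star-g(z,\delta^\star)+b$. The condition $g-1<b\le g$ is then exactly $t\in(\delta^\star-1,\delta^\star]$: gaps $\ge\delta^\star$ weakly prefer $\beta_{\mathcal H}$ and gaps $\le\delta^\star-1$ strictly prefer $\beta_{\mathcal H'}$. What moves the threshold away from $0$ is that the responses shift cardinal positions, not merely the order of $e$ and $e'$.

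Your fallback does not close the gap as written. With $\Delta(\tau')=0$ (a tie, not a swap), the source weakly prefers $\beta'$ at $\tau'$ only if $b\le 0$, which $\delta^\star\neq\bot$ does not give. You could instead choose $\tau'$ with $\Delta(\tau')\le -b$, which is available since $b\le g(z,\delta^\star)\le z-1$. That would make the literal statement follow from Theorem~\ref{theorem:noncred-babbling}, but it uses only a crude bound on $b$. It also does not produce the separating equilibrium on $\mathcal H(\delta^\star)$ versus $\mathcal H'(\delta^\star)$ that the proof of Theorem~\ref{theorem:equilibrium} invokes.
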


\begin{example}
\label{ex:delta_threshold_pair}
Suppose an intent $\tau$ over \texttt{Headphones} instance $I$ places $e_2$ far above $e_3$, e.g., $\mathrm r(e_2,\tau(I))=1$ and $\mathrm r(e_3,\tau(I))=3$. Consider the ordered
pair $(e_2,e_3)$ and assume the data source bias difference is $b=b(e_2)-b(e_3)$ with $b<0$, i.e., the data source is relatively more biased to promote $e_3$. Suppose Theorem~\ref{thm:pairwise_delta_closed_form} yields $\delta^\star(e_2,e_3)=2$. The corresponding $\delta^\star$ ranking
constraint is $\mathrm r(e_3,\tau(I))-\mathrm r(e_2,\tau(I)) \ge 2.$
Since $\mathrm r(e_3,\tau(I))-\mathrm r(e_2,\tau(I))=2$, a user with intent $\tau$ can submit this relative ranking information to influence the data source's interpretation. Now consider a different intent $\tau'$ on the same instance where $e_2$ is only one position above $e_3$, e.g., $\mathrm r(e_2,\tau'(I))=2$ and $\mathrm r(e_3,\tau'(I))=3$. Then $\mathrm r(e_3,\tau'(I))-\mathrm r(e_2,\tau'(I))=1<2$, so the above constraint does not hold under $\tau'$ and a user with intent $\tau'$ would submit the complementary information (i.e., $\mathrm r(e_3,\tau'(I))-\mathrm r(e_2,\tau'(I)) < 2$) instead.
\end{example}

\begin{definition}
\label{def:delta_threshold_query}
Given an intent $\tau$ over an instance $I$ of schema $\mathcal{R}$, a $\boldsymbol{\delta}$-\textbf{query} $q^\delta$ is any conjunction of relative
rank constraints of the form $\mathrm r(e',\tau(I))-\mathrm r(e,\tau(I)) \ge \delta,$ where $\delta\in\mathbb Z$. Equivalently, for some set $\mathcal V\subseteq \tau(I)\times\tau(I)\times\mathbb Z$,
\[
q^\delta \;=\; \bigwedge_{(e,e',\delta)\in\mathcal V}\big(\mathrm r(e',\tau(I))-\mathrm r(e,\tau(I)) \ge \delta\big).
\]
\end{definition}

\begin{lemma}
\label{lem:gap_complement}
For integer-valued ranks,
$\neg\big(\mathrm r(e',\tau(I))-\mathrm r(e,\tau(I)) \ge \delta\big)
\quad\text{iff}\quad
\mathrm r(e,\tau(I))-\mathrm r(e',\tau(I)) \ge 1-\delta.$
\end{lemma}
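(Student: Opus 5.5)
The plan is to reduce the statement to an elementary fact about integers, since the ranks $\mathrm r(e,\tau(I))$ and $\mathrm r(e',\tau(I))$ are positive integers in $[z]$ (Section~\ref{sec:possible-informative-equilibria}) and $\delta\in\mathbb Z$ by Definition~\ref{def:delta_threshold_query}. Set $x=\mathrm r(e',\tau(I))-\mathrm r(e,\tau(I))$. Then $x\in\mathbb Z$, and the left-hand side of the lemma reads $\neg(x\ge\delta)$, i.e.\ $x<\delta$, by the totality of the order on $\mathbb R$.

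The key step uses integrality. For integers $x$ and $\delta$, the strict inequality $x<\delta$ is equivalent to $x\le\delta-1$. In one direction, if $x<\delta$ then $\delta-x$ is a positive integer, so $\delta-x\ge 1$. In the other direction, $x\le\delta-1<\delta$. Multiplying by $-1$ gives $-x\ge 1-\delta$. Since $-x=\mathrm r(e,\tau(I))-\mathrm r(e',\tau(I))$, this is exactly the right-hand side of the lemma. The argument is a chain of equivalences, so both directions of the ``iff'' are covered at once.

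There is no real obstacle; the only point needing care is making explicit that integrality is essential. Without it, $x<\delta$ would not imply $x\le\delta-1$, and the complement of a $\delta$-constraint would not itself be a relative rank constraint. I would add a remark that this is what keeps the class of $\delta$-queries closed under negation of individual constraints, as used in Example~\ref{ex:delta_threshold_pair}. There, a user with intent $\tau'$ submits the complementary constraint $\mathrm r(e_2,\tau'(I))-\mathrm r(e_3,\tau'(I))\ge 1-\delta^\star$. I would also note that the tie case $x=0$ cannot arise when $e\neq e'$, because we assume no duplicates in a totally ordered list. The lemma nonetheless holds regardless of that assumption.
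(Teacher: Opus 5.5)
Your proof is correct, and it is the intended argument. The paper states this lemma without proof and only invokes it in Algorithm~\ref{alg:pair-constraint-quadratic}; your reduction ($x<\delta \iff x\le\delta-1$ for integers, then negate) is all it needs. One small correction to your side remark: the no-duplicates assumption is about duplicate tuples, not tied positions, since the paper explicitly allows $e\sim e'$, but your argument never uses $x\neq 0$, so the proof is unaffected.
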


\paragraph{Algorithm. }
 Algorithm~\ref{alg:detect-delta-relations} constructs the query $q^\delta$ that the input intent $\tau$ can submit to influence the data source's interpretation. As discussed in Section~\ref{sec:efficiently-detecting-uninformative-setting}, user's usually do not know the exact instance $I$, therefore the algorithm uses the information based on the domain of schema $\mathcal{R}$. We therefore run the algorithm over the \emph{rank domain} induced by the
schema and the intent’s ranking attributes. For instance, if $\tau$ has a ranking attribute $A$
(e.g., \texttt{ORDER BY rating}), we take the rank domain to be $\mathbf{dom}(A)$; if $\tau$ ranks by
attributes $A_1,\dots,A_l$, we take the rank domain to be the product
$\mathbf{dom}(A_1)\times\cdots\times\mathbf{dom}(A_l)$. We treat $\tau$ as an order
over the domain of the cartesian product and work with its induced rank function $\mathrm r(\cdot,\tau)$ on this domain.

The algorithm iterates over strictly ordered pairs $e\prec_\tau e'$ in this domain and invokes the subroutine (Algorithm~\ref{alg:pair-constraint-quadratic}) to compute $\delta^\star(e,e')$ based on Theorem~\ref{thm:pairwise_delta_closed_form}. While Theorem~\ref{thm:pairwise_delta_closed_form} provides a closed form for the quadratic case, the approach generalizes to other additive, supermodular utility functions with additive bias. For other supermodular utility functions, one may replace the solution for $g(.)$. To avoid materializing all values $g(.)$, the Algorithm~\ref{alg:pair-constraint-quadratic} evaluates $g(.)$ on demand and finds $\delta^\star$ via binary search over $\delta\in\{1,\dots,z-1\}$. The output query $q^\delta$ is a conjunction of relative rank constraints, where $\mathcal V$ contains exactly one constraint per pair considered. Importantly, $q^\delta$ is constructed to be
\emph{true under the input intent} $\tau$. Intuitively, each pair $(e,e')$ with $\delta^\star(e,e')\neq\bot$, the algorithm adds
either the forward constraint $\mathrm r(e',\tau)-\mathrm r(e,\tau)\ge \delta^\star$ if it holds, or otherwise
adds the logically equivalent complementary constraint
$\mathrm r(e,\tau)-\mathrm r(e',\tau)\ge 1-\delta^\star$
(Lemma~\ref{lem:gap_complement}).


\begin{example}
\label{ex:multiple_rankings_from_delta_query}
Suppose given a query over \texttt{Headphones} instance $I$, Algorithm~\ref{alg:detect-delta-relations} outputs the query $q^\delta \;=\; \big(\mathrm r(e_2,\tau(I))-\mathrm r(e_1,\tau(I))\ge 1\big)\ \wedge\
\big(\mathrm r(e_4,\tau(I))-\mathrm r(e_3,\tau(I))\ge 1\big),$
This conjunction enforces $e_1$ to be ranked above $e_2$ and $e_3$ to be ranked above $e_4$, but
imposes no constraint between the pairs $\{e_1,e_2\}$ and $\{e_3,e_4\}$. Hence $q^\delta$ is
consistent with multiple ranked lists. For instance,
$
q^\delta_1(I): e_1 \prec e_2 \prec e_3 \prec e_4, $ $
 q^\delta_2(I): e_3 \prec e_4 \prec e_1 \prec e_2$ both satisfy $q^\delta$.
\end{example}

\paragraph{Semantics of $q^\delta$. }The output of Algorithm~\ref{alg:detect-delta-relations} is a query $q^\delta$ that is a conjunction
of relative rank constraints. Informally, $q^\delta$ covers a  \textbf{set of rankings}
consistent with its constraints. Depending on the constraints in $q^\delta$, the set of rankings may be
(1) empty, (2) singleton, or (3) contains
multiple rankings. 
To provide intuition, note that ranks lie in the finite set $[z]$. Hence, each inequality expands into a
finite disjunction over differences in ranks. In particular, for any
ordered pair $(e,e')$ and any $\delta\in\{1,\dots,z-1\}$,
$\mathrm r(e',\tau)-\mathrm r(e,\tau) \ge \delta
\;\;\equiv\;\;
\bigvee_{t=\delta}^{z-1}\Big(\mathrm r(e',\tau)-\mathrm r(e,\tau) = t\Big).$
Equivalently, since $\mathrm r(e',\tau)\in[z]$, the constraint can be written as
a disjunction over the possible ranks of $e'$ conditional on the rank of $e$ i.e
$\mathrm r(e',\tau)-\mathrm r(e,\tau) \ge \delta
\equiv
\bigvee_{j=1}^{z}\left(\mathrm r(e,\tau)=j \ \wedge\ \mathrm r(e',\tau)\ge j+\delta\right)$, where terms with $j+\delta>z$ are unsatisfiable For example, if $\delta=2$ and
$\mathrm r(e,\tau)=4$, then any $\mathrm r(e',\tau)\in\{6,7,\dots,z\}$ satisfies the inequality. This explains the three cases. Constraints with $\delta>1$ impose only minimum separation, so they
often leave slack and admit many rankings. If enough constraints use $\delta^\star=1$, they can fix
a total order and yield a singleton. If constraints conflict (e.g., force $j+\delta>z$), the set is
empty.

\begin{algorithm}
\caption{Detect Influential Queries}
\label{alg:detect-delta-relations}
\footnotesize
\begin{algorithmic}[1]
\State \textbf{Input:} interaction setting $(\mathcal R,\tau,U^r,U^s)$
\State \textbf{Output:} query $q^\delta$
\vspace{3pt}
\State $\mathcal V \gets \emptyset$ \Comment{Set of relative rank constraints}
\For{each ordered pair $(e,e') \in \tau$ } \Comment{Iterate in $\tau$-order}
    \State $\ell \gets$ \textsc{Constraint-Quadratic}$(e,e',\tau,b(.))$
    \If{$\ell\neq \bot$}
        \State $\mathcal V \gets \mathcal V \cup \{\ell\}$
    \EndIf
\EndFor
\State $q^\delta \gets \bigwedge_{\ell\in\mathcal V}\ell$ \Comment{Definition~\ref{def:delta_threshold_query}}
\State \Return $q^\delta$
\end{algorithmic}
\end{algorithm}

\begin{algorithm}
\caption{\textsc{Constraint-Quadratic}}
\label{alg:pair-constraint-quadratic}
\begin{algorithmic}[1]
\footnotesize
\State \textbf{Input:} $e,e',\tau,b(.)$
\State \textbf{Output:} relative rank constraint (or $\bot$)
\State $\Delta b \gets b(e)-b(e')$ \Comment{Theorem~\ref{thm:pairwise_delta_closed_form}}
\State $\delta^\star \gets$ binary search over $\delta\in\{1,\dots,z-1\}$ using $g(z,\delta)$
\If{$\delta^\star=\bot$} \State \Return $\bot$ \EndIf 
\If{$\mathrm r(e',\tau)-\mathrm r(e,\tau) \ge \delta^\star$}
    \State \Return $\big(\mathrm r(e',\tau)-\mathrm r(e,\tau) \ge \delta^\star\big)$
\Else
    \State \Return $\big(\mathrm r(e,\tau)-\mathrm r(e',\tau) \ge 1-\delta^\star\big)$
    \Comment{Lemma~\ref{lem:gap_complement}}
\EndIf
\end{algorithmic}
\end{algorithm}

\paragraph{Complexity. } Algorithm~\ref{alg:detect-delta-relations} iterates over ordered pairs of tuples in the results of the intent over the schema domain. In the worst case (when $\tau$ is a total order)
resulting in a search space of size $\binom{m}{2}=O(m^2)$, where \(m = \prod_{i=1}^{l} |\mathbf{dom}(A_i)|\) is the size of the domain of the ranking attributes $A_i.. A_l$ in intent $\tau$. For each pair, the algorithm locates $\delta^\star$ satisfying
$\Delta b\in(g(z,\delta)-1,g(z,\delta)]$. If $\delta^\star$ is found by scanning $\delta\in\{1,\dots,z-1\}$, the per-pair
cost is $O(z)$ and the total runtime is $O(m^2 z)$. If $g(z,\delta)$ is nondecreasing in $\delta$, binary search requires
$O(\log z)$ evaluations of $g (.)$ per pair, yielding $O(m^2\log z)$ total time. The output query $q^\delta$ contains at most
$O(m^2)$ constraints.

\paragraph{Bucketization. } The complexity of Algorithm~\ref{alg:detect-delta-relations} depends on the size of the rank domain induced by the intent’s ranking attributes, which can be large when attributes have high-cardinality
domains. Following standard practice~\cite{jinyang2023biasranking, asudeh2018stableranking}, we therefore bucketize large attribute domains into a small number of groups. For example, continuous attributes such as rating can be discretized into $3$--$4$ buckets. This choice is also motivated by practice.
In many data sources, bias is naturally expressed at the level of \emph{groups} of tuples, e.g., owners' brands, sponsored sellers, or coarse price bands, rather than at the level of individual tuples~\cite{jinyang2023biasranking}. We provide the bucketization details in Section~\ref{sec:experiment-results} and experimentally study how this
choice affects the scalability of the algorithm and the user's utility.

\paragraph{Limitations on query language. }
Supporting queries like $q^\delta$ requires a language that can express constraint-based ranking specifications or a data source that accepts a union of ranking queries. As discussed above, $q^\delta$
denotes the set of rankings consistent with its constraints. This set determines whether
$q^\delta$ can influence the data source's interpretation. If it is empty, $q^\delta$ is uninfluential. If it covers the entire space
of queries, e.g., all permutations of tuples in the user's intent, then $q^\delta$ is again uninfluential. Only when $q^\delta$ rules out some rankings but not all
can it convey nontrivial information and influence the data source's posterior. Then users can submit $q^\delta$, and it leads the interaction to an
influential equilibrium (Proposition~\ref{prop:one_delta_influential}). 
However, our current query language (Section~\ref{sec:data-model-query-language}) restricts users
to single SQL queries whose ranking is determined by an \texttt{ORDER BY} clause. Such queries
necessarily produce a single-ranked list and cannot directly encode relative rank constraints. To bridge this gap, we can select one ranking from the non-trivial set of consistent rankings and $q^\delta$ and submit its SQL encoding. Due to the prevalence of SQL, in the
following sections, we focus on interactions with SQL queries and leave extensions to other query languages
as future work.





\subsection{Improving User Utility}
\label{sec:maximally-influential}
In the previous section, we presented a query that convinces the data source to return some relevant results despite the data source's bias. However, it may still return only a small subset of the tuples relevant to the user's intent. In practice, the user may refine this query by adjusting the ranking further. In this section, we investigate such refinement and provide an algorithm to find a query that improves the user’s expected utility.

\begin{example}
\label{example:credible-communication}
Consider the \texttt{Headphones} instance in Table~1 and let the user's
intent $\tau$ be to retrieve the highest-rated headphones. Suppose the data source returns top-$3$ tuples and has per-tuple utility function $u^s_e
=
\big(\mathrm r(e,\tau(I)) - b(e)\big)\,\mathrm r(e,\beta(I))$, with
bias $b(e)=1.9$. Given a query $q$, let $\Phi_e = \mathbb E[\mathrm r(e,\tau(I)) \mid q]$
denote the posterior mean rank of the tuple $e$. Given the posterior mean, the data source maximizes
$
\mathbb E[u^s_e \mid q]
=
(\Phi_e-1.9)\,\mathrm r(e,\beta(I))$.
Hence the data source asigns $e$ position $\mathrm r(e,\beta(I))=1$ if $\Phi_e<1.9$, and  $\mathrm r(e,\beta(I))=4$ (not returned) if
$\Phi_e>1.9$ to maximize utility. Now consider the query equivalent to the user's intent $q:\texttt{ORDER BY rating DESC}$,
which induces the ranking $e_3 \prec e_2 \prec e_1 \prec e_4$.
Based on theory of order statistics~\cite{David2003Order}, conditioning a uniform prior over the $4!$ possible intent orders given $q$ yields posteriors:
$
\Phi_{e_3}=1,
\Phi_{e_2}=2,
\Phi_{e_1}=3,
\Phi_{e_4}=4
$. Therefore only $e_3$ is returned. Now consider another query $q'$ that results in $q'(I):\ e_3 \sim e_2 \prec e_1 \prec e_4$
Under $q'$, the posterior is 
$
\Phi'_{e_3}=\Phi'_{e_2}=\tfrac{1+2}{2}=1.5,
\Phi'_{e_1}=3,
\Phi'_{e_4}=4.$
Thus both $e_3$ and $e_2$ are returned.
\end{example}

Extending the above example, we begin by introducing a class of query refinements that are useful for improving the returned
results.

\begin{definition}
\label{definition:superquery_rank}
Given queries $q,q'$ over an instance $I$ of schema $\mathcal{R}$, $q'$ is a \textbf{super-rank query} of $q$ if:
\begin{enumerate}
  \item $\mathrm{elm}(q(I))\subseteq\mathrm{elm}(q'(I))$.
  \item For all $e,e'\in q(I)$, if $\mathrm r(e,q(I))=\mathrm r(e',q(I))$, then
  $\mathrm r(e,q'(I))=\mathrm r(e',q'(I))$.
  \item For every $e\in q'(I)\setminus q(I)$ and every
  $e'\in q(I)$, we have $\mathrm r(e,q'(I))>\mathrm r(e',q'(I))$.
\end{enumerate}
We denote the set of all super-rank queries of $q$ by $\mathcal{Q}_{\text{super}}(q)$.
\end{definition}

Intuitively, executing the super-rank query ($q' \in \mathcal{Q}_{\text{super}}$) over some instance may (i) change a sub-list of ranked tuples to a set of tuples with the same position or (ii) preserve some of the orders (iii) append additional (ranked) tuples at the end of the list. Each query is a super-rank of itself.

\begin{example}
\label{example:super-rank}
Given a query $q$ that returns $e_1 \sim e_3 \prec e_4 \prec e_2$ over \texttt{Headphones} instance $I$.
Results of some super-rank queries of $q$ are:
\[\begin{array}{ll}
  \text{(a) set}: & e_1 \sim e_3 \sim e_4 \sim e_2\\
  \text{(b) preserve order}: & e_1 \sim e_3 \prec e_4 \prec e_2\\
  \text{(c) append}: & e_1 \sim e_3 \prec e_4 \prec e_2 \prec e_5\\
\end{array}\]
\end{example}

\paragraph{Super-rank strategy. }
As discussed in Section~\ref{sec:detecting-credibility}, under our query language, the user must
submit a single query that induces a single ranked list. Accordingly, given an intent $\tau$, we run
Algorithm~\ref{alg:detect-delta-relations} to obtain the query $q^\delta$
(Definition~\ref{def:delta_threshold_query}), and then fix a base query $q^{\mathrm{base}}$ by selecting
any executable query that is consistent with $q^\delta$.

\begin{definition}
\label{definition:credible-super-rank-strategy}
A user strategy $P^r(\cdot\mid\cdot)$ is a \textbf{super-rank strategy} if for every intent $\tau$ and
query $q$, we have $P^r(q\mid\tau)>0$ only if $q\in \mathcal Q_{\mathrm{super}}(q^{\mathrm{base}})$.
\end{definition}

Next, we show that the interaction between the user and the data source, when the user leverages a super-rank strategy, always leads to an equilibrium.

\begin{theorem}
\label{theorem:equilibrium}    
Given the interaction setting ($\mathcal{R}$,$\tau$, $U^r$,$U^s$)
suppose the user and data source have the following utility functions for each tuple $e$:
\[
u^r(\mathrm{r}(e,\tau(I)), \mathrm{r}(e,\beta(I))) = -(\mathrm{r}(e,\tau(I)) - \mathrm{r}(e,\beta(I)))^2,
\]
\[
u^s_e(\mathrm{r}(e,\tau(I)), \mathrm{r}(e,\beta(I))) = -(\mathrm{r}(e,\tau(I)) - (\mathrm{r}(e,\beta(I)) + b(e)))^2.
\]
over instance $I$ of $\mathcal{R}$. Then, if the user adopts a super-rank strategy, the interaction has an influential equilibrium.
\end{theorem}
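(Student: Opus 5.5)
The plan is to build an explicit partition (cheap-talk) equilibrium in the spirit of Crawford--Sobel, restricted to the finite family $\mathcal{Q}_{\mathrm{super}}(q^{\mathrm{base}})$, and then argue it is influential.

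\textbf{Setting up the data source's best response.} With the quadratic utilities of the statement, the data source's expected utility for tuple $e$ under belief $\Phi$ is, up to a constant independent of $\beta$, equal to $-(\Phi_e - \mathrm r(e,\beta(I)) - b(e))^2$. Its best response to any query $q$ therefore places each tuple as close as possible to $\Phi_e - b(e)$, where $\Phi_e=\mathbb E[\mathrm r(e,\tau(I))\mid q]$. This parallels the affine structure of Proposition~\ref{proposition:affine_boundary}. Because $\mathcal{Q}_{\mathrm{super}}(q^{\mathrm{base}})$ is finite (finitely many ways to merge adjacent blocks of $q^{\mathrm{base}}(I)$ into ties and append tuples), each query in it induces a posterior, and hence an interpretation. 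For queries sent with probability zero I would fix off-path beliefs equal to the posterior of some on-path query, so that no deviation outside the support is profitable.

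\textbf{Constructing the equilibrium.} I would start from the coarsest nontrivial partition of intents, using the pairwise witness from Section~\ref{sec:detecting-credibility}. Since $q^{\mathrm{base}}$ is consistent with $q^\delta$, Theorem~\ref{thm:pairwise_delta_closed_form} provides a pair $(e,e')$ with $\delta^\star(e,e')\neq\bot$. Intents satisfying $\mathrm r(e',\tau)-\mathrm r(e,\tau)\ge\delta^\star$ send one super-rank query $q_1$ (for example $q^{\mathrm{base}}$ itself). The complementary intents, characterized via Lemma~\ref{lem:gap_complement}, send a second super-rank query $q_2$ that ties the relevant block (for example the ``set'' version). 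The two queries then induce different posteriors $\Phi^{(1)}\neq\Phi^{(2)}$, as in Example~\ref{example:credible-communication}. Supermodularity (Definition~\ref{definition:supermodular}) makes the data source's optimal rank monotone in $\Phi_e$, so the two interpretations $\beta_1,\beta_2$ differ.

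\textbf{Incentive compatibility.} Next I would verify that each intent prefers the query assigned to it. By Lemma~\ref{lemma:two_tuple_shift}, the user's preference between $\beta_1$ and $\beta_2$ depends only on $\mathrm r(e,\tau)-\mathrm r(e',\tau)$ relative to the threshold $\delta$. The choice of $\delta^\star$ via $g(z,\delta)-1<b\le g(z,\delta)$ is precisely the statement that the user's indifference threshold and the data source's posterior-induced cut coincide at an integer. This is the fixed-point condition of a two-cell partition equilibrium. Given it, both bullets of Definition~\ref{definition:equilibrium} hold, and the two on-path queries lead to different $P^s$, so the equilibrium is influential. As a fallback when no such pair exists, I would invoke Proposition~\ref{prop:symmetry} or Proposition~\ref{prop:one_delta_influential} if the hypothesis of the statement implicitly guarantees $\delta^\star\neq\bot$.

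\textbf{Main obstacle.} The hardest step is the consistency check: the posterior computed from the uniform prior over ranks, conditioned on the cell $\{\mathrm r(e',\tau)-\mathrm r(e,\tau)\ge\delta^\star\}$, must produce exactly the interpretation the user at the boundary anticipates. This is where the closed forms $g(z,\delta)$ and $s(z,\delta)$ of Theorem~\ref{theorem:closed-form-trustworthy-information} enter. I would try to reuse that computation, namely the conditional means of two uniform ranks given a gap of at least $\delta$, rather than rederive it. Two further points need care. First, the ties introduced by super-rank queries must not break monotonicity of $\Phi_e$. Second, when only a top-$k$ list is returned, the fixed rank assigned to omitted tuples must keep the best response well defined. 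I would handle the latter by treating omission as rank $z+1$ throughout.
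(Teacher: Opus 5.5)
Your construction is the one the paper uses. You fix a pair with $\delta^\star(e,e')\neq\bot$, split intents into $\mathcal H=\{\mathrm r(e',\tau)-\mathrm r(e,\tau)\ge\delta^\star\}$ and its complement $\mathcal H'$, let each cell send a single super-rank query, and assign off-path responses. Where the paper cites Proposition~\ref{prop:one_delta_influential}, you check the fixed point directly through $g(z,\delta)-1<b\le g(z,\delta)$. Choosing $q^{\mathrm{base}}$ and its tied version also makes membership in $\mathcal Q_{\mathrm{super}}(q^{\mathrm{base}})$ explicit, which the paper leaves unchecked. Two steps, however, do not go through as written.

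\textbf{The witnessing pair is an extra hypothesis.} Consistency of $q^{\mathrm{base}}$ with $q^\delta$ does not produce a pair with $\delta^\star(e,e')\neq\bot$. The quantity $g(z,\delta)$ is bounded; it lies in $(-z,z)$ for every $\delta$, and for instance $g(z,1)=\tfrac23$ for all $z$. So any pair with $|b(e)-b(e')|\ge z+1$ has $\delta^\star=\bot$. If all bias gaps are that large, $q^\delta$ is the empty conjunction, which every query satisfies trivially. Your fallback does not cover this case. Proposition~\ref{prop:symmetry} needs equal biases, and Proposition~\ref{prop:one_delta_influential} presupposes exactly the pair you are missing. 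The pair cannot be derived from the stated hypotheses at all: Corollary~\ref{coro:quad-bias-saturation} applies to precisely these quadratic utilities and yields interactions with no influential equilibrium. The paper's proof makes the same tacit assumption, since it simply fixes a pair with $\delta^\star\in\{1,\dots,z-1\}$. You should state the existence of such a pair as a hypothesis rather than try to prove it.

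\textbf{You mix two models of the data source.} You take its best response to be the feasible rank closest to $\Phi_e-b(e)$, and then infer $\beta_1\neq\beta_2$ from monotonicity. Monotonicity only gives a weak inequality, and after rounding and clipping to $[z]$ the two responses can coincide even when $\delta^\star\neq\bot$. Take $z=2$ and $b(e)=b(e')=100$. Then $b(e)-b(e')=0\in(g(2,1)-1,g(2,1)]=(-\tfrac13,\tfrac23]$, so $\delta^\star=1$. Yet both tuples go to rank $1$ after either message; this is the regime of Corollary~\ref{coro:quad-bias-saturation}. Moreover, the condition $g(z,\delta)-1<b\le g(z,\delta)$ is derived in the proof of Theorem~\ref{theorem:closed-form-trustworthy-information} only for the unconstrained real-valued response $\mathrm r(x,\beta_X)=\Phi_x(X)-b(x)$. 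Rounding changes the user's indifference threshold, so $g$ no longer characterizes consistency in the rounded model.

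Adopt the real-valued relaxation throughout. Then $\beta_1\neq\beta_2$ because $\Phi_e(\mathcal H)=(z-\delta+2)/3<(z+1)/2<\Phi_e(\mathcal H')$. The map $(r,r')\mapsto(z+1-r',z+1-r)$ preserves each cell, so $\Phi_e+\Phi_{e'}=z+1$ on both cells. Hence the two responses are an exact swap on $\{e,e'\}$ and, under the independent uniform rank prior, agree on all other tuples. That is precisely the hypothesis of Lemma~\ref{lemma:two_tuple_shift}.

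Finally, your worry that ties in super-rank queries might disturb $\Phi_e$ is unfounded. On path, the posterior after a message is fixed by Bayes' rule from the strategy, not by the literal tie structure of the query.
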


\begin{example}
\label{example:many-common-superrank}
Consider the two super-ranks of $q$ from Example~\ref{example:super-rank}.
$q^{sup}_1(I)$:  $e_1 \sim e_3 \sim e_4 \sim e_2 $ and  
$q^{sup}_2(I)$:  $e_1 \sim e_3 \sim e_4 \prec e_2 $
The query that equally prefers three headphone brands ($q^{sup}_1$) and the one that has a higher preference for specific headphone brands ($q^{sup}_2$) are super-rank queries and both lead to equilibria in the interaction.
\end{example}

The above result indicate that to lead the reasoning to some equilibrium, it is sufficient for the user to choose certain types of strategies.
{\it Nonetheless, users may prefer some equilibria over others.} As illustrated in Example~\ref{example:many-common-superrank}, using some super-rank queries, the user can overstate their preference for some tuples that might otherwise not be returned by the data source. However, a super-rank strategy that maps an intent to a query that places too many tuples in higher positions than their true positions in the intent or adds too many tuples to the intent may reduce the expected posterior of all tuples.
For instance, in Example~\ref{example:credible-communication}, again, assume that the data source has only four headphones.
If the user submits a query that returns all headphones in the data source, the expected posterior of every tuple will be $\frac{5}{2}$.
If the data source returns only tuples where posterior means are above $2.5$, it will not return any tuple. Thus, the user strategy must find the right amount of modification to the intent to offset the bias and maximize extracted information.

\paragraph{Fully Influential Strategy. } In a fully influential equilibrium, the data source returns all tuples relevant to the intent behind the submitted user's query, which is desirable for users:

\begin{definition}
\label{definition:fullyinfluential}
An equilibrium $(P^r(. | .), P^s(. | .)$) is {\bf fully influential} if for every query $q$ and intent $\tau$ such that $P^r(q | \tau)$ $> 0$, $\tau$ is a super-rank of $\beta$ for all $\beta$ where $P^s(\beta | q)$ $>0$.
A user strategy $P^r$ is a {\bf fully influential strategy} if it leads to a fully influential equilibrium.
\end{definition}


To achieve a fully influential equilibrium, the user must convince the data source that all its desired tuples are highly relevant to their intent. 
But, as we have explained in the preceding paragraphs, it might not always be possible to achieve that.
Thus, one might aim for an equilibrium that {\it extracts as much useful information as possible from the data source}.

\begin{definition}
\label{definition:maximalinfluential}
An equilibrium $E_{\max}=(P^r_{\max}(\cdot\mid\cdot),P^s_{\max}(\cdot\mid\cdot))$ is
\textbf{maximally influential} if for every
$E=(P^r(\cdot\mid\cdot),P^s(\cdot\mid\cdot))$ in which $P^r$ is a super-rank strategy, for every query $q$, and every
$\beta_{\max}$ with $P^s_{\max}(\beta_{\max}\mid q)>0$, we have that $\beta_{\max}$ is at least as
influential as every $\beta$ with $P^s(\beta\mid q)>0$.
A user strategy $P^r_{\max}$ is a \textbf{maximally influential strategy} if it leads to a maximally
influential equilibrium.
\end{definition}


Next, we present algorithms for finding the user strategy that led to a maximally influential equilibrium (Definition~\ref{definition:maximalinfluential}).

\begin{problem}
Given the interaction setting $(\mathcal R,\tau,U^r,U^s)$ find a query 
  $q^{\star}$, that leads to a maximally influential equilibrium.

\end{problem}

\paragraph{Baseline Algorithm. }
A brute-force approach is to first get the get $q^{\mathrm{base}}$ derived from $\tau$ and then enumerate all possible super-rank queries $\mathcal{Q}_{\text{super}}(q^{\mathrm{base}})$. Then, for each super-rank query ($q^{sup}$), get the data source’s interpretation ($\beta$), and compute the user's utility \(U^r(q^{sup},\beta)\), finally returning one that maximizes user utility. However, the size of the set $\mathcal{Q}_{\text{super}}$ can be super-exponential (Proposition~\ref{prop:enumerate-complexity}), particularly because super-rank queries can append an arbitrary number of additional tuples.
A simple optimization is to separate the set of super rank queries into two subsets: (i) those that append additional tuples and (ii) those that do not. We can first find the best superrank in the second set and then append additional tuples to it. This will reduce the search space significantly. However, this subset of super-ranks is also exponential, and the problem remains computationally hard. We prove that finding the maximally influential query is, in the general case, NP-hard.

 \begin{proposition}
\label{prop:enumerate-complexity}
Let $m$ be the number of tuples in the results of $q^{base}$ over an instance $I$ of schema $\mathcal{R}$, and let $r$ be the number of tuples that may be appended. The algorithm outputs \emph{super-exponential} number of equilibria and runs in super-exponential time in $r$.
\end{proposition}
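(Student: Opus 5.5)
The plan is to prove the claim by counting. I will lower-bound the number of distinct super-rank queries of $q^{\mathrm{base}}$, and hence the number of candidate equilibria the baseline enumerates, by a super-exponential function of $r$. Since the baseline examines each candidate at least once, its running time inherits the same bound.

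The count follows Definition~\ref{definition:superquery_rank} and separates the two sources of freedom.

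\textbf{Part (i): collapsing inside the base list.} Condition (2) forces existing ties to stay tied. Otherwise, a super-rank result may merge consecutive positions of $q^{\mathrm{base}}(I)$ into tied blocks, while preserving order, as in cases (a) and (b) of Example~\ref{example:super-rank}. If $q^{\mathrm{base}}(I)$ has $m' \le m$ distinct positions, each of the $m'-1$ gaps is either kept or merged. This already gives $2^{m'-1}$ choices, independent of $r$.

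\textbf{Part (ii): appending tuples.} Condition (3) only requires every appended tuple to rank strictly below every tuple of $q(I)$. The appended tuples can therefore be any subset $S$ of the $r$ candidates, arranged as any weak order (ordered set partition) of $S$. The number of such arrangements is
\[
\sum_{j=0}^{r}\binom{r}{j}\,\mathrm{Fub}(j),
\]
where $\mathrm{Fub}(j)$ is the $j$-th Fubini (ordered Bell) number. Already the term $j=r$ satisfies $\mathrm{Fub}(r)\ge r!$, since every strict total order of the $r$ tuples is a valid appendix. By Stirling, $r! = 2^{\Theta(r\log r)}$, which is super-exponential in $r$. Multiplying by the factor from part (i) gives $|\mathcal{Q}_{\text{super}}(q^{\mathrm{base}})| \ge 2^{m'-1}\, r!$.

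Next I must turn the count of queries into a count of equilibria. By Theorem~\ref{theorem:equilibrium}, under the quadratic utilities every super-rank strategy leads to an (influential) equilibrium. The baseline therefore produces one equilibrium per candidate super-rank query $q^{sup}$, and it must compute the data source's best response $\beta$ and $U^r$ for each one. This gives $\Omega(r!)$ work, and running time super-exponential in $r$.

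The main obstacle is showing that distinct super-rank queries yield \emph{distinct} equilibria, so that the count is not collapsed by identification. I would first try to argue that distinct appended weak orders induce distinct posteriors $\Phi_e$ for the appended tuples, using the order-statistics computation of Example~\ref{example:credible-communication}. If the posteriors can coincide, I fall back to counting the strategy profiles $(P^r,P^s)$ themselves. The user strategies differ for distinct $q^{sup}$, so the equilibria are distinct as pairs. The running-time claim needs only the enumeration count, so it holds either way.
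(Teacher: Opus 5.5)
Your proposal is correct and essentially matches the paper's proof. The paper also multiplies the $2^{m-1}$ merge-only coarsenings of $q^{\mathrm{base}}$ by the Fubini number $a(r)$ of weak orders on the appended suffix and uses $a(r)\ge r!$; your use of $m'$ distinct positions, arbitrary subsets of the appendable tuples, and a pure lower bound only refines that count. The paper never addresses your distinctness worry (it simply counts each enumerated super-rank query as one output of the baseline), so your argument is no weaker than its own, though Theorem~\ref{theorem:equilibrium} only guarantees that some influential equilibrium exists, not one per candidate query.
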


\begin{theorem}
\label{thm:superrank-nphard}
The problem of finding the maximally influential strategy is NP-hard in general
\end{theorem}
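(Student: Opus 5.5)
We will prove Theorem~\ref{thm:superrank-nphard} by a polynomial-time reduction from a partition-type problem. The natural choice is \textsc{Subset Sum}, or equivalently \textsc{Partition}. The structure that makes this possible is already visible in Example~\ref{example:credible-communication}. Grouping tuples into ties in a super-rank query replaces each tuple's posterior mean rank $\Phi_e$ by the average rank of its tie group. The data source then returns a tuple exactly when its posterior mean crosses a bias-dependent threshold. So choosing which tuples to pool together is a combinatorial selection problem: pooling a low-ranked tuple with high-ranked ones can pull the whole group under the threshold, but pooling too much pushes everyone above it. This is the same effect as the ``all headphones'' discussion before Definition~\ref{definition:fullyinfluential}.

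The plan proceeds in four steps.

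\textbf{Step 1.} Fix a simple data source utility of the linear form used in Example~\ref{example:credible-communication}, namely $u^s_e=(\mathrm r(e,\tau(I))-b(e))\,\mathrm r(e,\beta(I))$. Under this utility the data source's best response to a query $q$ is a threshold rule: tuple $e$ is returned iff $\Phi_e<b(e)$. We assume the uniform prior over intent ranks. By the order-statistics argument of that example, a tie group occupying positions $p,\dots,p+s-1$ gets posterior mean $p+(s-1)/2$ for each of its members.

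\textbf{Step 2.} Given a \textsc{Subset Sum} instance with positive integers $a_1,\dots,a_n$ and target $T$, build an intent $\tau$ whose tuples are arranged in blocks. Block $i$ consists of $a_i$ tuples. Setting the biases $b(\cdot)$ appropriately will make it worthwhile to pool block $i$ into the leading tie group only as a whole, and doing so shifts the group's posterior mean by an amount proportional to $a_i$. We choose the threshold bias so that the leading group is returned iff its total size is at most $T'$, a polynomial function of $T$. We choose the user's utility $U^r$ so that the user's gain equals the number of relevant returned tuples; the quadratic-loss form of Theorem~\ref{theorem:equilibrium} can be padded to achieve this. Under these choices, a super-rank query of $q^{\mathrm{base}}$ (Definition~\ref{definition:superquery_rank}) achieves utility at least $T$ iff some subset of the $a_i$ sums exactly to $T$.

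\textbf{Step 3.} Show that the best equilibrium among super-rank strategies, which exists by Theorem~\ref{theorem:equilibrium}, is maximally influential in the sense of Definition~\ref{definition:maximalinfluential} exactly when it returns the optimal pooled set. Consequently, an algorithm for finding the maximally influential strategy decides \textsc{Subset Sum}. All biases and ranks in the construction are polynomial in the input size; a unary encoding is fine if we reduce from \textsc{3-Partition} instead, which gives strong NP-hardness.

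The main obstacle is Step 2, specifically enforcing that blocks are pooled atomically and in a contiguous prefix. Condition (2) of Definition~\ref{definition:superquery_rank} only preserves existing ties, so the user could also split blocks or form several tie groups. The first approach we will try is to place a ``guard'' tuple with large negative bias between blocks. Any group that splits a block then contains a guard, which raises its posterior mean above the threshold and makes that choice strictly dominated. If the guards prove too weak, the fallback is to restrict attention to the ``no append'' subfamily discussed in the baseline paragraph. Hardness for that restricted subfamily still implies hardness in general.
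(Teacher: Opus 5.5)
\textbf{There is a genuine gap: Step~2 cannot be completed.} The obstruction is structural; tuning the guards will not fix it.

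First, a super-rank (merge) query of $q^{\mathrm{base}}$ can only tie \emph{contiguous} runs of the base order, so the leading tie group is always a prefix. If blocks are pooled atomically, the achievable leading-group sizes are only the prefix sums $a_1+\dots+a_k$, not arbitrary subset sums. Checking whether one of these equals $T'$ is trivial.

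Second, and more decisively, the model you fix in Step~1 is exactly the interval-additive setting of Lemma~\ref{lem:run-scores}. That model is a per-tuple threshold response $\Phi_e<b(e)$, a posterior mean $p+(s-1)/2$ determined by the tie interval alone, and a user gain that counts returned relevant tuples. In it, the utility of any merge query is a sum of interval scores, each computable in polynomial time. The recurrence of Proposition~\ref{thm:dp-recurrence} then optimizes over all merge queries in time polynomial in the number $m$ of positions. Your blocks use $a_i$ tuples each, so $m=\sum_i a_i$. This leaves two cases:
\begin{itemize}
\item With binary \textsc{Subset-Sum}, the constructed instance is exponentially large, so your claim that all ranks are polynomial in the input size is false.
\item With a strongly NP-hard source such as \textsc{3-Partition}, the DP solves your instances in polynomial time.
\end{itemize}
A correct reduction of this shape would therefore give $\mathrm{P}=\mathrm{NP}$.

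Third, the guard tuples do not do what you need. Under the threshold rule, a guard's bias only decides whether the guard itself is returned. The posterior mean of its group-mates depends only on positions, not on anyone's bias.

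\textbf{The missing idea} is that hardness must come from something that breaks this interval decomposition. The paper gets it from the generality of the utilities rather than from posterior/threshold mechanics:
\begin{itemize}
\item It takes $q^{\min}=e_1\prec\cdots\prec e_{n+1}$. Each of the $n$ adjacent boundaries is then independently cut or merged, so the $2^n$ merge queries correspond bijectively to subsets $S\subseteq\{1,\dots,n\}$ of merged boundaries.
\item It makes a fixed $\beta^\dagger$ the unique best response of the data source to every query.
\item It defines the user's utility directly as $-\bigl|\sum_{j\in S}w_j-W\bigr|$ with threshold $T=0$, which is attained iff $S$ solves \textsc{Subset-Sum}.
\end{itemize}
Here the items are boundaries rather than blocks of tuples, and the instance has only $n+1$ tuples. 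The weights live in a non-additive utility, so neither the size blow-up nor the DP applies.

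Finally, your fallback claim that hardness for the no-append subfamily implies hardness in general needs an argument. Optimizing over a larger family is a different problem, so you would need, for example, instances with no appendable tuples.
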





\paragraph{Merge query. }
Despite the general problem being NP-hard, for additive utility functions, the problem has an optimal substructure that is suited for dynamic programming (DP). A super-rank of $q^{\mathrm{base}}$ is obtained by incremental changing a sub-list of ranked tuples to a set of tuples with the same position.

\begin{definition}
\label{def:merge-query}
Given queries $q,q'$ over an instance $I$ os schema $R$, let
$m$ denote the number of distinct
rank positions in $q(I)$. We call $q'$ a \textbf{merge query} of $q$ if:
\begin{enumerate}
  \item $\mathrm{elm}(q(I))=\mathrm{elm}(q'(I))$
  \item there exist integers $1\le i\le j\le m$ such that, for every $e\in q(I)$,
  \[
  \mathrm r(e,q'(I)) \;=\;
  \begin{cases}
    \mathrm r(e,q(I)) & \text{if }\mathrm r(e,q(I))< i,\\
    i & \text{if } i\le \mathrm r(e,q(I))\le j,\\
    \mathrm r(e,q(I))-(j-i) & \text{if }\mathrm r(e,q(I))> j.
  \end{cases}
  \]
\end{enumerate}

\end{definition}
Intuitively $q'$ is obtained from $q$ by merging the consecutive rank positions $i,i+1,\dots,j$
into a single tied position at rank $i$. We denote the set of merge queries of $q$ by $\mathcal Q_{\mathrm{merge}}(q)$.

\begin{lemma}
\label{lem:run-scores}
Given a query $q$ let $m$ denote the number of rank positions in the results of $q$ over schema $\mathcal{R}$. For each interval $1\le i\le j\le m$, let
$q_{i:j}\in\mathcal Q_{\mathrm{merge}}(q)$ denote the merge query obtained by merging exactly
positions $i,\dots,j$ of $q$. Let $u$ be the
expected per-tuple user utility under $q_{i:j}$. For any merge query $q'\in\mathcal Q_{\mathrm{merge}}(q)$, let $\Pi(q')=\{[i_1,j_1],\dots,[i_t,j_t]\}$
be the unique partition of $\{1,\dots,m\}$ into consecutive intervals that $q'$ merges into ties.
Then the user’s expected utility by submitting $q'$ can be written as
$\sum_{\ell=1}^{t} n_{i_\ell:j_\ell}\,\overline u(i_\ell,j_\ell)\;+\;C,
$
where $n_{i:j}=|\{e\in q: i\le \mathrm r(e,q)\le j\}|$ and $C$ is independent of $q'$.
\end{lemma}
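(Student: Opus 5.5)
The plan is to exploit additivity of $U^r$ over tuples together with the fact that, for a merge query, the data source's behaviour on a tuple depends only on the block of tied positions that contains it. Fix $q$ and a merge query $q'$ with partition $\Pi(q')=\{[i_1,j_1],\dots,[i_t,j_t]\}$. By Definition~\ref{def:merge-query}, iterating single-interval merges, the rank list $q'(I)$ is determined by $\Pi(q')$. Every tuple $e$ with $i_\ell\le \mathrm r(e,q(I))\le j_\ell$ is tied in $q'(I)$ with exactly the tuples of that interval. I will first show that the posterior mean $\Phi_e$ of such a tuple (computed as in Example~\ref{example:credible-communication}, by conditioning the uniform prior over intent orders) is the same under $q'$ and under the single-interval merge $q_{i_\ell:j_\ell}$.

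Concretely, the posterior mean of a tuple in a tied block of positions $i,\dots,j$ is the average of the intent ranks that the block can occupy. This average depends only on which absolute rank slots the block covers, and those slots are fixed by $q$ regardless of how the other blocks are merged, since merging elsewhere permutes nothing across blocks. Untied tuples keep posterior equal to their own slot. So $\Phi_e$ under $q'$ equals $\Phi_e$ under $q_{i_\ell:j_\ell}$.

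Next, under the quadratic (or any additive, supermodular) data-source utility, $\mathbb E[U^s\mid q']$ is a sum of per-tuple terms. Each term depends on $\Phi_e$ and on the position assigned to $e$. I will argue that the data source's best response places each tuple by a rule depending only on $\Phi_e$ and $b(e)$, as in Example~\ref{example:credible-communication}, and that within a block this produces the same placement as under $q_{i_\ell:j_\ell}$.

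Then I take the user's expected utility $\sum_e \mathbb E[u^r(\mathrm r(e,\tau(I)),\mathrm r(e,\beta(I)))\mid q']$ and group its terms by block. Define $\overline u(i,j)$ as the average per-tuple expected utility of the tuples in positions $i..j$ under $q_{i:j}$. Block $\ell$ then contributes $n_{i_\ell:j_\ell}\overline u(i_\ell,j_\ell)$. Any contribution that is identical for all merge queries, such as tuples the source never returns (which get the fixed out-of-list rank) or normalisation terms, is collected into $C$. Uniqueness of $\Pi(q')$ is immediate, because the tie classes of $q'(I)$ determine the intervals.

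The main obstacle is the locality claim for the data source's interpretation. Positions in $\beta(I)$ are a permutation, so placements interact across blocks, and the source's best response need not decompose tuple by tuple. I would first try to handle this by noting that the per-tuple term is separable in the assigned rank. Under the paper's convention, the source either places tuples in the slot ordering induced by $\Phi_e$ or omits them, as in the threshold rule of Example~\ref{example:credible-communication}. Because merging preserves the order between blocks, that slot ordering restricted to one block coincides with the one under $q_{i_\ell:j_\ell}$. If exact equality fails, I would instead absorb the cross-block interaction, which only involves blocks left unmerged in both queries, into $C$.
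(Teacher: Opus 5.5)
Your skeleton matches the paper's proof. It rests on additivity of $U^r$ and on the fact that merging other intervals leaves the rank slots of block $[i_\ell,j_\ell]$ unchanged, so its posterior is the same under $q'$ and $q_{i_\ell:j_\ell}$. State this for the whole block marginal, not only $\Phi_e$, since an expected quadratic loss also depends on the second moment; your slot argument already gives it. The remaining ingredients are a tuplewise source response and summation over blocks. Defining $\overline u(i,j)$ as a block average is slightly more careful than the paper's exchangeability remark, which is not literally true when biases differ inside a block.

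The gap is in how you dispose of the obstacle you correctly identify. The paper does not derive locality of the source's response. Its proof simply takes the interpretation to be \emph{tuplewise}: $\mathrm r(e,\beta(I))$ depends only on $e$'s own posterior and $b(e)$. Examples are the unconstrained best response $\Phi_e-b(e)$ from the proof of Theorem~\ref{theorem:closed-form-trustworthy-information}, or the per-tuple threshold rule of Example~\ref{example:credible-communication}, which permits ties and omissions. Neither of your remedies can replace that hypothesis, because without it the lemma can fail.

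If the source sorts by $\Phi_e-b(e)$, the order inside a block is preserved but absolute positions are not. Take $q:\ x\prec y\prec z$, $\tau=q$, $b(x)=-1.2$, $b(y)=0$, $b(z)=-0.1$, and $U^r=-\sum_e(\mathrm r(e,\tau(I))-\mathrm r(e,\beta(I)))^2$. The merge queries $q,q_{1:2},q_{2:3},q_{1:3}$ yield utilities $-2,-2,0,-6$, while the block sums yield $-2,-2,-1,-6$. The mismatch arises because merging $[2,3]$ raises $y$'s score from $2$ to $2.5$, which moves $x$ from position $2$ (under $q_{1:1}=q$) to position $1$. Averaging over the posterior instead of fixing $\tau$ gives the same kind of mismatch. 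So no $q'$-independent $C$ exists, and the cross-block term you propose to absorb into $C$ comes from a block that \emph{is} merged in $q'$ and varies with $q'$.

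Likewise, omitted tuples cannot be moved into $C$. Whether a tuple is returned depends on its block's posterior: in Example~\ref{example:credible-communication}, $e_2$ is dropped under $q$ but returned under $q'$. Such tuples are also already counted in $n_{i_\ell:j_\ell}\overline u(i_\ell,j_\ell)$. State the tuplewise interpretation as an explicit assumption, as the paper's proof implicitly does, and drop both fallbacks; your argument is then complete and coincides with the paper's.
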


Intuitively, Lemma~\ref{lem:run-scores} says that the expected user utility contributed by merging
any consecutive position interval $[i,j]$ depends only on that interval, and not on how other
positions are merged.

\paragraph{Efficient Algorithm. }
 Algorithm~\ref{alg:max-info-merge-dp} computes a maximally influential query. Given an intent $\tau$, we first run
Algorithm~\ref{alg:detect-delta-relations} to obtain the constraint query $q^\delta$, and then
fix an executable base query $q^{\mathrm{base}}$ by selecting any query whose induced ranked list is
consistent with $q^\delta$ (Definition~\ref{definition:credible-super-rank-strategy}). As in Algorithm~\ref{alg:detect-delta-relations}, we do not assume
access to the realized instance; instead we work over the $m$ rank positions induced by the ranking
attributes $A_1,\dots,A_l$ in $\tau$, where $m=\prod_{i=1}^{l}|\mathbf{dom}(A_i)|$.

A merge query
$q'\in\mathcal Q_{\mathrm{merge}}(q)$ coarsens this ranking by selecting consecutive position
intervals and tying all tuples whose original ranks fall in each selected interval. Equivalently,
each $q'$ corresponds to a unique partition of $\{1,\dots,m\}$ into consecutive intervals.
By Lemma~\ref{lem:run-scores}, the expected user utility decomposes as a sum of interval scores.
Thus, optimizing over $\mathcal Q_{\mathrm{merge}}(q)$ reduces to choosing the partition with maximum
total score. For each interval $[i,j]$, the algorithm computes
$Score(i,j)=n_{i:j}\cdot u,
\qquad
n_{i:j}=\bigl|\{e\in q: i\le \mathrm r(e,q)\le j\}\bigr|$
where $u$ is the expected per-tuple user utility when exactly positions $i,\dots,j$ are merged
(and all other positions remain as in $q$). To compute $u$ we evaluate the
data source’s interpretation to the merge query $q_{i:j}\in\mathcal Q_{\mathrm{merge}}(q)$ using its utility function $\beta_{i:j}\;\in\;\arg\max_{\beta'}\ \mathbb{E}\!\left[U^s(\tau,\beta') \mid q_{i:j}\right]$
However, some utility functions may have multiple maxima, and therefore the $\arg\max$ may not be unique. That is, the data source could be indifferent between several optimal interpretations ($\beta$) that provide the same maximal utility. For such utility functions, we assume a consistent \textit{tie-breaking rule}: if multiple interpretations $\{\beta_1, \beta_2, \dots\}$ yield the same maximal utility, the data source chooses the one most favorable to the user. Since the source does not know the user's true intent $\tau$, it selects the interpretation that is closest to the user's query, i.e, $\beta$ that maximizes $U^r(q, .)$. This is a reasonable assumption since the data source wants to keep the user engaged. 

Finally, given $Score(i,j)$, the algorithm applies the DP recurrence in
Proposition~\ref{thm:dp-recurrence} to compute the optimal partition of $\{1,\dots,m\}$ into consecutive
intervals and then backtracks to construct the corresponding merge query
$q^\star\in\mathcal Q_{\mathrm{merge}}(q)$.

\begin{proposition}
\label{thm:dp-recurrence}
Let $\mathrm{OPT}(j)$ denote the maximum total utility attainable by merging a prefix of
positions $1,\dots,j$. Then
\[
\mathrm{OPT}(0)=0,\qquad
\mathrm{OPT}(j)=\max_{1\le i\le j}\big\{\mathrm{OPT}(i-1)+Score(i,j)\big\}.
\]
\end{proposition}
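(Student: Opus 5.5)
The plan is the standard proof that the segmented-least-squares style recurrence is correct. The argument combines the decomposition of Lemma~\ref{lem:run-scores} with induction on $j$. First I fix notation. For $0\le j\le m$, a merge query restricted to the prefix $\{1,\dots,j\}$ is identified with a partition of $\{1,\dots,j\}$ into consecutive intervals. By Lemma~\ref{lem:run-scores}, the user's expected utility for any $q'\in\mathcal Q_{\mathrm{merge}}(q)$ equals $\sum_{\ell} Score(i_\ell,j_\ell)+C$, where $Score(i,j)=n_{i:j}\,\overline u(i,j)$ and $C$ does not depend on $q'$. The constant $C$ plays no role in maximization, so I define $\mathrm{OPT}(j)$ as the maximum of $\sum_{\ell}Score(i_\ell,j_\ell)$ over all consecutive-interval partitions of $\{1,\dots,j\}$. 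With this definition, $\mathrm{OPT}(m)+C$ is the optimal utility over $\mathcal Q_{\mathrm{merge}}(q)$.

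The base case is immediate: the empty prefix has only the empty partition, whose score is $0$, so $\mathrm{OPT}(0)=0$. For the inductive step I prove two inequalities.

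For ``$\le$'', take an optimal partition of $\{1,\dots,j\}$. Its last interval must end at $j$, so it has the form $[i,j]$ for some $1\le i\le j$. Removing that interval leaves a consecutive-interval partition of $\{1,\dots,i-1\}$, whose score is at most $\mathrm{OPT}(i-1)$. Hence $\mathrm{OPT}(j)\le \mathrm{OPT}(i-1)+Score(i,j)$, which is at most the maximum on the right-hand side.

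For ``$\ge$'', fix any $i$. Take an optimal partition of $\{1,\dots,i-1\}$ and append the interval $[i,j]$. The result is a valid partition of $\{1,\dots,j\}$ with score $\mathrm{OPT}(i-1)+Score(i,j)$, so $\mathrm{OPT}(j)$ is at least this value for every $i$.

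The main obstacle is to justify that $Score(i,j)$ for a block is the same whether the block sits inside the prefix partition or inside a full partition of $\{1,\dots,m\}$. In other words, each interval's contribution must depend only on the interval itself, not on how the remaining positions are merged. Lemma~\ref{lem:run-scores} supplies exactly this independence, and the tie-breaking rule stated before the proposition makes $\beta_{i:j}$, and hence $\overline u(i,j)$, well-defined. I would remark that this independence relies on the additive form in Equation~\ref{equation:utility_ranks}, under which the source's per-tuple choices separate. I would also note that scores are only summed over blocks actually present in the partition, so no cross-block interaction term arises. Finally, backtracking the argmax indices reconstructs an optimal partition, and therefore the merge query $q^\star$.
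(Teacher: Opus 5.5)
Your proposal is correct and follows the same route as the paper's proof: condition on the last interval $[i,j]$ of an optimal prefix partition, and use Lemma~\ref{lem:run-scores} to make that interval's contribution $Score(i,j)$ independent of how $\{1,\dots,i-1\}$ is merged. You additionally spell out the base case and both inequalities, which the paper's three-line argument leaves implicit.
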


\begin{algorithm}[H]
\caption{Maximally Influential Query}
\label{alg:max-info-merge-dp}
\footnotesize
\begin{algorithmic}[1]
\State \textbf{Input:} interaction setting $(\mathcal R,\tau,U^r,U^s)$
\State \textbf{Output:} $q^\star\in\mathcal Q_{\mathrm{merge}}(q^\mathrm{base})$
\State $q^\mathrm{base} \gets$ Algorithm~\ref{alg:detect-delta-relations} ($\mathcal R,\tau,U^r,U^s$)
\Function{Score}{$i,j$} \Comment{\textit{Score for merging positions $i..j$ (Lemma~\ref{lem:run-scores})}}
  \State $q_{i:j}\gets$ merge-$[i,j]$ query in $\mathcal Q_{\mathrm{merge}}(q)$
  \State $\beta\in\arg\max_{\beta'}\mathbb E[U^s(\tau,\beta')\mid q_{i:j}]$ 
  \State $n\gets\bigl|\{e\in q: i\le \mathrm r(e,q)\le j\}\bigr|$
  \State $u\gets$ expected per-tuple user utility under $(q_{i:j},\beta)$
  \State \Return $n\cdot u$
\EndFunction
\State $\mathrm{DP}[0]\gets 0$ \Comment{\textit{DP over prefixes (Proposition~\ref{thm:dp-recurrence})}}
\For{$j=1$ to $m$}
  \State $(\mathrm{DP}[j],\mathrm{parent}[j])\gets \max_{1\le i\le j}\{\mathrm{DP}[i-1]+\textsc{Score}(i,j)\}$
\EndFor
\Statex
\State $\Pi^\star\gets$ backtrack intervals from $\mathrm{parent}[\cdot]$
\State $q^\star\gets$ merge query in $\mathcal Q_{\mathrm{merge}}(q)$ that merges exactly intervals in $\Pi^\star$
\State \Return $q^\star$
\end{algorithmic}
\end{algorithm}
\paragraph{Complexity.}
As in Algorithm~\ref{alg:detect-delta-relations}, we work over the rank domain induced by the intent’s
ranking attributes $A_1,\dots,A_l$, whose size is
$m=\prod_{i=1}^{l}|\mathbf{dom}(A_i)|$. Algorithm~\ref{alg:max-info-merge-dp} performs dynamic
programming over the $m$ induced rank positions. The DP considers all intervals $(i,j)$ with
$1\le i\le j\le m$, yielding $O(m^2)$ transitions, and backtracking takes $O(m)$ time. Tabulating the
interval scores $Score(i,j)$ also ranges over $O(m^2)$ intervals. If each $Score(i,j)$ can be computed
in $O(1)$ time (e.g., using cached posterior/utility summaries per interval), the overall running
time is $O(m^2)$; otherwise it is $O(m^2)$ plus the cost of computing all $Score(i,j)$ values.

\begin{figure*}[t]
\centering
\begin{subfigure}{0.199\linewidth}
  \includegraphics[width=\linewidth]{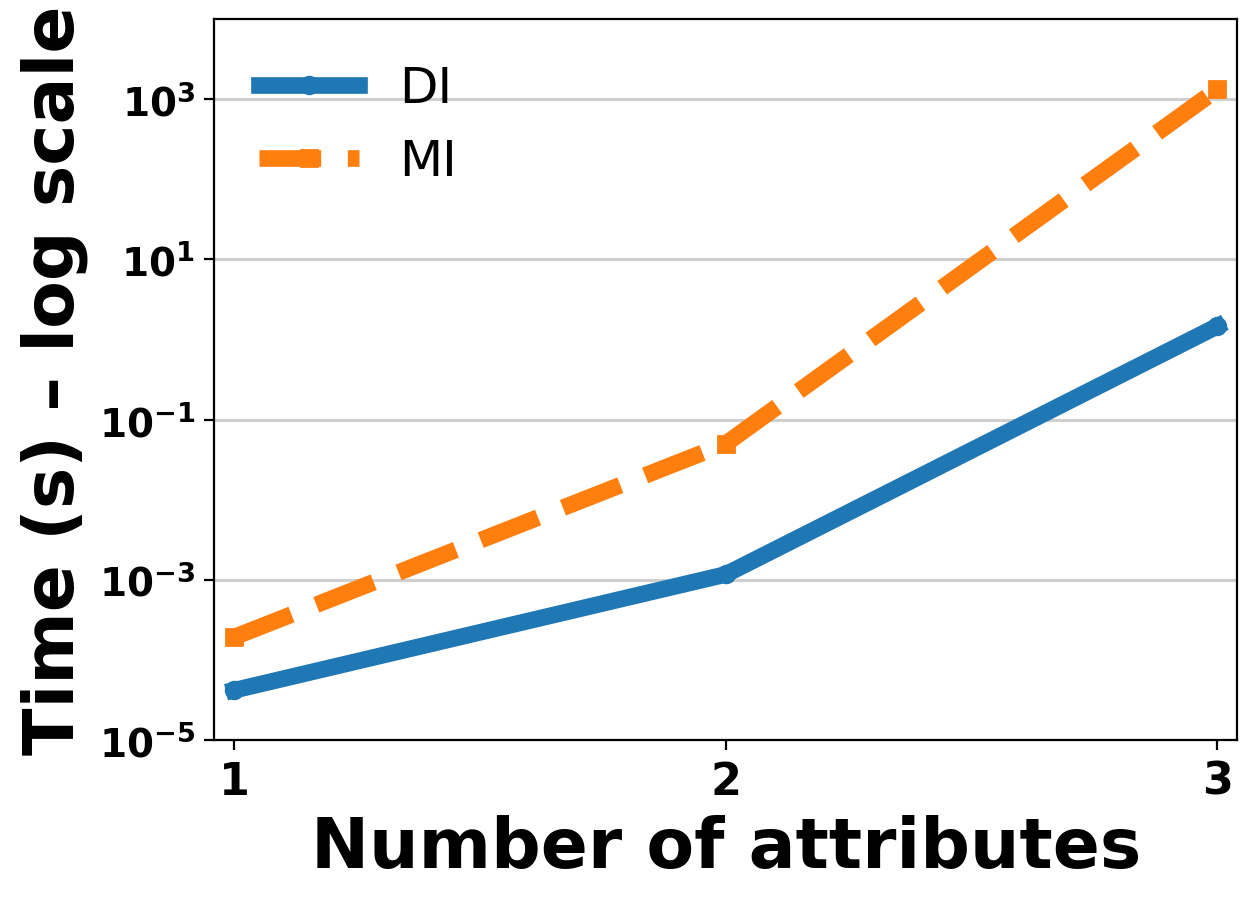}
  \caption{Amazon}
  \label{fig:time_vs_attrs_amazon}
\end{subfigure}\hfill
\begin{subfigure}{0.199\linewidth}
  \includegraphics[width=\linewidth]{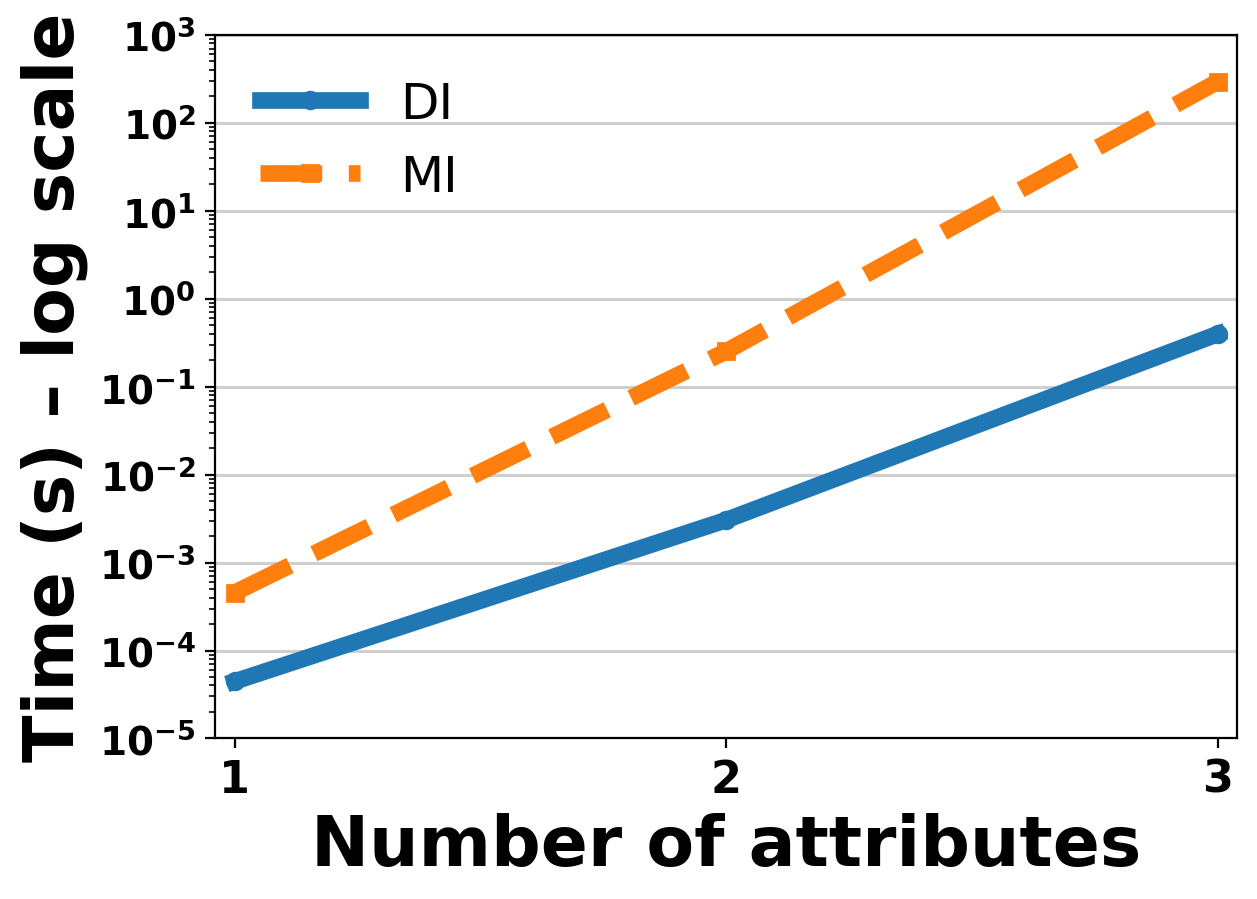}
  \caption{Pricerunner}
\label{fig:time_vs_attrs_pricerunner}
\end{subfigure}\hfill
\begin{subfigure}{0.199\linewidth}
  \includegraphics[width=\linewidth]{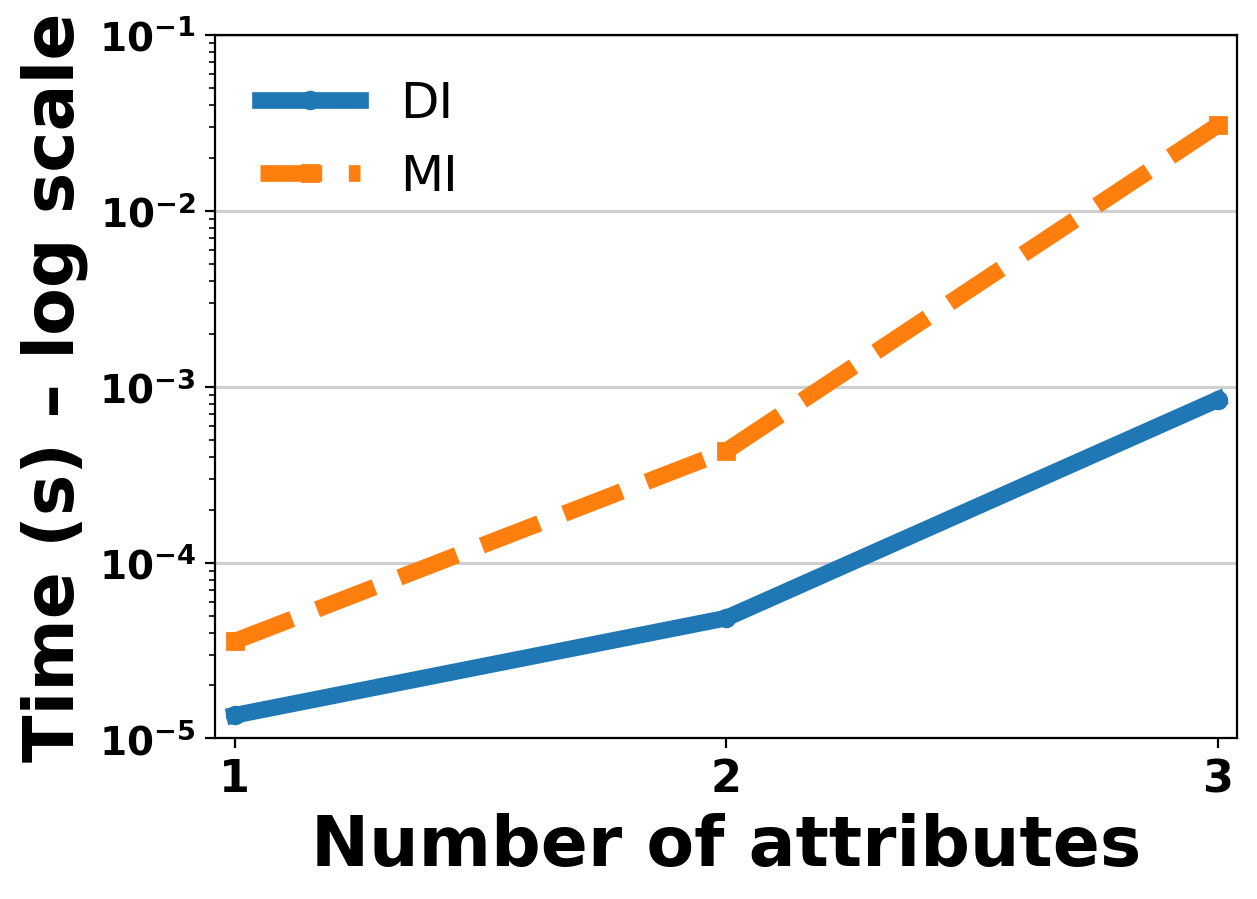}
  \caption{Flights}
  \label{fig:time_vs_attrs_flights}
\end{subfigure}\hfill
\begin{subfigure}{0.199\linewidth}
  \includegraphics[width=\linewidth]{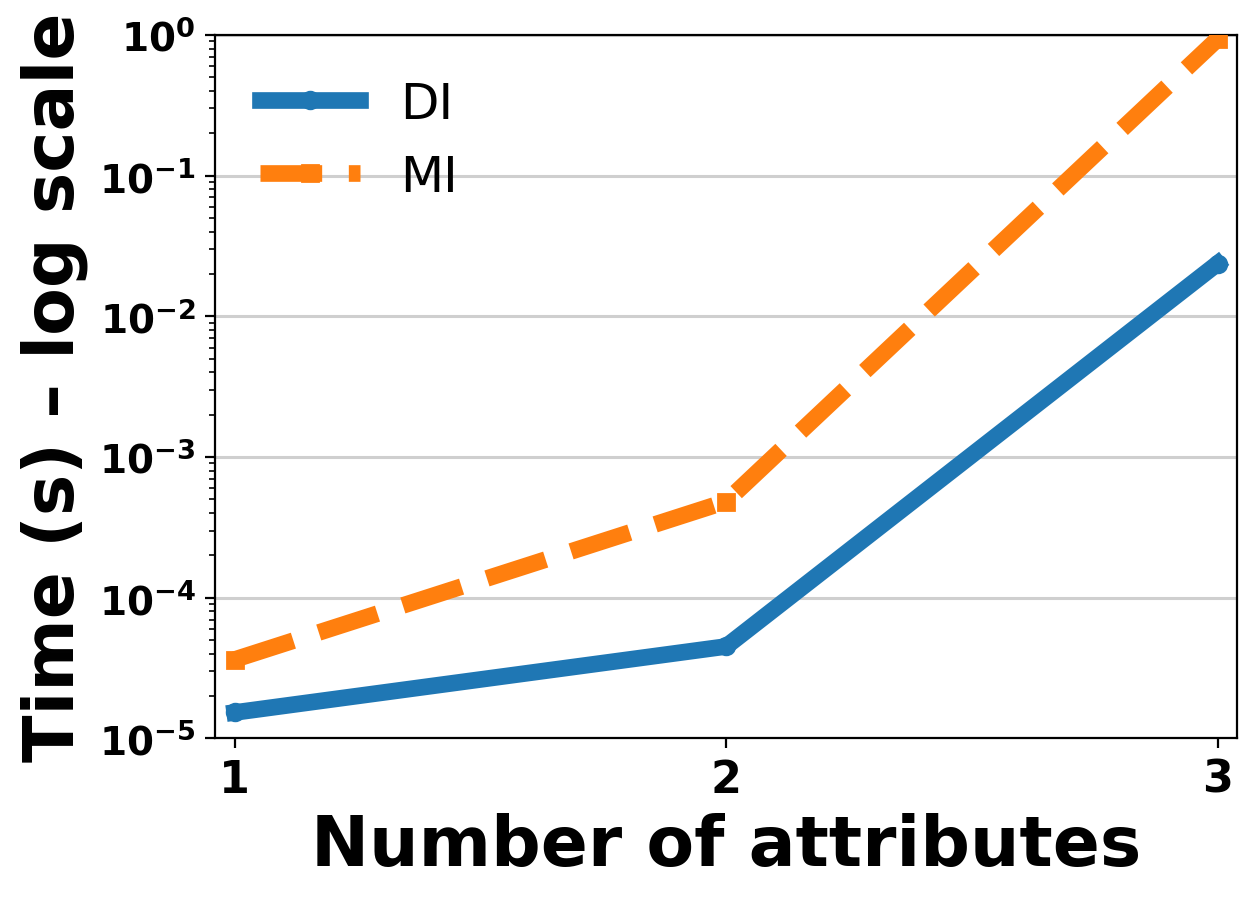}
  \caption{Census}
  \label{fig:time_vs_attrs_census}
\end{subfigure}\hfill
\begin{subfigure}{0.199\linewidth}
  \includegraphics[width=\linewidth]{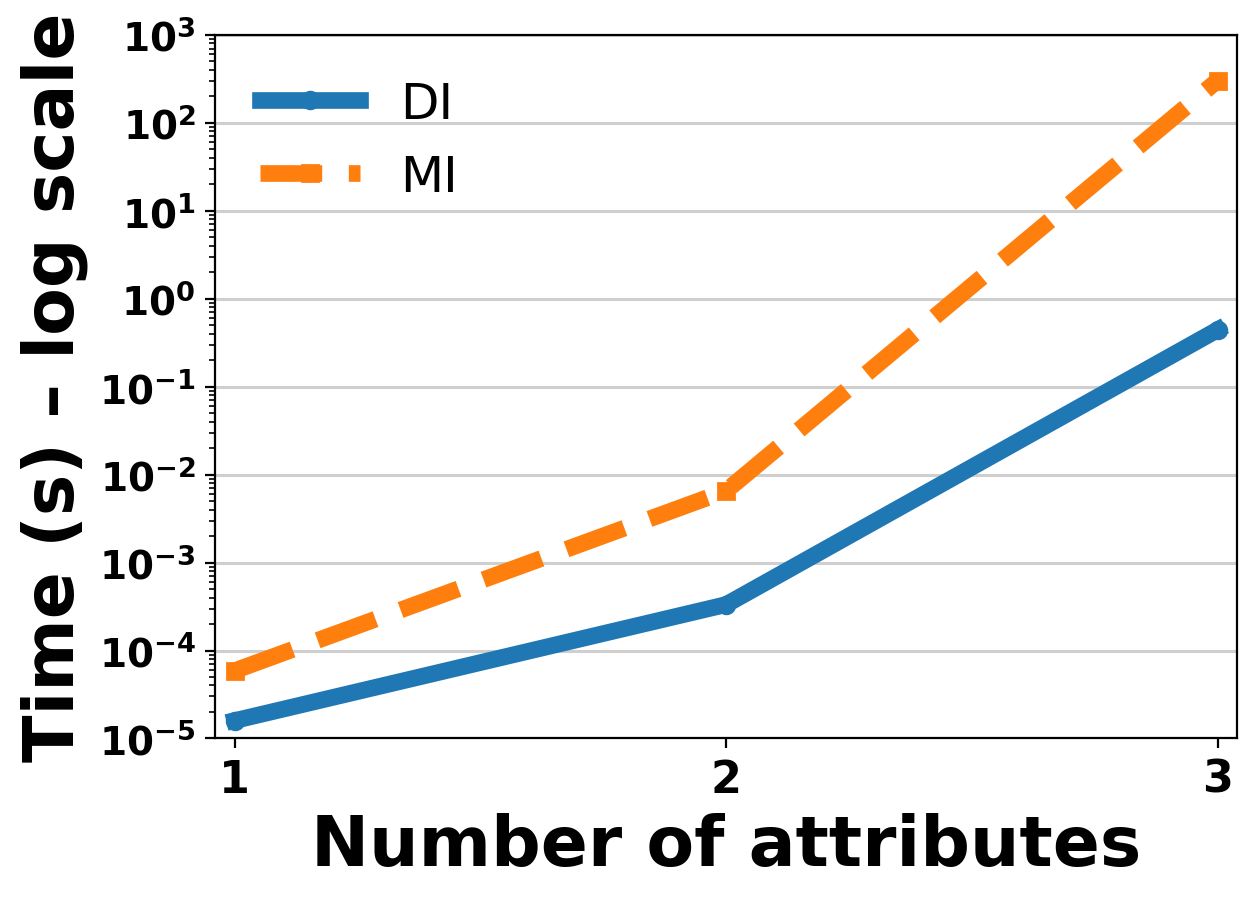}
  \caption{COMPAS}
  \label{fig:time_vs_attrs_compas}
\end{subfigure}

\caption{Impact of number of attributes on execution time of  Detecting Influential (DI) and Maximally Influential (MI)}
\label{fig:time_vs_attrs}
\end{figure*}


\section{Empirical Evaluation}
\label{sec:experiments}

We empirically evaluate our proposed solutions to demonstrate
their usefulness and effectiveness. We describe the experiment setup and then present a quantitative study assessing the efficiency of our algorithm, considering
various datasets, queries, attributes, and the size of the possible domain.
\subsection{Experiment Setup}
\paragraph{Platform. }
All experiments were performed on a macOS Sequoia 15.6 machine with a Appple M1 Pro chip and 16GB RAM.


\paragraph{Datasets. } 
We evaluate our approach on five real-world datasets of varying sizes, domains, and conflict-of-interest scenarios, summarized in Table~\ref{tab:datasets}.

\begin{table}[ht]
\centering
\caption{Summary of datasets used in the experiments.}
\vspace{-3mm}
\label{tab:datasets}
\small 
\setlength{\tabcolsep}{4pt} 
\begin{tabularx}{\columnwidth}{@{}l r r X@{}}
\toprule
\textbf{Dataset} & \textbf{Tuples} & \textbf{Attributes} & \textbf{Ranking Attributes (|A|)} \\
\midrule
Amazon      & 14M  & 11 & rating (41), sales in last month (30), price (29961) \\
PriceRunner & 35K  & 7  & seller (306), product model (12849), category (10) \\
Flights     & 300K & 12 & airlines (6), days from trip (49), price (12157) \\
Census      & 49K  & 12 & race (5), sex (2), education level (16) \\
COMPAS      & 6.8K & 12 & age (65), prior counts (37), violence score (10) \\
\bottomrule
\end{tabularx}
\end{table}

\noindent\textbf{Amazon:} This dataset contains over 14M product listings from Amazon~\cite{amazondataset}. The dataset contains attributes such as product category, price, rating, reviews, number of sales in the last month, and isBestSeller. Since Amazon may have an incentive to offload low-rated and less-selling products to clear inventory, we test for conflict of interest based on rating, sales in last month, and price. 

\noindent\textbf{PriceRunner:} The dataset is collected from PriceRunner\footnote{\url{https://www.pricerunner.com/}}, a popular product comparison platform~\cite{pricerunnerdataset}. It includes 35,311 product offers from 10 categories (e.g., Mobile Phones) provided by different sellers. We test for conflict of interest based on attributes: seller, product model, and product category. For instance, the system may have an incentive to favor products from certain sellers. 

\noindent\textbf{Flights:} The dataset was scraped from EaseMyTrip\footnote{\url{https://www.easymytrip.us/}}, a flight booking service provider~\cite{flights}. It contains 300,261 distinct flight bookings from February 11th to March 31st, 2022. We test for conflict of interest based on attributes: airlines, days from trip, and price. For instance, the booking service may be biased towards expensive flights from certain airlines. 

\noindent
We also consider popular datasets used in prior work related to bias~\cite{jinyang2023biasranking,li2023queyrefinment}. 

\noindent\textbf{Census:} The US Census dataset comprises 49,000 tuples~\cite{adult_2}. We test for conflict of interest based on race, sex, and education level. For example, the system may favor individuals with a higher education level (e.g., PhD). 

\noindent\textbf{COMPAS:} The COMPAS dataset was collected and published by ProPublica\footnote{\url{https://www.propublica.org/article/how-we-analyzed-the-compas-recidivism-algorithm}}. It contains the demographics, violence scores produced by the COMPAS software, and criminal offense information for 6,889 individuals. We test for conflict of interest based on age, prior offense counts, and violence decile score.

\paragraph{Queries. }
For each dataset, we generated a diverse set of SQL queries of the form \texttt{SELECT A\_1, ..., A\_l WHERE Sel(q) ORDER BY A\_1, ..., A\_l} where \texttt{Sel(q)} is a random conjunction of up to three non-key predicates and $l \in \{1,2,3\}$ varies the number of ranking attributes. Different subsets of filtering attributes were randomly selected for the \texttt{WHERE} clause. In the query projection, we include the same set of ranking attributes for simplicity, since our focus is on the conflict based on the ranking attributes. For consistency, we assume that the ranking order in the intent is decreasing for numeric attributes (e.g., {\tt ORDER BY Rating DESC}) and use the inherent order in the original dataset for categorical attributes. 

\paragraph{Utility Function. }
We use the quadratic utility functions throughout our experiments. As discussed in Section~\ref{sec:efficiently-detecting-uninformative-setting}, the utility function of the data source might vary between tuples or sets of tuples. To vary the functions $u_{e}^{s}(.,.)$ and $u_{e'}^{s}(.,.)$ in Equation 1 while maintaining consistency across different datasets, we randomly assign bias values in $[0,1]$ and scale them based on the domain of different datasets. All our reported results are averaged over 5 runs.

\paragraph{Bucketization }
As discussed in Section~\ref{sec:efficiently-detecting-uninformative-setting}, users usually do not know the exact instance $I$, therefore our algorithms reason over the domain of schema $\mathcal{R}$. Therefore, attributes with high-cardinality domains, such as price or product model, impact the scalability of our algorithms. Furthermore, a data source's bias often applies to groups of items (e.g., expensive vs. cheap) rather than specific values, which motivates grouping domain values into a smaller number of bins.  We explore a range of ranking attributes with different domain sizes (Table~\ref{tab:datasets}). Following standard practice~\cite{jinyang2023biasranking}, continuous attributes like age were bucketized into 3-4 groups based on their domain and values. For high-cardinality categorical attributes, we adopted another common approach: keeping the top-$k$ most frequent values and mapping the rest to a single other bin~\cite{asudeh2018stableranking}. For instance, in the PriceRunner dataset, we set $k=1$ and defined bins over the top-10 possible domain values of sellers and models for each category. For datasets with small attribute domains like Census, no bucketization was necessary. We note that the selection of bucketization may affect the credibility of queries and the running time of our algorithms; therefore, we study this effect in Section~\ref{sec:experiment-results}.


\begin{figure}[htbp]
    \centering
    \vspace{-3mm}\includegraphics[width=1\linewidth]{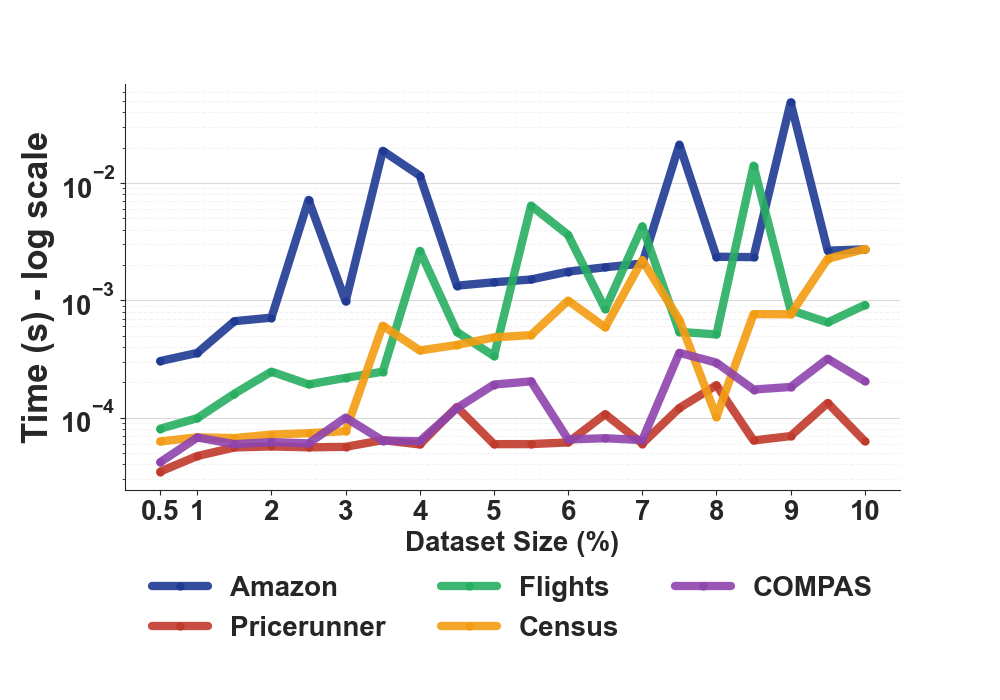} 
    \vspace{-10mm} 
    \caption{ Impact of $z$ on time to detect credible answers}
    \label{fig:all_dataset_credible}
\end{figure}

\begin{figure}[t]
    \centering

    \begin{subfigure}{0.48\columnwidth}
        \includegraphics[width=\linewidth]{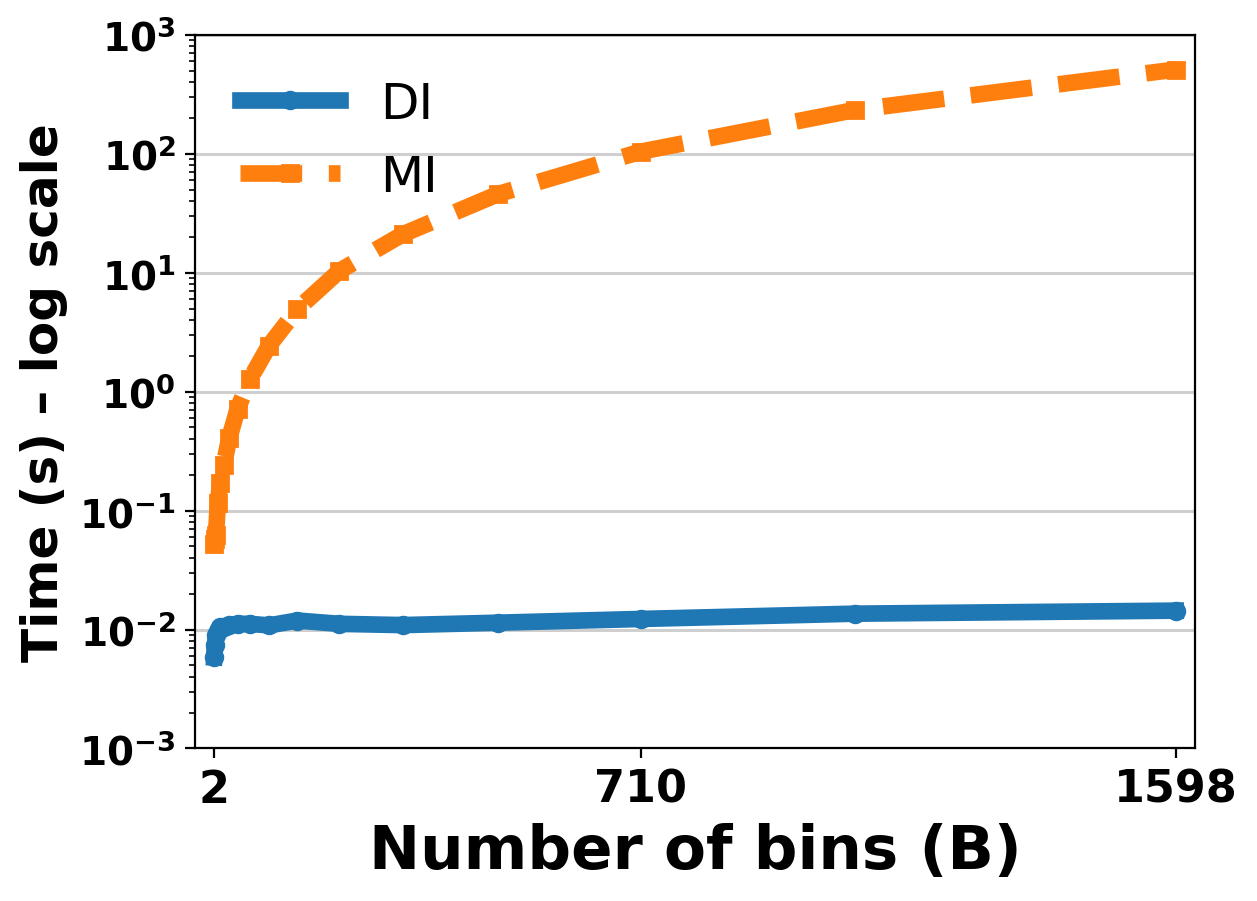}
        \caption{Time (Amazon)}
        \label{fig:time-vs-bins-amazon}
    \end{subfigure}\hfill 
    \begin{subfigure}{0.48\columnwidth}
        \includegraphics[width=\linewidth]{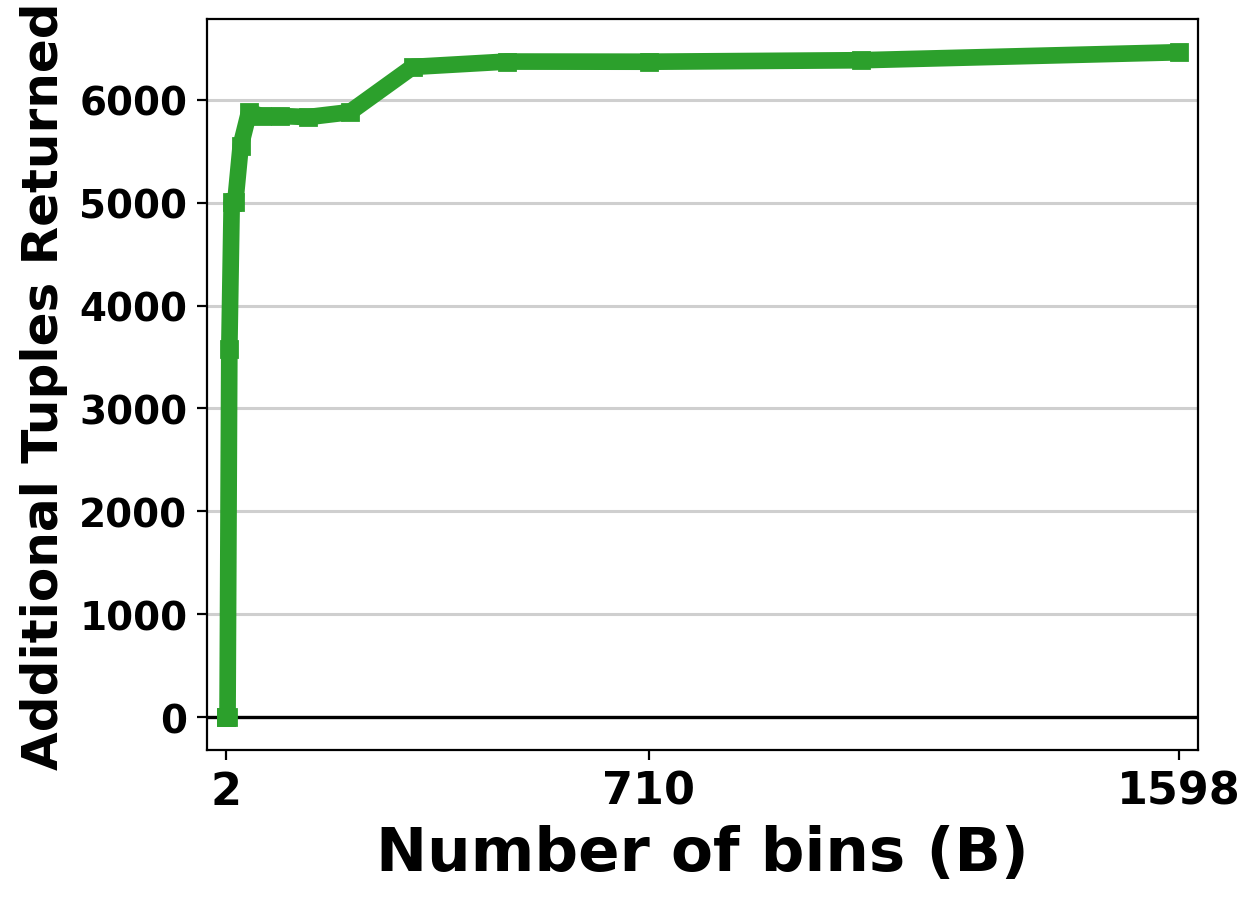}
        \caption{Utility (Amazon)}
        \label{fig:utility-vs-bins-amazon}
    \end{subfigure}

    \vspace{3mm} 

    \begin{subfigure}{0.48\columnwidth}
        \includegraphics[width=\linewidth]{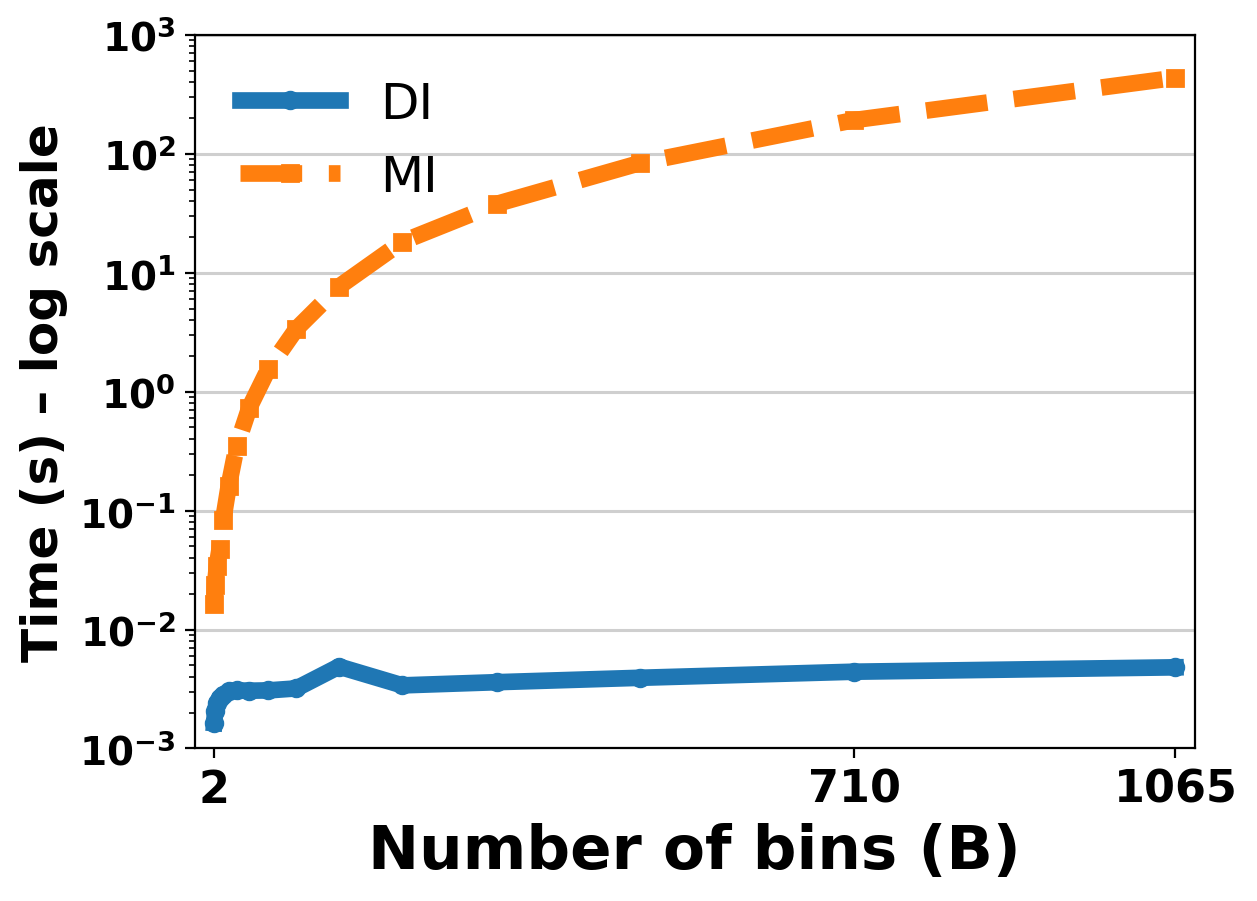}
        \caption{Time (PriceRunner)}
        \label{fig:time-vs-bins-pricerunner}
    \end{subfigure}\hfill
    \begin{subfigure}{0.48\columnwidth}
        \includegraphics[width=\linewidth]{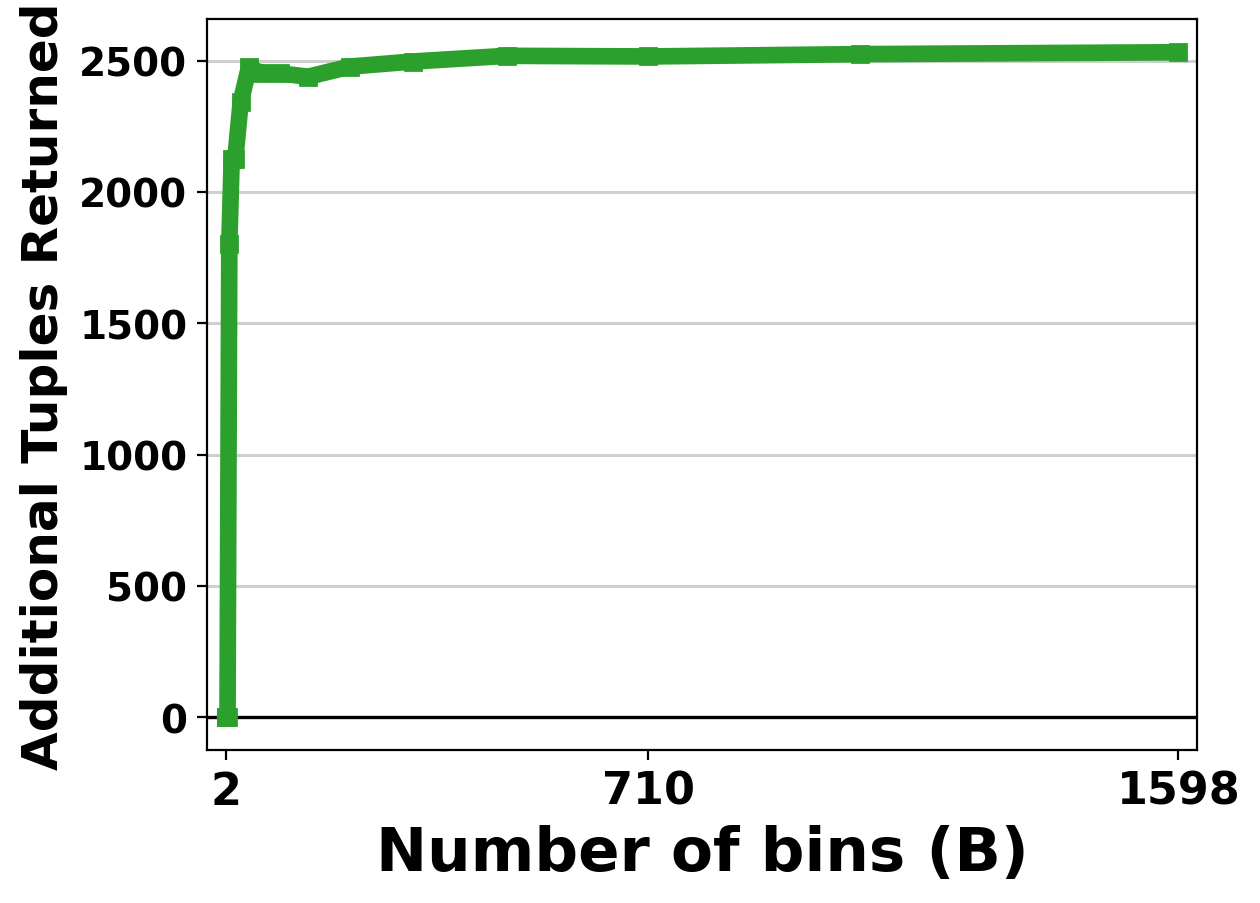}
        \caption{Utility (PriceRunner)}
        \label{fig:utility-vs-bins-pricerunner}
    \end{subfigure}

    \vspace{3mm} 

    \begin{subfigure}{0.48\columnwidth}
        \includegraphics[width=\linewidth]{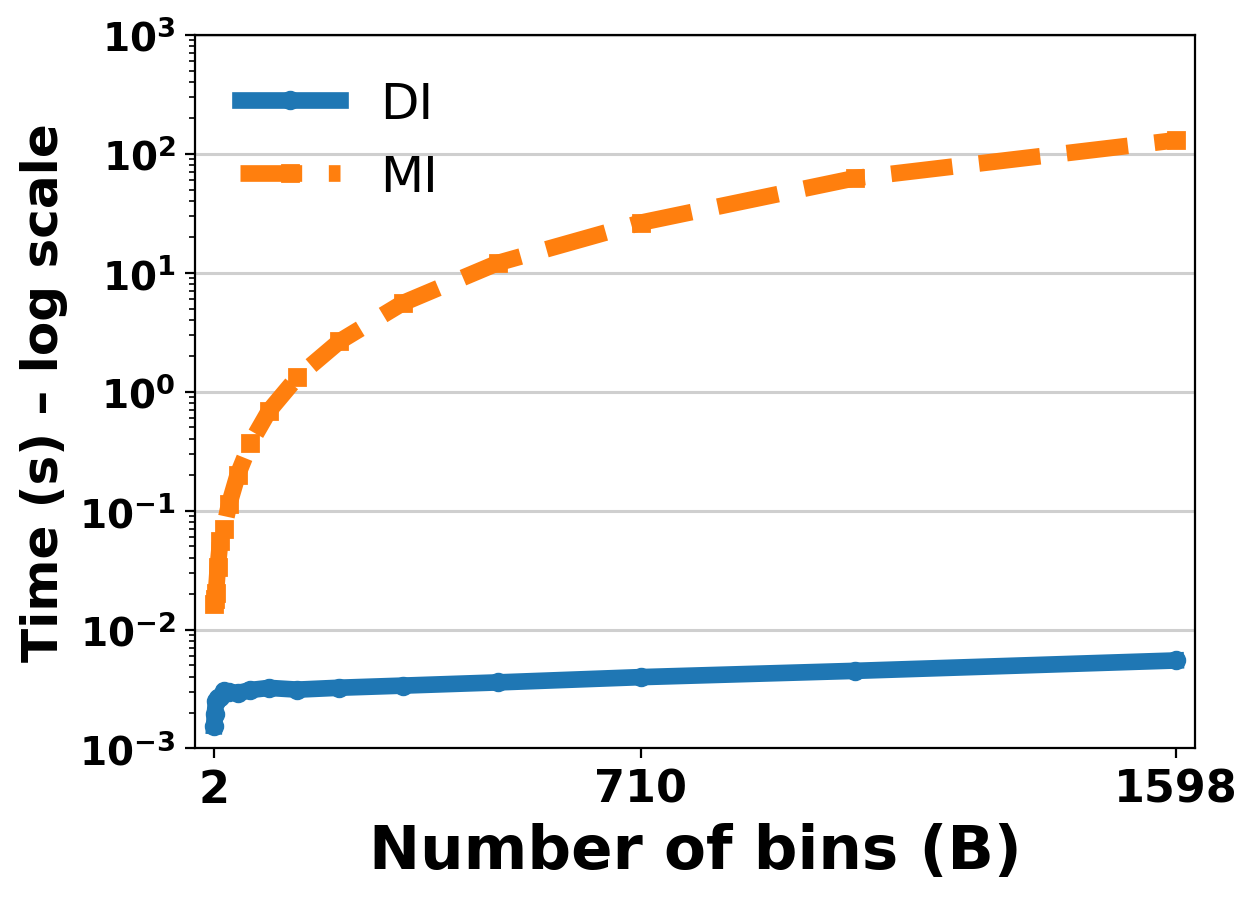}
        \caption{Time (Flights)}
        \label{fig:time-vs-bins-flights}
    \end{subfigure}\hfill
    \begin{subfigure}{0.48\columnwidth}
        \includegraphics[width=\linewidth]{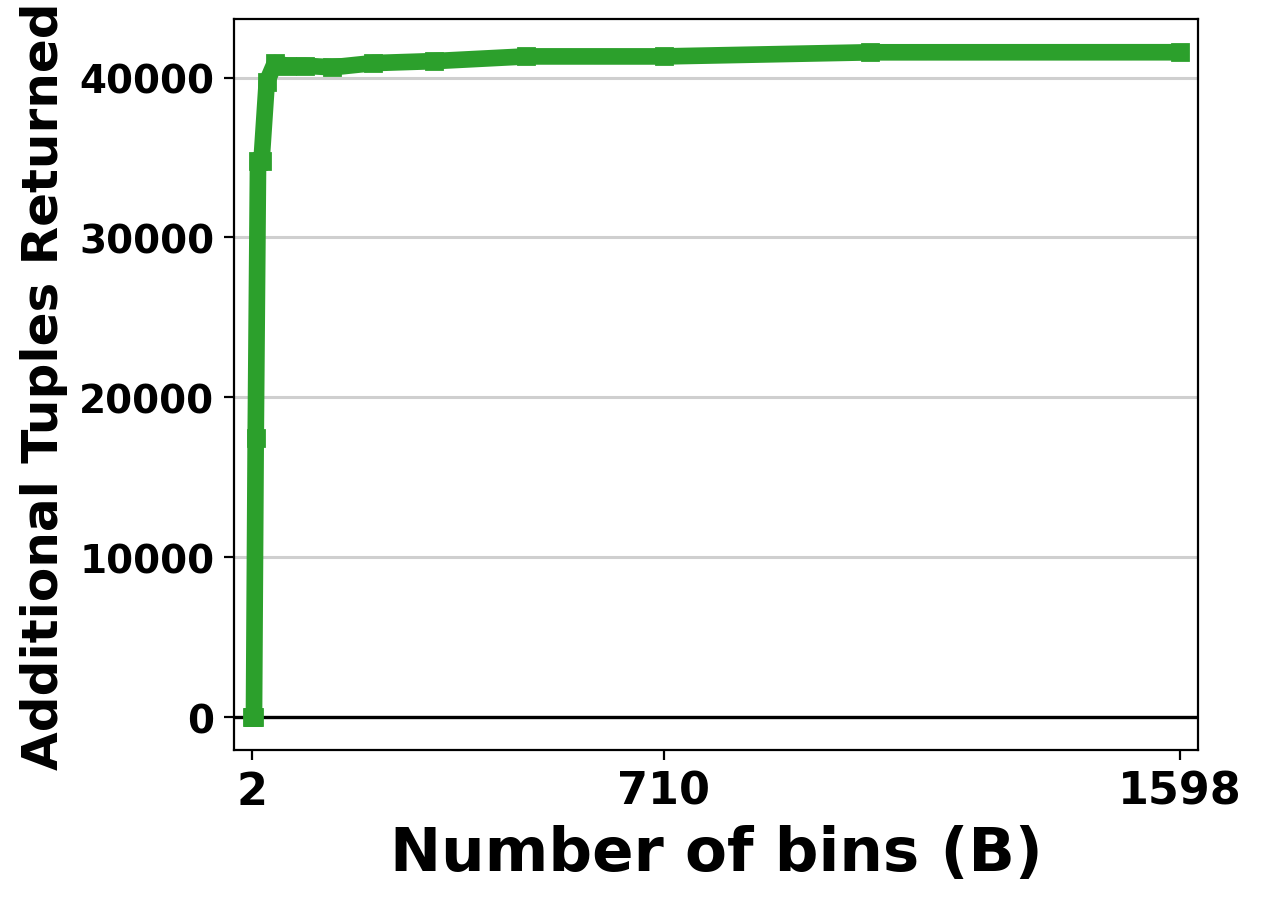}
        \caption{Utility (Flights)}
        \label{fig:utility-vs-bins-flights}
    \end{subfigure}

    \begin{subfigure}{0.48\columnwidth}
        \includegraphics[width=\linewidth]{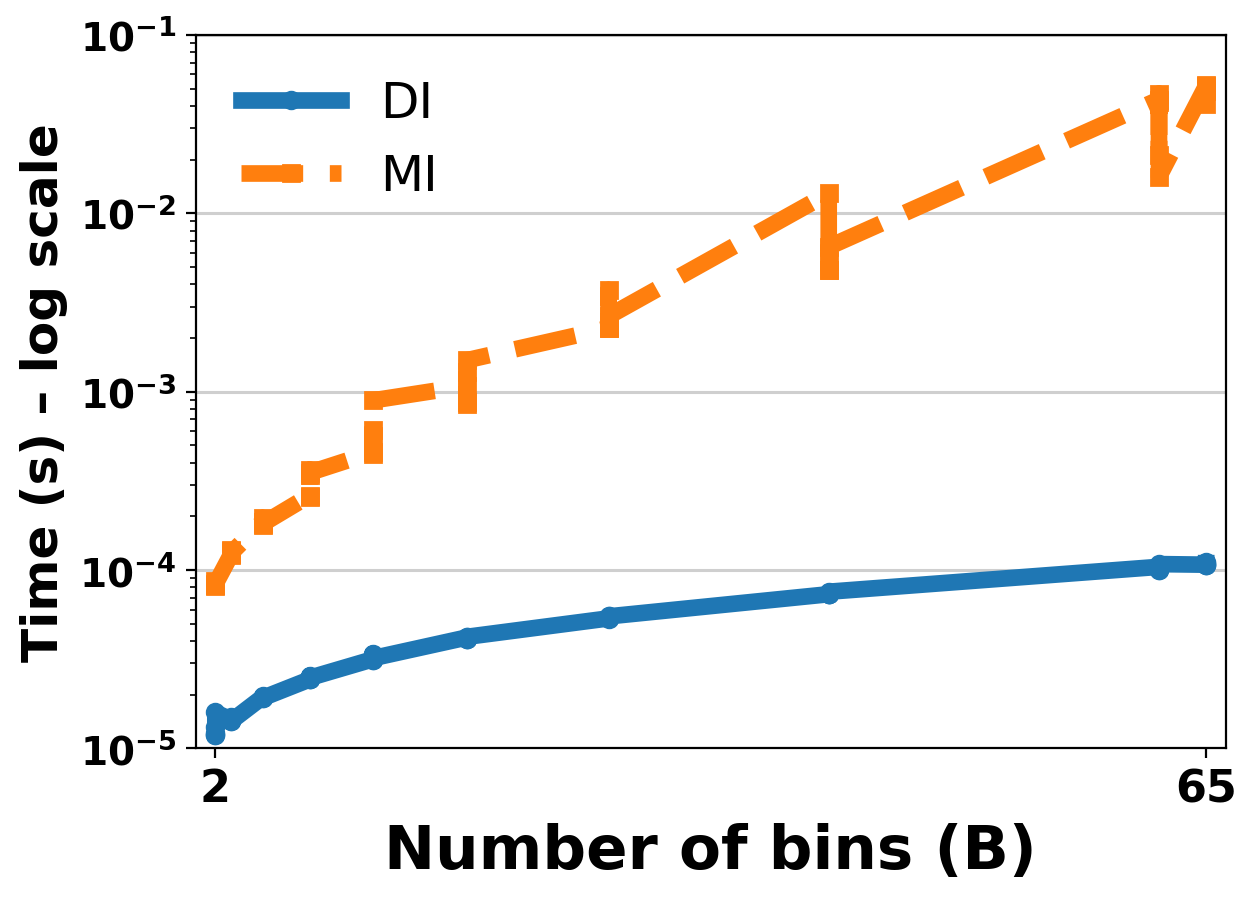}
        \caption{Time (COMPAS)}
        \label{fig:time-vs-bins-compas}
    \end{subfigure}\hfill
    \begin{subfigure}{0.48\columnwidth}
        \includegraphics[width=\linewidth]{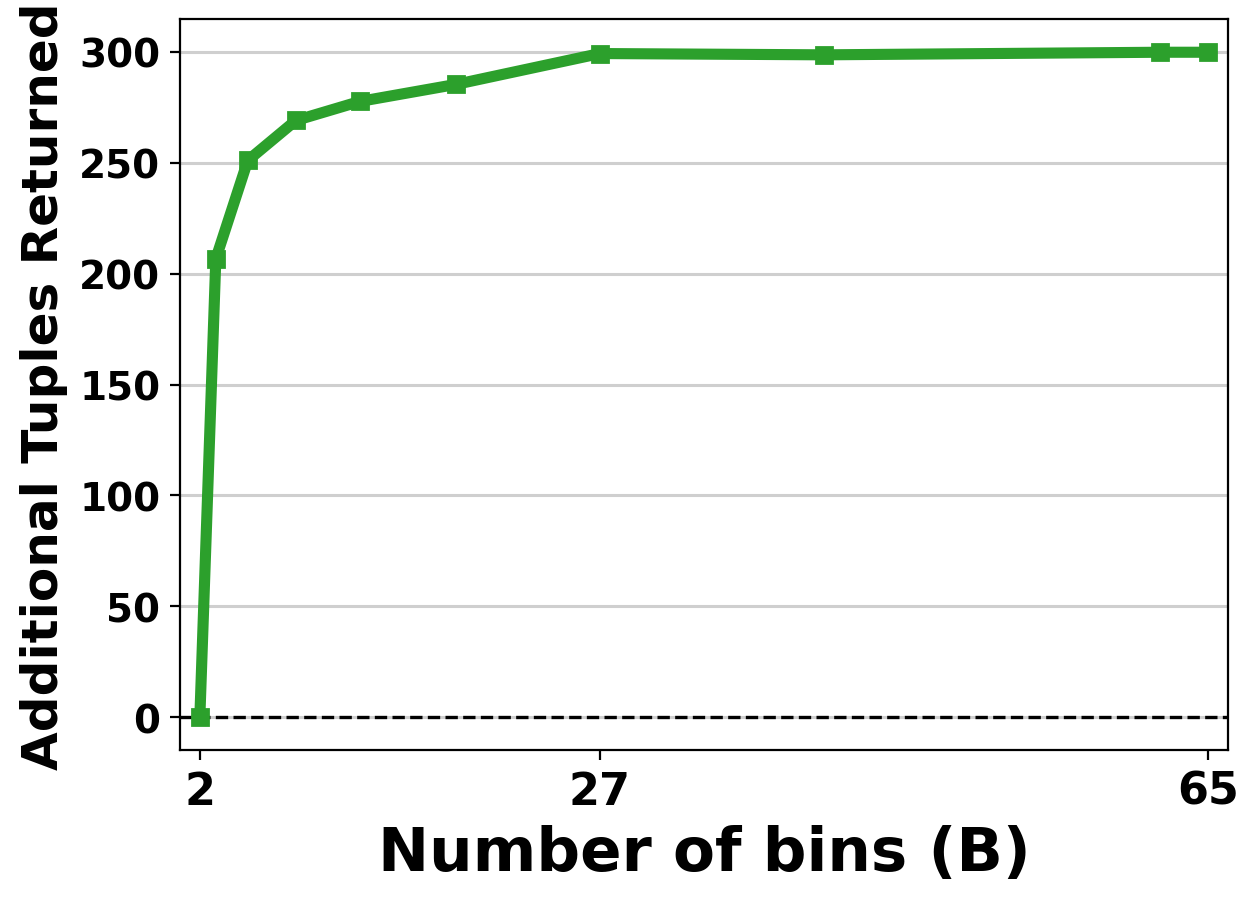}
        \caption{Utility (COMPAS)}
        \label{fig:utility-vs-bins-compas}
    \end{subfigure}

    \caption{Impact of bucketization on running time and user utility across Amazon, PriceRunner, Flights, and COMPAS.}
    \label{fig:bins-analysis}
\end{figure}

\subsection{Experiment Results}
\label{sec:experiment-results}
In this Section, we first present the results of our algorithm in Section~\ref{sec:possible-informative-equilibria}, whose goal is to detect trustworthy answers from data sources' results. Then present results on algorithms in Section~\ref{sec:maximally-influential} for finding influential queries.
\subsubsection{Detecting Trustworthy Answers. }
We present results on the scalability of Algorithm~\ref{alg:efficient-credibilility-filter}. The complexity of the algorithm depends on the number of tuples relevant to the user's intent ($z$) and the number of tuples returned by the data source ($k$). For instance, at the time of this writing, Amazon displays top-16 results per page; we therefore set $k=80$ to simulate a user viewing the first five pages of results. To maintain consistency across datasets of different sizes, we vary $z$ as a percentage of each dataset’s total size. As shown in Figure~\ref{fig:all_dataset_credible}, the runtime typically scales linearly with $z$. However, as discussed in Section~\ref{sec:possible-informative-equilibria}, we observe some sub-linear performance because the algorithm often terminates early.

\subsubsection{Influential Strategies. }  As explained in Section 5, the brute-force approach to find the maximally influential strategy is NP-hard due to the super-exponential complexity of the set of possible equilibria. Therefore, we only evaluate the efficient maximally influential algorithm ( MI; Algorithm~\ref{alg:max-info-merge-dp}) and the algorithm for detecting influential query ( DI; Algorithm~\ref{alg:detect-delta-relations}). 

\noindent
\textbf{Number of Attributes:}
The first set of experiments studies the effect of the size of the possible domain on the running time of the algorithms. To this end, we varied the number of attributes in the query from 1 to 3. The number of attributes (along with their cardinality) determines the search space. Figure~\ref{fig:time_vs_attrs} shows runtime as the number of \texttt{ORDER BY} attributes increases. Both algorithms slow down with larger datasets because the search space grows as the Cartesian product of attribute domains. The Amazon dataset incurs the highest cost owing to its large domain sizes, yet maximally influential (MI) still completes within the 15-minute cap for 3 attributes. The algorithm for finding the maximally influential strategy has a subroutine call Algorithm~\ref{alg:detect-delta-relations} and therefore has a higher complexity than detecting one influential query (DI).

\noindent
\textbf{Impact of Bucketization:}
To study the effect of this bucketization, we focus on the attributes with the largest pre-bucketization domains (Table~\ref{tab:datasets}): the continuous
attribute \texttt{price} for Amazon and Flights, the categorical attribute \texttt{product model} for
PriceRunner, and \texttt{age} for COMPAS. Census, which is a popular dataset used in prior works related to bias~\cite{jinyang2023biasranking,li2023queyrefinment}, did not require bucketization.

We consider two extremes for the data source's bias: (i) \emph{group-level} bias over coarse attribute
bands (e.g., the most expensive bin) and (ii) \emph{point-level} bias concentrated on specific values
(e.g., \$99.99). To interpolate between these extremes while controlling runtime, we use a
binary-tree (dyadic) refinement of the ordered domain. Specifically, we start with $B=2$ equal contiguous
bins and iteratively split each bin, yielding $B\in\{2,4,8,\dots\}$. For each bin level $B$, we run
Algorithm~\ref{alg:detect-delta-relations} and Algorithm~\ref{alg:max-info-merge-dp}, while evaluating the resulting query on the \emph{full} (unbucketized) domain. We enforce a 10-minute time budget per bin level.

Figure~\ref{fig:bins-analysis} summarizes the trade-offs across four datasets. For each dataset, we
report execution time and user utility. The time plots separate the cost of detecting an influential query from the end-to-end cost of computing the maximally
influential query. The utility plots measure the
effectiveness of the final query $q^\star$: we report the gain in user utility relative to the
original query (baseline) as a function of the additional relevant tuples recovered. As expected, the runtime grows
with the number of bins $B$, since it increases both the number of pairwise checks and the DP search space. At the same time, finer binning identifies more credible ranking information,
leading to queries that recover more relevant tuples. Empirically, the achieved utility is a monotone non-decreasing lower bound on the full-domain optimum.





\section{Related Work}
\label{sec:related-work}

\paragraph{ Responsible Data Management. }
As explained in Section 1, our work shares the same goal as efforts in responsible data management and science \cite{10.1145/3488717}. In particular, researchers have proposed several fair ranking schemes \cite{10.1145/3132847.3132938,10.1145/3366424.3380048,tabibian20a,mathioudakis2020affirmative,li2023detectiongroupsbiasedrepresentation}.
Instead of designing a fair ranking scheme, our goal is to investigate whether informative querying is possible in the presence of a conflict of interest and to enable users to extract reliable information from biased data sources.
Also, unlike our setting, these algorithms often require complete information about the underlying dataset.
Researchers have investigated the trade-off between the efforts of agents, e.g., school applicants, and their ranks in a ranked-based reward system, e.g., school admission, and proposed methods to design robust ranking systems ~\cite{liu2022strategicranking}.
However, we focus on the challenges of querying a data source where there are conflicts of interest between the user and the data source. Unlike our settings, the efforts of agents in \cite{liu2022strategicranking} may be highly costly, e.g., time on extracurricular activities.

\paragraph{Strategic Communication. } 
There is substantial research on strategic communication between agents with non-identical interests in economics and game theory \cite{10.2307/1913390,CHAKRABORTY200770,Blume2020EmpiricalCheapTalk,backus2019ebaycheaptalk,li2025ComparativeHousing}.
Our problem and approach have important differences with the settings of strategic communication in these fields.
First, we propose computational methods to find the influential equilibria and strategies for ranking queries.
Second, we address the challenges of querying with conflict of interest over large (structured) data.

\section{Appendix}
This section includes the proofs of our theoretical results.
\label{sec:proofs}
\subsection{Proofs for Section~\ref{sec:efficiently-detecting-uninformative-setting}}
\begin{proof}[\textbf{Proof of Theorem~\ref{theorem:noncred-babbling}}]The proof is simple, but we provide a longer proof to illustrate concepts.

Fix two set-equivalent intents $\tau \neq \tau'$ and two distinct interpretations
$\beta \neq \beta'$. Consider the induced $2\times 2$ game in which the
user can send two queries $q,q'$ and the data source chooses between
$\beta,\beta'$ after observing the query.

\medskip
\noindent
$(\Leftarrow)$ Suppose first that for both $t\in\{r,s\}$,
\[
U^t(\tau,\beta)\ge U^t(\tau,\beta')
\qquad\text{and}\qquad
U^t(\tau',\beta')\ge U^t(\tau',\beta).
\]
Define strategies
\[
P^r:\ \tau\mapsto q,\ \tau'\mapsto q',
\qquad
P^s:\ q\mapsto \beta,\ q'\mapsto \beta'.
\]
Since the two intents send different on-path queries (send with positive probability), Bayes' rule implies
\[
\Pr(\tau\mid q)=1,
\qquad
\Pr(\tau'\mid q')=1.
\]
Hence, after observing $q$, the data source's expected utility from choosing
$\beta$ is $U^s(\tau,\beta)$, while from choosing $\beta'$ it is
$U^s(\tau,\beta')$. By assumption,
\[
U^s(\tau,\beta)\ge U^s(\tau,\beta'),
\]
so $\beta$ is a best response after $q$. Similarly, after observing $q'$,
\[
U^s(\tau',\beta')\ge U^s(\tau',\beta),
\]
so $\beta'$ is a best response after $q'$.

Now consider the user's incentives. If intent $\tau$ follows $P^r$ and sends $q$,
it obtains utility $U^r(\tau,\beta)$. If it deviates and sends $q'$, the data source
responds with $\beta'$, yielding utility $U^r(\tau,\beta')$. Since
\[
U^r(\tau,\beta)\ge U^r(\tau,\beta'),
\]
intent $\tau$ has no profitable deviation (First bullet in Definition~\ref{definition:equilibrium}). Likewise, type $\tau'$ obtains
$U^r(\tau',\beta')$ from sending $q'$ and would obtain $U^r(\tau',\beta)$ by deviating
to $q$, and
\[
U^r(\tau',\beta')\ge U^r(\tau',\beta),
\]
so intent $\tau'$ also has no profitable deviation. Therefore $(P^r,P^s)$ is an
equilibrium, and it is influential since the two intents induce different
interpretations.

If instead for both $t\in\{r,s\}$,
\[
U^t(\tau,\beta')\ge U^t(\tau,\beta)
\qquad\text{and}\qquad
U^t(\tau',\beta)\ge U^t(\tau',\beta'),
\]
the same argument applies after swapping the labels of $\beta$ and $\beta'$.
Hence the interaction is influential.

\medskip
\noindent
$(\Rightarrow)$ Suppose the interaction is influential. Then there exists an
equilibrium in which two set-equivalent intents $\tau\neq\tau'$ induce two distinct
interpretations $\beta\neq\beta'$. Let the corresponding on-path queries be $q$ and
$q'$. Since the equilibrium is influential, the user's on-path behavior separates the
two intents. Without loss of generality, suppose $\tau$ sends $q$ and $\tau'$ sends
$q'$, and the data source responds with $\beta$ after $q$ and with $\beta'$ after
$q'$.

Because these queries are sent on path by distinct intents, Bayes' rule implies
\[
\Pr(\tau\mid q)=1,
\qquad
\Pr(\tau'\mid q')=1.
\]
Sequential rationality (second bullet in Definition~\ref{definition:equilibrium}) of the data source therefore requires
\[
U^s(\tau,\beta)\ge U^s(\tau,\beta')
\qquad\text{and}\qquad
U^s(\tau',\beta')\ge U^s(\tau',\beta),
\]
since $\beta$ must be a best response (optimal) after $q$ and $\beta'$ must be a best response
after $q'$.

Similarly, incentive compatibility for the user implies
\[
U^r(\tau,\beta)\ge U^r(\tau,\beta')
\qquad\text{and}\qquad
U^r(\tau',\beta')\ge U^r(\tau',\beta),
\]
because intent $\tau$ could deviate from $q$ to $q'$ and thereby induce $\beta'$, while
intent $\tau'$ could deviate from $q'$ to $q$ and thereby induce $\beta$.

Thus the first set of inequalities in the statement holds for both
$U^r$ and $U^s$. If the equilibrium instead matches $\tau$ with $\beta'$ and $\tau'$
with $\beta$, then the second set of inequalities holds. Therefore, the interaction is
influential if and only if one of the two stated conditions is satisfied.
\end{proof}

\begin{proof} [\textbf{Proof of Proposition~\ref{prop:symmetry}}]
It follows from Theorem 1 in \cite{CHAKRABORTY200770}.   
\end{proof}

\begin{proof} [\textbf{Proof of Theorem~\ref{prop:convex-saturation-tuples}}]
Fix any query $q$ and any tuple $e\in E$. Let $X_e:=r^\tau_e+b_e$.
Since $r^\tau_e\in[z]$, we have $X_e\in[1+b_e,\ z+b_e]$ almost surely under any belief.
For any interpretation rank $\tilde r^\beta_e\in[z]$, each conditional expected utility term is
\[
\mathbb E\!\left[L\!\Big(r^\tau_e-(\tilde r^\beta_e+b_e)\Big)\ \big|\ q\right]
=
\mathbb E\!\left[L\!\Big(\tilde r^\beta_e-X_e\Big)\ \big|\ q\right].
\]
By the hypothesis, for every $r\in[z]$ and every $\tilde r^\beta_e\in[z]$,
\[
L\!\Big(r-(r^{\beta^{\star}}_e+b_e)\Big)\le L\!\Big(r-(\tilde r^\beta_e+b_e)\Big).
\]
Substituting $r=r^\tau_e$ gives
\[
L(r^{\beta^{\star}}_e-X_e)\le L(\tilde r^\beta_e-X_e)
\]
Taking conditional expectations given $q$ yields
\[
\mathbb E\!\left[L(r^{\beta^{\star}}_e-X_e)\mid q\right]
\le
\mathbb E\!\left[L(\tilde r^\beta_e-X_e)\mid q\right]
\qquad\text{for all }\tilde r^\beta_e\in[z].
\]
Thus, for tuple $e$ the same interpretation rank $r^{\beta^{\star}}_e$ is optimal after every query $q$.
Since $U^s$ is additive across tuples, the data source has an optimal interpretation that assigns each $e$ the same
rank $r^{\beta^{\star}}_e$ independent of $q$. Therefore the data source response cannot depend on the query, and no equilibrium
can be influential.
\end{proof}

\begin{proof} [\textbf{Proof of Corollary~\ref{coro:quad-bias-saturation}}]
Fix any query $q$ and $e\in\tau(I)$. Since $r(e,\tau(I)) \in [k]$ , let $X_e:=\mathrm r(e,\tau(I))+b(e)\in[1+b(e),\,k+b(e)]$.
If $b(e)\ge k-\tfrac32$, then $X_e\ge k-\tfrac12$ , so $k$ is the closest feasible rank to $X_e$ and, since $L$ is
nondecreasing in $|\cdot|$, $r^\beta_e=k$ is optimal after every $q$.
If $b(e)\le \tfrac32-k$, then $X_e\le \tfrac32$, so $1$ is closest and $r^\beta_e=1$ is optimal after every $q$.
Thus each tuple has a query-independent optimal interpretation rank; by additivity, the entire interpretation is independent of $q$
.Hence no equilibrium can be influential.
\end{proof}

\subsection{Proofs for Section~\ref{sec:detecting-credibility}}
\begin{proof}[\textbf{Proof of Proposition~\ref{proposition:affine_boundary}}]
The proof is by algebraic simplification.

For (i), write $r_e:=\mathrm r(e,\tau)$, $\beta_e:=\mathrm r(e,\beta)$, $\beta'_e:=\mathrm r(e,\beta')$. Then
\[
u^r(r_e,\beta_e)-u^r(r_e,\beta'_e)
=-(r_e-\beta_e)^2+(r_e-\beta'_e)^2
=2r_e(\beta_e-\beta'_e)-(\beta_e^2-\beta_e'^2).
\]
Summing over $e$ gives
\[
U^r(\tau,\beta)-U^r(\tau,\beta')=\sum_{e\in E}\gamma_e r_e-\sum_{e\in E}(\beta_e^2-\beta_e'^2),
\]
so indifference is equivalent to $\sum_{e\in E}\gamma_e\,\mathrm r(e,\tau)=\alpha^r$.

For (ii),
\[
u^s_e(r_e,\beta_e)-u^s_e(r_e,\beta'_e)
=\gamma_e r_e-\Big((\beta_e+b_e)^2-(\beta'_e+b_e)^2\Big).
\]
Summing and taking expectation under $\Phi$ yields
\[
\mathbb E_{\tau\sim\Phi}[U^s(\tau,\beta)-U^s(\tau,\beta')]
=\sum_{e\in E}\gamma_e\,\Phi_e-\alpha^s,
\]
where $\alpha^s=\sum_e\big((\beta_e+b_e)^2-(\beta'_e+b_e)^2\big)=\alpha^r+2\sum_e b_e(\beta_e-\beta'_e)$.
Thus the data source is indifferent iff $\sum_{e\in E}\gamma_e\,\Phi_e=\alpha^s$.
\end{proof}

\begin{proof}[\textbf{Proof of Lemma~\ref{lemma:two_tuple_shift}}]
Under the corollary’s assumptions, $\gamma_{\bar e}=0$ for $\bar e\notin\{e,e'\}$ and
$\gamma_{e'}=-\gamma_e\neq 0$, so the user indifference condition in Proposition~\ref{proposition:affine_boundary} reduces to
$\mathrm r(e,\tau)-\mathrm r(e',\tau)=\frac{\alpha^{r}}{\gamma_e}$.
Moreover, only $e,e'$ contribute to the intercept, and using
$\mathrm r(e',\beta)-\mathrm r(e',\beta')=-(\mathrm r(e,\beta)-\mathrm r(e,\beta'))$ gives
\[
\alpha^{s}
=\alpha^{r}+2\big(b(e)-b(e')\big)\big(\mathrm r(e,\beta)-\mathrm r(e,\beta')\big)
=\alpha^{r}+\gamma_e\big(b(e)-b(e')\big).
\]
Thus $\alpha^{r}=\alpha^{s}-\gamma_e(b(e)-b(e'))$, and substituting yields
\[
\mathrm r(e,\tau)-\mathrm r(e',\tau)
=\frac{\alpha^{s}}{\gamma_e}-(b(e)-b(e')).
\]
This establishes the claimed threshold form \eqref{equation:indifference-threshold}.
\end{proof}

\begin{proof} [\textbf{Proof of Proposition~\ref{prop:intents_change_interpretation}}]
For an intent $\tau$, define $d(\tau)=\mathrm r(e,\tau(I))-\mathrm r(e',\tau(I))$.
By Lemma~\ref{lemma:two_tuple_shift}, the user's indifference between $\beta$ and $\beta'$ is determined solely by whether $d(\tau)$ is above or below the threshold. The unbiased case uses threshold $\delta_0$ and the biased case uses threshold $\delta$.
If $d(\tau)\le \min\{\delta_0,\delta\}$, then $d(\tau)\le \delta_0$ and $d(\tau)\le \delta$, so $\tau$ is on the same (weak) side of both boundaries and induces the same interpretation in both bias and unbiased cases. If $d(\tau)>\max\{\delta_0,\delta\}$, then $d(\tau)>\delta_0$ and $d(\tau)>\delta$, so $\tau$ is again on the same side of both boundaries and induces the same interpretation in both cases. Therefore, the induced interpretation can change only when the relative position of the tuples lies between the two thresholds.
\end{proof}

\begin{proof}[\textbf{Proof of Theorem~\ref{theorem:closed-form-trustworthy-information}}]
Let $r=\mathrm r(e,\tau(I))$ and $r'=\mathrm r(e',\tau)$, where smaller rank is better and a fixed $\delta\in\{1,\dots,z-1\}$. Each $\delta$ partitions the space of all possible ranks of two tuples $(r,r')\in[z]^2$ into two sets $\mathcal H$ and $\mathcal H'$. Intuitively, $\mathcal H$ corresponds to rank pairs where the difference $r' - r$ is at least $\delta$, and $\mathcal H'$ is its complement. 
We consider the $\delta$-threshold partition that favors $e$:
\[
\mathcal{H}(\delta) = \{(r,r') \in [z]^2 : r'-r \ge \delta\}, \qquad 
\mathcal{H}'(\delta) = [z]^2 \setminus \mathcal{H}(\delta).
\]
Thus, on $\mathcal H(\delta)$, tuple $e$ outranks $e'$ by at least $\delta$ intent ranks.

Conditioning the (uniform) prior on $\mathcal H$ and $\mathcal H'$ yields posterior mean ranks. Let $\Phi_e(\mathcal{H})=\mathbb E[r\mid (r,r')\in \mathcal{H}]$ and
$\Phi_{e'}(\mathcal{H})=\mathbb E[r'\mid (r,r')\in \mathcal{H}]$.
A direct count over the discrete triangle $\{r'-r\ge\delta\}$ gives
\[
\Phi_e(\mathcal{H})=\frac{z-\delta+2}{3},\qquad 
\Phi_{e'}(\mathcal{H})=\frac{2z+\delta+1}{3}.
\]

For the complement means $\Phi_e(\mathcal{H}')=\mathbb E[r\mid (r,r')\in \mathcal{H}']$ and
$\Phi_{e'}(\mathcal{H}')=\mathbb E[r'\mid (r,r')\in \mathcal{H}']$, using total sums on $[z]^2$ and simplifying yields
\[
\Phi_e(\mathcal{H}')=\frac{\delta^3 - 3(z+1)\delta^2 + (3z^2+6z+2)\delta + 2z(z^2-1)}{d(z,\delta)},
\]
\[
\Phi_{e'}(\mathcal{H}')=\frac{-\delta^3 + 3\delta z^2 + (3z^2+1)\delta + z(z^2-1)}{d(z,\delta)}.
\]
In particular, the difference across the complement region simplifies exactly to:
\begin{equation}
\label{complement-difference}
\Phi_e(\mathcal{H}')-\Phi_{e'}(\mathcal{H}')=s(z,\delta).
\end{equation}

Assume an unconstrained, real-valued data source interpretation $\mathrm r(\cdot,\beta)\in\mathbb R$ (no rounding/clipping) for simplicity, with quadratic utility
$u^s_x(r^\tau_x,r^\beta_x)=-(r^\tau_x-(r^\beta_x+b_x))^2$.
The unique optimal posterior interpretation after observing $X \in \{\mathcal{H}, \mathcal{H}'\}$ is:
\begin{equation}
\mathrm r(x,\beta_X)=\Phi_x(X)-b_x,\qquad x\in\{e,e'\}.
\label{eq:best_response}
\end{equation}
Although these scores are real-valued, the interpretation result $\beta(I)$ induces a discrete ordering by sorting tuples by $\mathrm r(\cdot,\beta_X)$ (and returning the top-$k$). Hence, on $\mathcal H'$, tuple $e$ is ranked ahead of $e'$ iff
$\mathrm r(e,\beta_{\mathcal{H}'}) < \mathrm r(e',\beta_{\mathcal{H}'})$.

Under $u^r(r^\tau_x,r^\beta_x)=-(r^\tau_x-r^\beta_x)^2$, user indifference between the two posterior interpretations
$\beta_{\mathcal{H}}$ and $\beta_{\mathcal{H}'}$ yields an affine boundary in $(r,r')$, which can be written as a threshold on $(r'-r)$.
In a threshold equilibrium over the discrete integer grid $[z]^2$, any real indifference threshold $t \in \mathbb{R}$ induces exactly the same partition $\mathcal{H}(\delta)$ whenever $t\in(\delta-1,\delta]$.
Evaluating the user's indifference condition yields
\[
t=\frac{(\mathrm r(e',\beta_{\mathcal{H}})+\mathrm r(e',\beta_{\mathcal{H}'}))-(\mathrm r(e,\beta_{\mathcal{H}})+\mathrm r(e,\beta_{\mathcal{H}'}))}{2}
\]
\[
=\delta-g(z,\delta)+(b(e)-b({e'})),
\]
where substituting the posterior means and the optimal data source's posterior interpretation (Equation~\ref{eq:best_response}) gives $g(z,\delta)$ as stated.
Hence, the required separation condition $t\in(\delta-1,\delta]$ holds if and only if
\begin{equation}
b(e)-b({e'}) \in \big(g(z,\delta)-1,\ g(z,\delta)\big].
\label{eq:consistency}
\end{equation}

If $\delta > 1$, the complement region $\mathcal{H}'(\delta)$ contains the nonempty strip
$
\mathcal S(\delta)=\{(r,r')\in[z]^2:\ 1\le r'-r\le \delta-1\}\subseteq \mathcal{H}'(\delta),$
on which $r<r'$ (i.e., $e$ truly outranks $e'$) and also includes all intents with $r>r'$ (where $e'$ truly outranks $e$).
Tuple $e$ is non-credible witnessed by $e'$ if, under the posterior interpretation induced by $\mathcal{H}'(\delta)$, the data source ranks $e$ ahead of $e'$:
\[
\mathrm r(e,\beta_{\mathcal{H}'})<\mathrm r(e',\beta_{\mathcal{H}'})
\iff
b(e)-b({e'})>\Phi_e(\mathcal{H}')-\Phi_{e'}(\mathcal{H}')
\]
Substituting the difference across the complement region from Equation~\eqref{complement-difference} into this inequality yields: $b(e)-b({e'})>s(z,\delta).$
Intersecting this with Equation~\ref{eq:consistency} gives the bias interval:$$
b(e)-b({e'}) \in \Big(\max\{g(z,\delta)-1,\ s(z,\delta)\},\ g(z,\delta)\Big]
$$

\end{proof}

\subsection{Proofs for Section 5.2}

\begin{proof} [\textbf{Proof of Theorem~\ref{theorem:equilibrium}
}]
The proof is by construction, extending the results of Proposition~\ref{prop:one_delta_influential}.
Fix an ordered pair $(e,e')$ with $\delta:=\delta^\star(e,e')\in\{1,\dots,z-1\}$ and consider the two
regions of the space of possible rank differences.
\[
\mathcal H \;=\;\{(r,r')\in[z]^2:\ r'-r\ge \delta\},
\qquad
\mathcal H' \;=\; [z]^2\setminus \mathcal H
\ \ .
\]
By Proposition~\ref{prop:one_delta_influential}, there exists an influential equilibrium supported on
the two queries that separate $\mathcal H$ from $\mathcal H'$. Under our query language, the user must submit a single ranked list. Since both $\mathcal H$ and
$\mathcal H'$ are nonempty, we can select two executable queries $q_{\mathcal H}$ and $q_{\mathcal H'}$
whose induced rankings are consistent with $\mathcal H$ and $\mathcal H'$, respectively. We define the
data source's on-path responses to $q_{\mathcal H}$ and $q_{\mathcal H'}$ to match the equilibrium
responses guaranteed by Proposition~\ref{prop:one_delta_influential}. This preserves the best responses
and incentive constraints (two bullets in Definition~\ref{definition:equilibrium}), so it yields an influential equilibrium under executable queries. Finally, allowing the user to submit any super-rank of $q^{\mathrm{base}}$ does not eliminate this
equilibrium. We construct equilibrium strategies, the user startegy can place all probability on the two on-path queries $q_{\mathcal H}$ and
$q_{\mathcal H'}$, while any other super-rank queries are assigned probability $0$ and the data
source's interpretation to them is specified off-path. Hence, an influential equilibrium exists under a
super-rank strategy.
\end{proof}

\begin{proof} [\textbf{Proof of Proposition~\ref{prop:enumerate-complexity}}]
Let $m$ be the number of tuples in $q^{\mathrm{base}}$ and let $r=|R(I)|$ be the number of tuples that may be appended. The algorithm outputs exactly $2^{\,m-1}\,a(r)$ super-ranks, where $a(r)=\sum_{k=1}^r k!\,S(r,k)$ is the ordered Bell (Fubini) number. Consequently the running time is $\Theta\!\big(2^{\,m-1}\,a(r)\,(m+r)\big)$ and is \emph{super-exponential} in $r$ (since $a(r)\ge r!$).
We analyze the two phases of the algorithm separately and then combine them.

(\emph{Merge-only count.}) The tuples of $q^{\mathrm{base}}$ form a total preorder with $m-1$ adjacent boundaries. Any merge-only coarsening corresponds to choosing which boundaries to delete; there are $2^{m-1}$ choices, each yielding a unique coarsening.

(\emph{Append count.}) Appending $r$ labeled items as a ranked/tied suffix is in bijection with weak orders (total preorders) on an $r$-set. The number of such weak orders is
$a(r)=\sum_{k=1}^r k!\,S(r,k)$, where $S(r,k)$ counts partitions of the $r$ items into $k$ unlabeled blocks and the factor $k!$ orders these blocks into ranks.~\cite{stirlingnumbers}

(\emph{Product.}) The suffix is independent of the chosen merge-only coarsening, so the total number of super-ranks is the product $2^{m-1}\,a(r)$.

(\emph{Running time.}) The algorithm generates each coarsening once (BFS over adjacent merges), and for each coarsening enumerates all $a(r)$ weak orders of the $r$ appendable items. Materializing one output takes $O(m+r)$ time (copying/coalescing $m$ tuples and writing the $r$-item suffix). Hence the time is $\Theta(2^{m-1}\,a(r)\,(m+r))$.

(\emph{Super-exponential growth in $r$.}) Since $S(r,r)=1$, the $k=r$ term gives $a(r)\ge r!$, while $a(r)\le \sum_{k=1}^r k^r \le r^{r+1}$. Thus $a(r)$ (and therefore the total) grows faster than any $c^r$ for fixed $c>1$, i.e., super-exponentially in $r$.
\end{proof}

\begin{proof}[\textbf{Proof of Theorem~\ref{thm:superrank-nphard}}]
We show NP-hardness by a Karp reduction from \textsc{Subset-Sum}.

\paragraph{\textsc{Subset-Sum}.}
Given positive integers $w_1,\dots,w_n$ and a target $W$, decide whether there exists
$S\subseteq\{1,\dots,n\}$ such that $\sum_{j\in S} w_j = W$.

\paragraph{Decision variant (\textsc{MaxInf}).}
Given a base ranking $q^{\min}=e_1\prec e_2\prec\cdots\prec e_m$, utilities $U^s,U^r$ (specified below),
and a threshold $T\in\mathbb Z$, decide whether there exists a \emph{merge-only} super-rank query
$q\in \mathcal Q_{\mathrm{merge}}^{*}(q^{\min})$ such that
$U^r\!\bigl(q,\beta^{*}(q)\bigr)\ge T$.
Here $\mathcal Q_{\mathrm{merge}}^{*}(q^{\min})$ denotes the set of super-rank queries obtained from $q^{\min}$
by merging any subset of adjacent boundaries (i.e., partitioning $e_1,\dots,e_m$ into contiguous tie-blocks),
and $\beta^*(q)=\arg\max_{\beta} U^s(q,\beta)$.

Since $\mathcal Q_{\mathrm{merge}}^{*}(q^{\min})\subseteq \mathcal Q_{\mathrm{super}}(q^{\min})$,
NP-hardness for \textsc{MaxInf} over $\mathcal Q_{\mathrm{merge}}^{*}$ implies NP-hardness for the full
super-rank problem.

\paragraph{Reduction.}
Given an instance $(w_1,\dots,w_n,W)$, set $m:=n+1$ and construct
\[
q^{\min}=e_1\prec e_2\prec\cdots\prec e_m.
\]
Any $q\in\mathcal Q_{\mathrm{merge}}^{*}(q^{\min})$ is uniquely encoded by a binary vector
$b=(b_1,\dots,b_n)\in\{0,1\}^n$, where $b_j=0$ means the boundary between ranks $j$ and $j{+}1$ is merged (tied),
and $b_j=1$ means it is kept as a cut.

Fix an arbitrary interpretation $\beta^\dagger$ (e.g., a canonical interpretation).
Define the data source utility to make $\beta^\dagger$ the unique best response:
\[
U^s(q,\beta):=
\begin{cases}
1 & \text{if }\beta=\beta^\dagger,\\
0 & \text{otherwise}.
\end{cases}
\]
Hence $\beta^*(q)=\beta^\dagger$ for all $q$.

Define the user utility (computable in polynomial time from $b$) by
\[
U^r(q,\beta^\dagger)\;:=\;-\left|\sum_{j:\,b_j=0} w_j \;-\; W\right|.
\]
Finally set the threshold $T:=0$.

\paragraph{Correctness.}
For any merge-only super-rank query $q$ encoded by $b$,
\[
U^r\!\bigl(q,\beta^*(q)\bigr)\ge 0
\iff
-\left|\sum_{j:\,b_j=0} w_j - W\right|\ge 0
\iff
\sum_{j:\,b_j=0} w_j = W.
\]
Thus, letting $S=\{\,j : b_j=0\,\}$, there exists $q$ with $U^r(q,\beta^*(q))\ge 0$
if and only if the \textsc{Subset-Sum} instance has a solution.

\paragraph{Complexity.}
The construction of $q^{\min}$ and the explicit utilities takes $\mathrm{poly}(n)$ time.
Moreover, given $b$, evaluating $U^r$ requires computing $\sum_{j:b_j=0}w_j$, which is $\mathrm{poly}(n)$.
Therefore \textsc{Subset-Sum} reduces to \textsc{MaxInf} in polynomial time, proving NP-hardness.
\end{proof}

\begin{proof}[\textbf{Proof of Lemma~\ref{lem:run-scores}}]
Fix $q'\in\mathcal Q_{\mathrm{merge}}(q)$ and write $\Pi(q')=\{[i_1,j_1],\dots,[i_t,j_t]\}$ for its
merged-interval partition. By additivity of $U^r$, the expected utility decomposes as a sum over
the sets of tuples whose ranks lie in each interval $[i_\ell,j_\ell]$. Under the i.i.d.\ prior,
tuples within the same merged interval are conditionally exchangeable, so each contributes the same
expected per-tuple utility $\overline u_{q'}(i_\ell,j_\ell)$, yielding a contribution
$n_{i_\ell:j_\ell}\overline u_{q'}(i_\ell,j_\ell)$ from that interval.
Finally, merging intervals disjoint from $[i,j]$ only changes ties \emph{within} those disjoint
intervals and imposes no additional ordinal constraints involving tuples whose original ranks lie in
$[i,j]$. Hence the posterior (and the data source’s tuplewise posterior interpretaion) for tuples in $[i,j]$ is
the same under $q'$ as under the single-interval merge query $q_{i:j}$. Therefore
$\overline u_{q'}(i,j)=u_{ij}$ for every $[i,j]\in\Pi(q')$, and summing over the intervals
gives the stated form.
\end{proof}

\begin{proof} [\textbf{Proof of Proposition~\ref{thm:dp-recurrence}}]
Any optimal solution on the prefix $[1..j]$ ends with some last run $[i..j]$. By Lemma~\ref{lem:run-scores}, its contribution is $Score(i,j)$, and the best possible contribution before it is $\mathrm{OPT}(i-1)$. Maximizing over the choice of $i$ yields the recurrence. Conversely, picking the maximum $i$ for each $j$ constructs an optimal partition.
\end{proof}

\bibliographystyle{ACM-Reference-Format}
\bibliography{sample}

\end{document}